\theoremstyle{plain}
\newtheorem{thm}{Theorem}[section]
\newtheorem{cor}[thm]{Corollary}
\newtheorem{lem}[thm]{Lemma}
\theoremstyle{definition}
\newtheorem{defn}[thm]{Definition}
\newtheorem{exmp}[thm]{Example}
\newtheorem{rem}[thm]{Remark}
\preprint{MPP-2020-139, PCFT-20-26, USTC-ICTS-20-26}
\title{\boldmath IBP reduction coefficients made simple}
\author[a]{Janko Boehm}
\author[a]{Marcel Wittmann}
\author[b,c]{Zihao Wu}
\author[d]{Yingxuan Xu}
\author[b,c,e]{Yang Zhang}
\affiliation[a]{Department of Mathematics, Technische Universit\"at Kaiserslautern, 67663 Kaiserslautern, Germany}
\affiliation[b]{Peng Huanwu Center for Fundamental Theory, Hefei, Anhui 230026, China}
\affiliation[c]{Interdisciplinary Center for Theoretical Study, University of Science and Technology of China, Hefei, Anhui 230026, China}
\affiliation[d]{ETH Z\"urich, Institut f\"ur Theoretische Physik,
Wolfgang-Pauli-Str. 27, 8093 Z\"urich, Switzerland}
\affiliation[e]{Max-Planck-Institut f\"ur Physik,
  Werner-Heisenberg-Institut, D-80805, M\"unchen, Germany}
\emailAdd{boehm@mathematik.uni-kl.de}
\emailAdd{mwittman@rhrk.uni-kl.de}
\emailAdd{wuzihao@mail.ustc.edu.cn}
\emailAdd{yingxu@student.ethz.ch}
\emailAdd{yzhphy@ustc.edu.cn}
\abstract{We present an efficient method to shorten the
analytic integration-by-parts (IBP) reduction coefficients of multi-loop Feynman integrals. For our approach, we develop an
improved version of Leinartas' multivariate partial fraction
algorithm, and provide a modern implementation based on the computer
algebra system \texttt{Singular}. Furthermore, We observe
that for an integral basis with uniform transcendental (UT) weights,
the denominators of IBP reduction coefficients with respect to the UT
basis are either symbol letters or polynomials purely in the spacetime
dimension $D$. With a UT basis, the partial fraction algorithm is  more efficient both with respect to its performance and the size reduction. We show that in complicated examples with
existence of a UT basis, the IBP reduction coefficients size can be
reduced by a factor of as large as $\sim 100$. We observe that our algorithm also works well for settings without a UT basis.}
\begin{document} 

\maketitle
\flushbottom

\section{Introduction}
\label{sec:intro}
With the end of Large Hadron Collider (LHC) run-II and the upgrade to
HL-LHC \cite{ApollinariG.:2017ojx, Abada:2019ono}, there is an eager demand for high-precision physics
computations. The computation of integration-by-parts
(IBP) identities~\cite{Tkachov:1981wb,Chetyrkin:1981qh}, which can be used to reduce a large number of Feynman integrals to a small set of master integrals, is a critical and often bottleneck step for the evaluation of multi-loop scattering amplitudes
in precision physics.  

There are many publicly available IBP reduction
programs, like {\sc AIR}, {\sc FIRE}, {\sc Kira}, {\sc Reduze} and {\sc LiteRed}~ \cite{Anastasiou:2004vj,Smirnov:2008iw,Smirnov:2013dia,Smirnov:2014hma,Smirnov:2019qkx,
  Maierhoefer:2017hyi,Maierhofer:2018gpa,Maierhofer:2019goc,Studerus:2009ye,vonManteuffel:2012np,Lee:2013mka,Klappert:2020nbg}, based on
the Laporta algorithm~\cite{Laporta:2001dd} and the algebra structures
of IBP relations~\cite{Smirnov:2006wh,Smirnov:2006tz,Lee:2008tj}. In recent years, many new ideas and programs have appeared
for use in the computation of complicated multi-loop IBP reductions,
for example, syzygy
approach~\cite{Gluza:2010ws,Schabinger:2011dz,Ita:2015tya,Larsen:2015ped,Boehm:2017wjc,vonManteuffel:2020vjv},
finite-field
interpolation~\cite{vonManteuffel:2014ixa,Peraro:2016wsq,Klappert:2019emp,Klappert:2020aqs,Peraro:2019svx}, module intersection~\cite{Boehm:2018fpv,Bendle:2019csk}, intersection theory~\cite{Mastrolia:2018uzb,Frellesvig:2019uqt,Frellesvig:2019kgj,Frellesvig:2020qot}, $\eta$ expansion~\cite{Liu:2017jxz,Liu:2018dmc,Guan:2019bcx,Zhang:2018mlo,Wang:2019mnn}   and direct solution of IBP recursive relations~\cite{Kosower:2018obg}. 

Besides the development of the computational techniques for IBP
reductions, there is another problem which was less addressed in the
literature. 
Frequently, after an analytic IBP reduction of complicated multi-loop Feynman
integrals, we obtain reduction coefficients with a huge size, as rational functions of the
spacetime parameter $D$ and kinematic variables. The huge
coefficients are difficult to store, to transfer, to use for analytic scattering amplitude
computations, and also very cumbersome for numerical evaluations. 
Thus, an important question arises:
\begin{quotation}
  {\it How do we 
  simplify the analytic IBP reduction coefficients in practice?}
\end{quotation}

One natural idea to make analytic IBP reduction coefficients shorter,
is to choose a ``good''  master integral basis. Early attempts were
made to test different integral orderings in the Laporta algorithm, in
order to get shorter reduction coefficients. However, it is difficult
to dramatically shorten IBP reduction coefficients by simply changing
the integral ordering. Recently, new methods were
presented~\cite{Smirnov:2020quc, Usovitsch:2020jrk} to find a good
master integral basis such that the dimensional parameter $D$
factorizes out in the final IBP reduction coefficients and makes the
reduction much easier.
In ref.~ \cite{Bendle:2019csk}, the master integral basis with uniform
transcendental (UT) weights ~\cite{Henn:2013pwa,Henn:2014qga} was suggested to shorten the size of IBP
reduction coefficients.

In this paper, we propose a powerful method to reduce the byte size of
the analytic IBP reduction coefficients, which is based
on our modern version of Leinartas' multivariate partial
fraction algorithm~\cite{leinartas1978factorization,
  raichev2012leinartas}.  Leinartas' algorithm has been used for
solving basis transformation matrix in Meyer's UT determination
algorithm \cite{Meyer:2017joq}, and for the reconstruction and
simplification of the planar two-loop five-parton pentagon function
coefficients \cite{Abreu:2019odu}. We
develop an improved version of Leinartas' algorithm and implement it in a library for the open source computer algebra system {\sc Singular} \cite{DGPS}. From the examples
we have tested, this method can rewrite a huge rational function in 
IBP reduction coefficients as a much shorter sum of simpler rational
functions. 

The improvements to Leinartas' algorithm include an additional decomposition
step between the first step (Nullstellensatz decomposition) and second step (algebraic dependence decomposition) of the algorithm which reduces the size of the denominators (and numerators) by doing a (multivariate) division
with remainder by the denominator factors. Moreover, in addition to Leinartas' original algorithm, we add a third step in the algorithm, which implements a numerator decomposition as suggested in \cite{Meyer:2017joq} and uses a syzygy computation to reduce the size of the decomposition expression. 
In particular in the case of examples arising from IBP reductions, due to the additional decomposition step and by reducing the size of the algebraic relations used, we were able to drastically reduce  the runtime of the second step of Leinartas' algorithm, which relies on algebraic relations between the denominator factors.
For this we make use of \textsc{Singular}'s efficient algorithms for calculating Gr\"obner
bases, syzygy modules and polynomial factorizations. We provide a detailed description of
the algorithm in pseudocode.

As an algorithm based on partial fractioning, the size reduction
ratio and the running time depend on the degree of irreducible
denominators. We combine our partial fractioning approach with the
strategy of choosing a ``good'' master integral basis. In particular, as
mentioned in ref.~\cite{Bendle:2019csk}, we suggest that when
a UT master integral basis for the integral family under consideration
exists, it is
advantageous to first reduce Feynman integrals to the UT basis, and
then run our partial fraction algorithm to shorten the size of the IBP
coefficients. The reason is that, in the examples we have tested, for
Feynman integrals 
\begin{gather}
G[\alpha_1,\ldots,\alpha_j]=\int\prod_{j=1}^L \frac{d^D l_j}{i
  \pi^{D/2}}\frac{1}{\prod_{i=1}^n D_i^{\alpha_i}},\quad \alpha_i \in
\mathbb Z
\end{gather}
with each $D_i$ defined as a square of a $\mathbb Z$-linear combination of loop
and external momenta minus the mass term, IBP reduction coefficients with respect to a UT basis have the following good properties:
\begin{itemize}
\item The spacetime dimension parameter $D$ factorizes out in the denominator
  of the reduction coefficients.
\item Except the 
  factors purely in $D$'s, the other factors in the IBP reduction
  coefficients' denominators, are (a subset of) the symbol letters.
\end{itemize}
Therefore, using a UT basis, we usually get much simpler irreducible
factors in the denominators of IBP reduction coefficients. This property
makes the partial fractioning much faster and the result usually shorter
than that from the usual master integral choice. 

We tested various IBP reduction coefficients from simple diagrams to
complicated frontier diagrams. In some complicated IBP reduction coefficients examples, we observe that
our partial fractioning algorithm, combined with the UT basis choice, dramatically shortens the coefficient size by a factor of as
large as $~100$.  In the Appendix \ref{An explicit example of the size reduction}, we explicitly list
an example of one coefficient, before and
after the partial fraction decomposition to provide an impression of this dramatic reduction of size.

We distribute the {\sc Singular} code of our partial fraction
implementation as an open source {\sc Singular} library for download:
\begin{quotation}
\centerline{\url{https://github.com/Singular/Singular/tree/spielwiese/Singular/LIB/pfd.lib}}
\end{quotation}

This paper is organized as follows: In Section  \ref{sec:IBP_MI} we
set up the notations and review the concepts of IBP reduction and master integrals. In
Section \ref{sec:partial}, we present our improved verion of Leinartas'
algorithm to shorten IBP reduction coefficients. In Section
\ref{sec:example}, we provide several IBP reduction simplifications,
and also emphasize the benefit of using UT bases in case they exist. In
Section \ref{sec:summary}, we summarize our discoveries and discuss
possible directions for future research. In the appendices, we provide
a manual describing the use of our {\sc Singular} library for
multivariate partial fractioning, and an explicit example of  the coefficient size reduction.

\section{IBP and Master Integrals}
\label{sec:IBP_MI}

\subsection{Integration-by-Parts Identities and master integrals}

There are many algebraic relations between different Feynman integrals and it is very efficient to use these relations to obtain further Feynman integrals from the ones we already know. A very useful set of relations can be obtained via the integration-by-parts (IBP) identities, which relate different integrals of a given integral family.

Consider a Feynman integral with any loops
\begin{gather}
\int\prod_{j=1}^L \frac{d^D l_j}{i \pi^{D/2}}\frac{1}{\prod_{i=1}^n D_i^{\alpha_i}},
\end{gather}
where $L$ is the number of loops, $\alpha_i$ are integer indices and the denominators are given by 
\begin{gather}
D_i=\sum_{j\geq k\geq 1}^L A_i^{jk} l_j\cdot l_k+\sum_{i=1}^L B_i^j \cdot l_j+E_i,
\end{gather}
i.e are quadratic or linear functions of the external momenta $p_i$
and the loop momenta $l_i$.

The standard IBP relation ~\cite{Tkachov:1981wb,Chetyrkin:1981qh} is,
\begin{gather}
0=\int\prod_{j=1}^L \frac{d^D l_j}{i \pi^{D/2}} \frac{\partial}{\partial l_m}\left(q_k \prod_{i=1}^n D_i^{-\alpha_i}\right),
\label{IBP-relation}
\end{gather}
where $m=1, \dots, L$ with $q_k$ a linear combination of loop momenta and external momenta.

With the IBP identities, we can find the basis of a given integral family, which are called master integrals (MIs). The finiteness of master integrals was proven in ref.~\cite{Smirnov:2010hn}. 

So a Feynman integral can be written as a linear combination of master integrals,
\begin{gather}
I[\alpha_1, \dots, \alpha_n]=\sum_{i} c_i I_i,
\end{gather}
here $\alpha_i$ are integer indices of denominators and $I_i$ are master integrals.

In practice, IBP reduction can be done by many algorithms, such as the Laporta algorithm~\cite{Laporta:2001dd}, the algebra structures
of IBP relations~\cite{Smirnov:2006wh,Smirnov:2006tz,Lee:2008tj}, finite-field
interpolation~\cite{vonManteuffel:2014ixa,Peraro:2016wsq,Klappert:2019emp,Klappert:2020aqs,Peraro:2019svx}, module instersection~\cite{Boehm:2018fpv}, intersection theory~\cite{Mastrolia:2018uzb}, $\eta$ expansion~\cite{Liu:2018dmc}  and direct solution of IBP
recursive relations~\cite{Kosower:2018obg}. And there are also many  public IBP reduction
codes, like {\sc AIR}, {\sc FIRE}, {\sc Kira}, {\sc Reduze} , {\sc LiteRed} \cite{Anastasiou:2004vj,Smirnov:2008iw,Smirnov:2013dia,Smirnov:2014hma,Smirnov:2019qkx,
  Maierhoefer:2017hyi,Maierhofer:2018gpa,Maierhofer:2019goc,Studerus:2009ye,vonManteuffel:2012np,Lee:2013mka}.

\subsection{Differential equation and UT basis}
\label{DE-UT}

Since the master integrals are functions of scalar products of external momenta, it is natural to consider the derivatives with respect to the scalar products. By introducing a vector
\begin{gather}
    \vec{I}=\left(
\begin{array}{c}
 I_1 \\
 I_2 \\
 \dots\\
 I_n \\
\end{array}
\right),
\end{gather}
here $I_i$ are the master integrals of a corresponding Feynman diagram, we can set up the following differential equation
\begin{gather}
d \vec{I}=(d A) \vec{I},
\end{gather}
where $A$ is a $n\times n$ matrix. Normally, every element of $A$ is a rational function of spacetime dimension $D$ and kinematic variables. 

While, Johannes Henn showed that with a new choice of MIs, differential equations can simplify in a way that they can be solved easily order by order~\cite{Henn:2013pwa,Henn:2014qga}.  With suitable MIs,  the differential equation can be written like that
\begin{gather}
    d\vec{I'}=\epsilon (d A) \vec{I'}
\label{DE}
\end{gather}
with 
\begin{gather}
    A=\sum A_k \log S_k.
\label{dlog}
\end{gather}
This is called the canonical form of differential equations, here we set $D=4-2\epsilon$ and each $A_k$ is a constant matrix, $S_k$ are functions of Lorentz invariants, which are called symbol letters.

With~\eqref{DE}, the differential equations can be solved order by order in an $\epsilon$-order expansion:
\begin{gather}
\label{UT-expansion}
    \vec{I'}=\vec{I'_0}+\epsilon \vec{I'_1}+ \epsilon^2 \vec{I'_2}+\dots,\\\nonumber
    d \vec{I'_1}=(d A) \vec{I'_0}, \quad d \vec{I'_2}= (d A) \vec{I'_1} \dots.
\end{gather}

The key property of these suitable master integrals can be described with the concept of the degree of transcendentality $\mathcal{T} (f)$  of a function. $\mathcal{T} (f)$ defines the fold number of iterated integrals needed in the function $f$.  Moreover, we require $\mathcal{T}(f_1 f_2)=\mathcal{T}(f_1)+\mathcal{T}(f_2)$. So that, we can see 
\begin{gather}
\mathcal{T}(\mathrm{Li}_k(x))=k,\quad  \mathcal{T}(\log x)=1, \quad \mathcal{T}(\zeta_n)=\mathcal{T}(\mathrm{Li}_n(1))=n,\\\nonumber
    \mathcal{T}(algebraic \; factors)=0,\quad \mathcal{T}(\zeta_2)=\mathcal{T}(\frac{\pi^2}{6})=2\Rightarrow\mathcal{T}(\pi)=1.
\end{gather}

If the function also satisfies 
\begin{gather}
    \mathcal{T}\left(\frac{d}{dx}f(x)\right)=\mathcal{T}(f(x))-1,
\end{gather}
then the function $f$ is called a pure function. With this definition we can see that
if we multiply a pure function with an algebraic function of $x$,  the resulting function would still have the same uniform transcendentality but no longer be a pure function anymore, since the derivative is also applied on the algebraic function. 

Because of \eqref{DE} and \eqref{UT-expansion}, we can see that the functions in $\vec{I'_k}$ are all pure functions, hence the $\vec{I'}$ is called uniform transcendental (UT) basis. 

There are many ways to construct a UT basis. For examples, we can
construct it via Fuchsia and epsilon, based on the Lee's
algorithm~\cite{Gituliar:2016vfa,Prausa:2017ltv,Lee:2014ioa}. Meyer
proposed a package CANONICA to find a transformation to get UT
integrals~\cite{Meyer:2017joq}. What is more, by means of leading
singularity analysis and the dlog ansatz, we can also construct a UT
basis~\cite{Wasser:2018qvj}. A UT basis can be also constructed via
Baikov analysis~\cite{Chicherin:2018old}, and systematically via the
dlog form in a general representation and the intersection theory
\cite{Chen:2020uyk}.  And recently, it was discovered that the full UT basis from only one UT integral~\cite{Dlapa:2020cwj}.

\subsection{Symbol of a transcendental function}

In section~\ref{DE-UT}, we proposed the canonical form of differential equation
\begin{gather}
    d\vec{I'}=\epsilon (d A) \vec{I'}
\end{gather}
with 
\begin{gather}
    A=\sum A_k \log S_k.
\end{gather}

In the case where the symbol letter alphabet can be written in terms of rational functions (in at least one variable), one can write the answer in terms of Goncharov polylogarithms (also called hyperlogarithms, multiple logarithms)~\cite{Goncharov:1998kja,Henn:2014qga}. The Goncharov polylogarithms can be defined iteratively as follows,
\begin{gather}
    G(a_1,\dots,a_n;z)=\int_{0}^z \frac{dt}{t-a_1} G(a_2,\dots,a_n;t), \quad a_i\in \mathbb{C},
\end{gather}
with
\begin{gather}
    G(z)\equiv G(;z)=1 .
\end{gather}
In the special case where all the $a_i$ are zero, we define, using the obvious vector notation $\vec{a}_n= (a, . . . , a)$, $a \in \mathbb{C}$,
\begin{gather}
    G(\vec{0}_n;z)=\frac{1}{n!}\log^n z, \quad G(\vec{a}_n;z)=\frac{1}{n!}\log^n\left(1-\frac{z}{a}\right).
\end{gather}

A Goncharov polylogarithm $T_k$ of transcendentality degree $k$ can be written as a linear combination (with rational coefficients) of $k$-fold iterated integrals of the form~\cite{Goncharov:2010jf}
\begin{gather}
    T_k =\int_{a}^b d\log R_1\circ\dots \circ d\log R_k,
\end{gather}
where $a$ and $b$ are rational numbers, $R_i(t)$ are rational functions with rational coefficients and the iterated integrals are defined recursively by
\begin{gather}
    \int_{a}^b d\log R_1\circ\dots \circ d\log R_k=\int_{a}^b\left(\int_{a}^t d\log R_1\circ\dots \circ d\log R_{k-1}\right) d \log R_k (t),
\end{gather}
in physics, there $d\log R_k$ are just the ones appeared in eq \eqref{DE}, with $R_k$ equal $S_k$ in \eqref{dlog}.

There is one useful quantity associated with $T_k$ called the symbol, which is an element of the $k$-fold tensor product of rational functions modulo constants~\cite{goncharov2009simple}, denoted by $S$. The symbol of the function $T_k$ is 
\begin{gather}
    \mathrm{symbol} (T_k)\equiv S(T_k)=R_1\otimes R_2\otimes\dots \otimes R_k,
\end{gather}
that is why $S_k$ in \eqref{dlog} are called symbol letters.

There are many other properties of the symbol, see ref.~\cite{Duhr:2011zq, Duhr:2014woa} for a discussion of their properties.

\section{Improved Leinartas' Algorithm and Modern Implementation}
\label{sec:partial}
 
In this section we describe an algorithm based on the work of E.K.Leinartas~\cite{leinartas1978factorization, raichev2012leinartas} to reduce the size of rational functions by writing them as a sum of functions with ``smaller'' numerators and denominators.

The improvements to the algorithm described in the original paper by Leinartas lie mainly in 
an additional decomposition step described in Algorithm \ref{algo:shortNumeratorDecompStep}, which reduces the size of the numerators and denominators by doing a (multivariate) division with remainder
by the denominator factors, as well as the changes discussed in Remark
\ref{rem:simplified_algDep}, which aim at improving the performance of the second decomposition step (Algorithm \ref{algo:algDependDecompStep}) by reducing the size of the required annihilating polynomials. In addition to 
Leinartas' original algorithm we also add, as suggested in \cite{Meyer:2017joq}, a numerator decomposition 
as the final step of the algorithm and use a syzygy module computation to reduce the size of the decomposition (see Algorithm \ref{algo:numeratorDecompStep} and Remark \ref{rem:syzygy_reduction}). 
Thus, while Leinartas' original algorithm calculates a decomposition satisfying only the first two conditions in Theorem \ref{thm:pfd}, we add an additional condition.
In our implementation, we make use of the computer algebra system \textsc{Singular}, which provides efficient algorithms for the calculation of Gr\"obner bases and syzygy modules as well as polynomial factorization.

To state more precisely what we mean by ``smaller'' numerators/denominators, we first need the following definitions. The goal is then an algorithmic proof of Theorem \ref{thm:pfd}. For this, let in the following $K[x_1,\dots,x_d]$ or short $K[\mathbf x]$ be the polynomial ring over some field $K$ in $d$ variables $\mathbf x=(x_1,\dots,x_d)$ and let $\overline K$ denote the algebraic closure of $K$.

\begin{defn} (algebraic dependence)
	A set $\{q_1,\dots,q_m\}\subseteq K[\mathbf x]$ of $m$ polynomials is called \textbf{algebraically dependent} if there exists a nonzero polynomial $p\in K[y_1,\dots,y_m]$ in $m$ variables, such that $p(q_1,\dots,q_m)=0$ in $K[\mathbf x]$. 
	Call $p$ an \textbf{annihilating polynomial} of $q_1,\dots,q_m$.
\end{defn}

\begin{defn}[monomial ordering]
	A \textbf{monomial ordering} for $K[\mathbf x]$ is a total ordering ``$>$'' on the set $\left\{\mathbf x^\alpha\middle|\alpha\in\mathbb N^d\right\}$
	of monomials (writing ``$\mathbf x^\alpha$'' for $x_1^{\alpha_1}\cdot \dots \cdot x_d^{\alpha_d}$), such that $>$ is compatible with multiplication, i.e. for all $\alpha,\beta,\gamma\in\mathbb N^d$ it holds \begin{equation}\mathbf x^\alpha>\mathbf x^\beta\Rightarrow\mathbf x^\alpha\mathbf x^\gamma>\mathbf x^\beta\mathbf x^\gamma\end{equation}
	and $>$ is called \textbf{global} if it is a well ordering or equivalently if $1<x_i$ for all $i=1,\dots,d$. For any polynomial $f\in K[\mathbf x]$ write $L(f)$ for its \textbf{lead monomial}, that is the largest monomial with respect to $>$.
\end{defn}

\begin{defn}[Gr\"obner basis]
	A \textbf{Gr\"obner basis} of an ideal $I\subseteq K[\mathbf x]$ with respect to a given global monomial ordering is a finite subset $G\subseteq I$ such that the ideals generated by all lead monomials of $G$ and of $I$ coincide:\begin{equation}\left<L(g)\middle|g\in G\right>=\left<L(f)\middle|f\in I\right>\end{equation} 
\end{defn}

\begin{defn}[division with remainder]\label{defn:divrem_GB}
	After the choice of a (global) monomial ordering there exists an algorithm (multivariate reduced division with remainder, see \cite[\S3~Theorem 3]{CLO}) to determine for any polynomials $f,g_1,\dots,g_r\in K[\mathbf x]$  a division expression \begin{equation}f=r+\sum_{i=1}^ra_ig_i\qquad(r,a_1,\dots,a_r\in K[\mathbf x])\end{equation} such that none of the lead monomials $L(a_ig_i)$ are bigger than $L(f)$ and no term of $r$ is divisible by any lead monomial $L(g_i)$. Call a polynomial $r$ with this property \textbf{reduced} with respect to $g_1,\dots,g_r$.
	
	In case $G=(g_1,\dots,g_r)$ is a Gr\"obner basis of an ideal $I\subseteq K[\mathbf x]$, it can be shown \cite[\S6~Proposition 1]{CLO}, that the remainder $r$ only depends on the monomial ordering and $I$. In this case call $r$ \textbf{reduced} with respect to $I$.
	
	Call $G$ a \textbf{reduced Gr\"obner basis}, if every $g\in G$ is reduced with respect to $G\backslash\{g\}$. It can be shown \cite[\S7~Proposition 6]{CLO}, that for any ideal $I\subseteq K[\mathbf x]$ a reduced Gr\"obner basis exists and is unique up to multiplication with constants and reordering of the elements. It can be calculated with Buchberger's algorithm \cite[\S7~Theorem 2]{CLO}.
\end{defn}

\begin{thm}[partial fraction decomposition]\label{thm:pfd}
	Let $f,g\in K[\mathbf x]$ and let $g=\prod_{i=1}^mq_i^{e_i}$ be the factorization of $g$ into irreducible factors $(e_i\in\mathbb N)$. Then there exists a decomposition
	\begin{equation}\frac f g = \sum_{S\subseteq\{1,\dots,m\}}\frac{f_S}{\prod_{i\in S}q_i^{b_i}}\qquad (b_i\in\mathbb N,f_S\in K[\mathbf x])\end{equation}
	where all nonzero summands satisfy the following conditions 
	\begin{enumerate}[label=(\arabic*)]
		\item the polynomials $\left\{q_i\middle|i\in S\right\}$ have a common zero in $\overline K^d$
		\item the polynomials $\left\{q_i\middle|i\in S\right\}$ are algebraically independent
		\item $f_S$ is reduced with respect to the ideal $\left<q_i\middle|i\in S\right>\subseteq K[\mathbf x]$
	\end{enumerate}
\end{thm}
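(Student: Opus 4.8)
The plan is to prove the theorem constructively, turning the (improved) Leinartas decomposition steps sketched informally before the statement into the inductive core of the argument. One starts from the single term $f/g=f/\prod_{i=1}^m q_i^{e_i}$ (so $S=\{1,\dots,m\}$, $b_i=e_i$, $f_S=f$) and rewrites summands, running the \emph{Nullstellensatz step} to exhaustion, then the \emph{algebraic dependence step}, then the \emph{numerator step}, until all three conditions hold for every nonzero summand. Two elementary facts are used throughout: $K[\mathbf x]$ is an integral domain, so a nonzero $q_i$ may be cancelled freely in $K(\mathbf x)$; and any subset of a set of polynomials having a common zero in $\overline K^d$ again has one, while any subset of an algebraically independent set is algebraically independent. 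Hence conditions (1) and (2), once achieved for a summand, cannot be destroyed by a later step that only shrinks its index set $S$.

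The Nullstellensatz step and the numerator step are routine and share their termination argument. If a summand $f_S/\prod_{i\in S}q_i^{b_i}$ has $\{q_i\mid i\in S\}$ without a common zero in $\overline K^d$, the weak Nullstellensatz gives $1=\sum_{i\in S}h_iq_i$ with $h_i\in K[\mathbf x]$; substituting this for the numerator's implicit factor $1$ and cancelling one power of $q_i$ per term rewrites the summand as a sum of fractions with index sets contained in $S$ and strictly smaller denominator degree, so the multiset of denominator degrees over all summands strictly decreases in the well-founded multiset order and the step halts, after which every summand satisfies (1). For the numerator step, given a summand satisfying (1) and (2), one computes a reduced Gr\"obner basis of $\langle q_i\mid i\in S\rangle$ for some global monomial ordering and divides with remainder $f_S=r+\sum_{i\in S}a_iq_i$ as in Definition \ref{defn:divrem_GB}: then $r/\prod_{i\in S}q_i^{b_i}$ meets all three conditions by construction, while each $a_iq_i/\prod_{j\in S}q_j^{b_j}=a_i/(q_i^{b_i-1}\prod_{j\ne i}q_j^{b_j})$ has strictly smaller denominator degree and still meets (1) and (2), so recursing on these halts by the same multiset argument. (The extra division-with-remainder step mentioned in the introduction is of the same nature; it improves performance, not correctness, and can be folded in here.)

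The real work is the algebraic dependence step. Suppose a summand has $\{q_i\mid i\in S\}$ algebraically dependent; choose a nonzero annihilating polynomial $p\in K[y_i\mid i\in S]$ and, dividing out the common monomial factor of its support, assume $p$ is divisible by no $y_i$. Because the Nullstellensatz step has already been run, $\{q_i\mid i\in S\}$ has a common zero $z$, and evaluating $p(\mathbf q)=0$ at $z$ forces the constant term of $p$ to vanish; thus the familiar device of rescaling $p$ to constant term $1$ is not available, and one must argue from the monomials of $p$ directly. The manipulation is to pick a distinguished monomial $\mathbf q^{\mathbf k_0}$ of $p$, write $1=\mathbf q^{\mathbf k_0}/\mathbf q^{\mathbf k_0}$ with the numerator $\mathbf q^{\mathbf k_0}$ replaced, using $p(\mathbf q)=0$, by $-c_{\mathbf k_0}^{-1}\sum_{\mathbf k\ne\mathbf k_0}c_{\mathbf k}\mathbf q^{\mathbf k}$, multiply the summand by this representation of $1$, and clear denominators — first raising the $b_i$ if needed so that no exponent in a resulting denominator turns negative. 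Every resulting denominator involves only the $q_i$ with $i\in S$, so condition (1) survives, and with a suitable choice of $p$ and of $\mathbf k_0$ one arranges that a nonnegative-integer invariant refining the number $|S|$ of distinct denominator factors strictly decreases, so iterating drives every index set down to an algebraically independent one and yields condition (2). I expect this step to be the main obstacle: making precise the choice of annihilating polynomial and of $\mathbf k_0$, and exhibiting the invariant that forces termination while verifying that condition (1) is never lost, is the delicate part, whereas the remaining bookkeeping — collecting summands with equal denominators, discarding zero ones, and fitting the three loops into a single well-founded termination order — is routine.
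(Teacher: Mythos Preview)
Your overall architecture matches the paper's: run the Nullstellensatz step to achieve condition~(1), the algebraic-dependence step to achieve~(2), then the numerator step to achieve~(3), noting that (1) and (2) are preserved by the later steps because those only shrink the index set $S$. Your treatment of the Nullstellensatz and numerator steps is fine; the paper argues termination by the number of distinct factors rather than by total denominator degree, but your multiset argument works equally well.

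The genuine gap is exactly where you flag it: the algebraic-dependence step. You leave unspecified the choice of the distinguished monomial $\mathbf k_0$ and the invariant that forces termination, and this is not merely a matter of tidying up---with the setup you describe (an annihilating polynomial for the $q_i$ themselves and an unspecified $\mathbf k_0$), a single application of the step can leave the index set $S$ unchanged while \emph{increasing} some exponents $b_i$, so no obvious invariant decreases. The paper resolves this cleanly and the trick is worth knowing: instead of an annihilating polynomial for $\{q_i\mid i\in S\}$, take one for the \emph{powers} $\{q_i^{b_i}\mid i\in S\}$. These are algebraically dependent if and only if the $q_i$ are (their Jacobians differ by the invertible diagonal matrix $\mathrm{diag}(b_iq_i^{b_i-1})$), so such a $p$ exists. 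Now choose $\mathbf k_0=\alpha$ to be any exponent of \emph{minimal total degree} occurring in $p$. Every other exponent $\beta$ satisfies $|\beta|\ge|\alpha|$ and $\beta\ne\alpha$, hence $\beta_i\ge\alpha_i+1$ for at least one $i$; in the corresponding term of
\[
\frac{f_S}{\prod_j q_j^{b_j}}\;=\;-\sum_{\beta\ne\alpha}\frac{c_\beta}{c_\alpha}\,f_S\prod_{i\in S}\frac{q_i^{\,b_i\beta_i}}{q_i^{\,b_i(\alpha_i+1)}}
\]
the factor $q_i$ drops out of the denominator entirely. Thus every output term has strictly fewer distinct irreducible factors than the input, the invariant is simply $|S|$, and termination is immediate. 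With this choice there is nothing delicate: your concern about the constant term of $p$ vanishing is moot, since the argument never uses it, and condition~(1) survives for the trivial reason you already gave.
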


Note that $(3)$ depends on the monomial ordering. In order to get numerator polynomials of low degree, a degree ordering (i.e. $\deg(\mathbf x^\alpha)<\deg(\mathbf x^\beta)\Rightarrow \mathbf x^\alpha<\mathbf x^\beta$ for any monomials $\mathbf x^\alpha,\mathbf x^\beta$ in $K[\mathbf x]$) should be chosen. In our \textsc{Singular} implementation we used the graded reverse lexicographic ordering defined by
\begin{align}\label{eq:grevlex_defn}
\nonumber\mathbf x^\alpha>_{grevlex}\mathbf x^\beta :\Leftrightarrow\quad&\deg(\mathbf x^\alpha)>\deg(\mathbf x^\beta)\text{ or } \deg(\mathbf x^\alpha)=\deg(\mathbf x^\beta) \\
&\text{ and the last nonzero entry of } \alpha-\beta \text{ is negative}
\end{align}
Furthermore condition $(2)$ ensures, that at most $d$ different irreducible factors occur in each denominator of the decomposition, since it can be shown, that any set of at least $d+1$ polynomials (in $d$ variables) is algebraically dependent. (This follows directly from the Jacobian criterion \ref{lem:jac_crit}.)

In view of condition $(1)$ in Theorem \ref{thm:pfd}, the following corollary to Hilbert's weak Nullstellensatz can be used to eliminate factors from the denominators if the $q_i$ have no common zero.

\begin{lem}[Nullstellensatz certificate]\label{lem:NSS_cert}
	Polynomials $f_1,\dots,f_m\in K[\mathbf x]$ have no common zero in $\overline K^d$ if and only if the generated  ideal is trivial, i.e. $\left<f_1,\dots,f_m\right>=\left<1\right>=K[\mathbf x]$.\\
	In this case there exist polynomials $h_1,\dots,h_m\in K[\mathbf x]$ such that
	\begin{equation}1=\sum_{i=1}^mh_if_i\end{equation}
	Call $(h_1,\dots,h_m)$ a \textbf{Nullstellensatz certificate} for $(f_1,\dots,f_m)$.
\end{lem}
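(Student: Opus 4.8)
The plan is to recognize this as Hilbert's weak Nullstellensatz together with an effective witness for the membership $1\in\langle f_1,\dots,f_m\rangle$. I would split the biconditional into its two implications and dispose of the ``in this case'' addendum along the way.

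First the easy direction: assume $\langle f_1,\dots,f_m\rangle=K[\mathbf x]$. Then $1$ lies in the ideal, so by definition there are $h_1,\dots,h_m\in K[\mathbf x]$ with $1=\sum_{i=1}^m h_i f_i$, which is already the claimed Nullstellensatz certificate. Evaluating this identity at an arbitrary point $a\in\overline K^d$ gives $1=\sum_{i=1}^m h_i(a)f_i(a)$, so the $f_i$ cannot all vanish at $a$; hence there is no common zero. Note that one may exhibit the $h_i$ concretely via a Gr\"obner basis computation: $\langle f_1,\dots,f_m\rangle=K[\mathbf x]$ precisely when the reduced Gr\"obner basis (with respect to any global ordering) equals $\{1\}$, and Buchberger's algorithm then tracks the cofactors, which is what our implementation uses.

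For the converse, which is the substantive content, I would argue by contraposition: suppose $I:=\langle f_1,\dots,f_m\rangle$ is a proper ideal of $K[\mathbf x]$. Since $K[\mathbf x]$ is Noetherian, $I$ is contained in some maximal ideal $\mathfrak m$, and $L:=K[\mathbf x]/\mathfrak m$ is a field that is finitely generated as a $K$-algebra, namely by the residue classes of $x_1,\dots,x_d$. By Zariski's lemma, $L$ is then a finite, hence algebraic, field extension of $K$, so there is a $K$-algebra embedding $\iota\colon L\hookrightarrow\overline K$. Setting $a_j:=\iota(\text{residue class of }x_j)\in\overline K$ and $a=(a_1,\dots,a_d)$, the composite $K[\mathbf x]\to L\xrightarrow{\ \iota\ }\overline K$ is the evaluation homomorphism at $a$; since every $p\in I\subseteq\mathfrak m$ maps to $0$ in $L$, we get $p(a)=0$, and in particular $f_i(a)=0$ for all $i$. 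Thus $a$ is a common zero of the $f_i$, contradicting the hypothesis. This completes both directions.

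The only step requiring genuine input is Zariski's lemma (equivalently, the Nullstellensatz itself); I would invoke it from a standard reference such as \cite{CLO} rather than reprove it, since the remaining work — reducing the geometric ``no common zero'' statement to the algebraic triviality of the ideal, and extracting the point $a$ from a maximal ideal lying over $I$ — is routine. In the case $K=\overline K$ relevant to the applications here, one can take $\iota$ to be the identity on $K$, so the argument simplifies to: a proper ideal lies in a maximal ideal, and over an algebraically closed field every maximal ideal of $K[\mathbf x]$ has the form $\langle x_1-a_1,\dots,x_d-a_d\rangle$.
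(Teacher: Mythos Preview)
Your proof is correct, but the route differs from the paper's. The paper simply quotes the weak Nullstellensatz from \cite{CLO} (stated there over the algebraic closure), and then isolates the only point that needs work: the certificate $h_i$ furnished by that citation a priori lies in $\overline K[\mathbf x]$, not in $K[\mathbf x]$. The paper fixes this by observing that, for any degree bound on the $h_i$, the relation $1=\sum_i h_i f_i$ is an inhomogeneous linear system over $K$ in the coefficients of the $h_i$; since it has a solution over $\overline K$, it already has one over $K$.

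You instead argue directly over $K$: in the easy direction the certificate in $K[\mathbf x]$ is immediate from the definition of ideal generation, and for the converse you pass through a maximal ideal and invoke Zariski's lemma to embed the residue field into $\overline K$, thereby exhibiting a common zero. This sidesteps the descent issue entirely and is arguably cleaner, at the cost of citing Zariski's lemma rather than the Nullstellensatz itself (of course the two are essentially equivalent). The paper's linear-algebra descent, on the other hand, is a pleasant and reusable trick that keeps the external input to a single standard reference.
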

\begin{proof}
	This is exactly the weak Nullstellensatz \cite[\S4.1~Theorem 1]{CLO} with the exception, that we require $h_i\in K[\mathbf x]$ instead of $h_i\in\overline K[\mathbf x]$. However the equation $1=\sum_{i=1}^mh_if_i$ can be seen as a set of linear equations (with coefficients in $K$) in the coefficients of the polynomials $h_i$ and by the weak Nullstellensatz we know, that it is solvable over $\overline K$. But then it is solvable over $K$ as well, since all the coefficients in these linear equations lie in $K$. Hence we may assume $h_1,\dots,h_m\in K[\mathbf x]$.
\end{proof}

Given a rational function $f/g$ as in Theorem \ref{thm:pfd} for which the irreducible factors $q_i$ of $g$ have no common zero in $\overline K^d$, we know that $q_1^{e_1},\dots,q_m^{e_m}$ have no common zero as well and if $(h_1,\dots,h_m)$ is a Nullstellensatz certificate, we can simply multiply $f$ by $1=\sum_{k=1}^mh_kq_k^{e_k}$ to get a decomposition
\begin{equation}\label{eq:NSSdecomp}
\frac f g=\frac{f\cdot\sum_{k=1}^mh_kq_k^{e_k}}{\prod_{i=1}^mq_i^{e_i}}=\sum_{k=1}^m\frac{f\cdot h_k}{\prod_{i=1,i\not=k}^mq_i^{e_i}}
\end{equation}
where each denominator contains only $m-1$ different irreducible factors.

To calculate this decomposition (Algorithm \ref{algo:NSSdecompStep}), we compute a reduced Gr\"obner basis $G$ of $\left<q_1^{e_1},\dots,q_m^{e_m}\right>$ as well as the transformation matrix $T$ from the original ideal generators $q_1^{e_1},\dots,q_m^{e_m}$ to $G$. This can be done with Buchberger's algorithm for the computation of Gr\"obner bases as implemented in the \textsc{Singular} function \texttt{liftstd}.

\begin{algorithm}
	\caption{\texttt{NSSdecompStep} (Nullstellensatz decomposition step)}
	\label{algo:NSSdecompStep}
	\begin{flushleft}
		\textbf{Input:} rational function $f/g$ where $f,g\in K[\mathbf x]$ and $g=\prod_{i=1}^m q_i^{e_i}$ for irreducible $q_i\in K[\mathbf x]$
		
		\textbf{Output:} set of rational functions with sum $f/g$
	\end{flushleft}
	\begin{algorithmic}[1]
		\STATE {calculate the reduced Gr\"obner basis $G$ of $\left<q_1^{e_1}, \dots, q_m^{e_m}\right>$ as well as the transformation matrix $T$ from the generators $q_i^{e_i}$ to $G$}
		\IF {$G=\{c\}$ for $c\in K$ (so $\left<q_1^{e_1}, \dots, q_m^{e_m}\right>=\left<1\right>$)}
		\STATE {using $T$, find polynomials $h_i$ such that $\sum_{i=1}^m h_i q_i^{e_i}=1$ (namely $h_i=T_{i1}/c$)}
		\RETURN {$\left\{\frac{f\cdot h_k}{\prod_{i=1,i\not=k}^mq_i^{e_i}}\middle| k=1, \dots, m\right\}$}
		\ELSE
		\RETURN {$\left\{\frac f g\right\}$}
		\ENDIF
	\end{algorithmic}
\end{algorithm}

Note that $\left<q_1^{e_1},\dots,q_m^{e_m}\right>=\left<1\right>$ if and only if $G=\{c\}$ for $c\in K$ (constant polynomial). This follows directly from the uniqueness of reduced Gr\"obner bases. 

Repeated application of Algorithm \ref{algo:NSSdecompStep} will yield a decomposition satisfying condition $(1)$ in Theorem \ref{thm:pfd}. For $(2)$ let's assume $f/g$ is a rational function and $g=\prod_{i=1}^mq_i^{e_i}$ as in Theorem \ref{thm:pfd}. There is a simple criterion to test for algebraic dependence:
\begin{lem}[Jacobian criterion]\label{lem:jac_crit}
	A set of $m$ polynomials $\{f_1,\dots,f_m\}\subseteq K[\mathbf x]$ is algebraically independent if and only if the Jacobian matrix $\left(\frac{\partial f_i}{\partial x_j}\right)_{i,j}\in K[\mathbf x]^{m\times d}$ has rank $m$ over the field $K(\mathbf x)$ of rational functions. (A proof can be found in \cite{MR1215329}.)
\end{lem}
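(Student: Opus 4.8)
My plan is to prove both implications of the Jacobian criterion directly, and the first step is to record that it holds as stated only when $\operatorname{char} K = 0$: in $\mathbb F_p[x_1]$ the polynomial $x_1^p$ is algebraically independent over $\mathbb F_p$ yet has vanishing derivative. This restriction is harmless here since every field occurring is an extension of $\mathbb Q$, so I assume $\operatorname{char} K = 0$ throughout and write $L = K(\mathbf x)$. Next come two routine reductions: (i) the rank over $L$ of the Jacobian $J = (\partial f_i/\partial x_j)_{i,j}$ equals the dimension of the $L$-span of the rows $\nabla f_i \in L^d$, and it also equals the rank of $J$ evaluated at a generic point of $\overline K^d$ (any fixed $m\times m$ minor of $J$ is a polynomial, which vanishes identically iff it vanishes at the generic point); (ii) neither algebraic independence of the $f_i$ nor the rank of $J$ changes if $K$ is replaced by $\overline K$, so I may assume $K$ algebraically closed.

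\emph{Algebraic dependence implies rank $< m$.} Suppose $p(f_1,\dots,f_m) = 0$ with $0 \neq p \in K[y_1,\dots,y_m]$, and pick such a $p$ of minimal total degree. Differentiating by $x_j$ and using the chain rule yields
\begin{equation}
\sum_{i=1}^m \Bigl(\tfrac{\partial p}{\partial y_i}\Bigr)(f_1,\dots,f_m)\,\frac{\partial f_i}{\partial x_j} = 0 \qquad (j = 1,\dots,d),
\end{equation}
so the vector $v = \bigl((\partial p/\partial y_i)(f_1,\dots,f_m)\bigr)_{i=1}^m$ lies in the left kernel of $J$. It remains to see $v \neq 0$: if every entry vanished, then by minimality of $\deg p$ each polynomial $\partial p/\partial y_i$ would vanish identically, which in characteristic $0$ forces $p \in K$, contradicting that $p$ is a nontrivial relation. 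Hence $\operatorname{rank}_L J \le m-1$; equivalently, rank $m$ implies algebraic independence.

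\emph{Algebraic independence implies rank $= m$.} Here I would argue geometrically. Algebraic independence says exactly that the morphism $\varphi\colon \mathbb A^d \to \mathbb A^m$, $x \mapsto (f_1(x),\dots,f_m(x))$, is dominant, since a relation $p(f_1,\dots,f_m) = 0$ would confine $\operatorname{im}\varphi$ to the proper closed subset $V(p)$. Because $\operatorname{char} K = 0$, generic smoothness applies to the dominant morphism $\varphi$ between the smooth varieties $\mathbb A^d$ and $\mathbb A^m$: on a dense open $U \subseteq \mathbb A^d$ the differential $d\varphi_x\colon T_x\mathbb A^d \to T_{\varphi(x)}\mathbb A^m$ is surjective, and since $d\varphi_x$ is multiplication by $J$ evaluated at $x$, the matrix $J$ has rank $m$ on $U$ and hence rank $m$ over $L$ by reduction (i). A purely algebraic alternative avoiding generic smoothness: complete $f_1,\dots,f_m$ to a transcendence basis of $L/K$ using a subset of the $x_i$; in characteristic $0$ the resulting finitely generated algebraic extension is separable, so the first fundamental exact sequence for K\"ahler differentials exhibits $df_1,\dots,df_m$ as part of an $L$-basis of $\Omega_{L/K} \cong L^d$, hence $L$-linearly independent --- which is reduction (i) once more.

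\emph{Main obstacle.} The dependence direction is a one-line chain-rule computation; the real content is the converse, and the single genuine subtlety anywhere is the characteristic-$0$ hypothesis. It enters through generic smoothness (equivalently, through separability of finitely generated extensions, equivalently through the nonvanishing of some $\partial p/\partial y_i$ for a minimal relation) and is exactly what fails in positive characteristic. The only other point demanding care is the identification in reduction (i) of ``rank of $J$ over $K(\mathbf x)$'' with ``rank of $J$ at a generic point of $\overline K^d$'', which licenses moving between the linear-algebra and geometric formulations; beyond that, everything is standard.
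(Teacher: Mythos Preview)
The paper does not provide its own proof of this lemma; it merely cites an external reference. There is therefore no in-paper argument to compare against.

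Your proof is correct. The dependence-implies-low-rank direction via differentiating a minimal relation is the standard argument and is written cleanly. For the converse you give two valid routes (generic smoothness and K\"ahler differentials); either suffices, and the K\"ahler-differential version is the one typically found in the algebraic references. Your opening remark that the criterion as stated requires $\operatorname{char} K = 0$ (or at least a separability hypothesis) is a genuine correction to the paper's formulation, which omits this assumption; since all fields in the application contain $\mathbb Q$, this has no practical consequence, but it is good that you flagged it and made the hypothesis explicit.
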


\begin{cor}\label{cor:alg_dep}
	A set of polynomials $\{q_1,\dots,q_m\}\subset K[\mathbf x]$ is algebraically dependent if and only if $\{q_1^{e_1},\dots,q_m^{e_m}\}$ is $(e_i\in\mathbb N_{>0})$.
\end{cor}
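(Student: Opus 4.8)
The plan is to deduce this directly from the Jacobian criterion (Lemma~\ref{lem:jac_crit}) by comparing the two Jacobian matrices. Write $J(f_1,\dots,f_m)=\left(\frac{\partial f_i}{\partial x_j}\right)_{i,j}\in K[\mathbf x]^{m\times d}$ for a tuple of polynomials. By the chain rule $\frac{\partial}{\partial x_j}\left(q_i^{e_i}\right)=e_i\,q_i^{e_i-1}\,\frac{\partial q_i}{\partial x_j}$, so the first thing I would record is the factorization
\begin{equation}
J\!\left(q_1^{e_1},\dots,q_m^{e_m}\right)=\mathrm{diag}\!\left(e_1 q_1^{e_1-1},\dots,e_m q_m^{e_m-1}\right)\cdot J(q_1,\dots,q_m).
\end{equation}
Everything then reduces to an elementary linear-algebra observation about how this diagonal prefactor affects the rank over the rational function field $K(\mathbf x)$.

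Next I would argue that the diagonal matrix is invertible over $K(\mathbf x)$. Since $K[\mathbf x]$ is an integral domain, $q_i\neq 0$ (the $q_i$ being irreducible factors), and $e_i\neq 0$ in $K$, each entry $e_i q_i^{e_i-1}$ is a nonzero element of $K[\mathbf x]\subseteq K(\mathbf x)$; hence the diagonal matrix lies in $\mathrm{GL}_m\!\left(K(\mathbf x)\right)$. Multiplying an $m\times d$ matrix on the left by an invertible $m\times m$ matrix does not change its rank, so $J(q_1^{e_1},\dots,q_m^{e_m})$ has rank $m$ over $K(\mathbf x)$ if and only if $J(q_1,\dots,q_m)$ does. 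By Lemma~\ref{lem:jac_crit} this means $\{q_1^{e_1},\dots,q_m^{e_m}\}$ is algebraically independent if and only if $\{q_1,\dots,q_m\}$ is, and negating both sides gives the claimed equivalence for algebraic dependence.

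The corollary is essentially immediate once Lemma~\ref{lem:jac_crit} is in hand, so the only point that really needs attention is the hypothesis $e_i\neq 0$ in $K$, i.e. that $\operatorname{char}K=0$ (or at least that $\operatorname{char}K$ divides no $e_i$) — which is exactly the regime relevant for IBP coefficients, where $K=\mathbb Q$ or $K=\mathbb Q(D)$, and is in any case already implicit in the statement of the Jacobian criterion. If one wanted a characteristic-free argument one could instead work with transcendence degrees: each $q_i$ is a root of the nonzero polynomial $t^{e_i}-q_i^{e_i}\in K(q_1^{e_1},\dots,q_m^{e_m})[t]$, so $K(q_1,\dots,q_m)$ is an algebraic extension of $K(q_1^{e_1},\dots,q_m^{e_m})$, whence the two fields have equal transcendence degree over $K$; and a tuple of $m$ elements is algebraically dependent over $K$ precisely when the field it generates has transcendence degree strictly less than $m$, which yields the equivalence. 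Either way, no serious obstacle is expected — the bookkeeping is routine.
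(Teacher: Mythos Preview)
Your proof is correct and follows essentially the same approach as the paper: factor the Jacobian of the powers as the diagonal matrix $\mathrm{diag}(e_1 q_1^{e_1-1},\dots,e_m q_m^{e_m-1})$ times the Jacobian of the $q_i$, observe this diagonal matrix is invertible in $K(\mathbf x)^{m\times m}$, and conclude equal rank, hence the equivalence via Lemma~\ref{lem:jac_crit}. Your additional remarks on the characteristic hypothesis and the alternative transcendence-degree argument are not in the paper but are sensible supplementary observations.
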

\begin{proof}
	This follows directly from Lemma \ref{lem:jac_crit} since 
	\begin{equation}\left(\frac{\partial \left(q_i^{e_i}\right)}{\partial x_j}\right)_{i,j}=\left(e_iq_i^{e_i-1}\frac{\partial q_i}{\partial x_j}\right)_{i,j}=T\cdot\left(\frac{\partial q_i}{\partial x_j}\right)_{i,j}\end{equation}
	where $T$ is the invertible diagonal matrix $\textup{diag}(e_1q_1^{e_1-1},\dots,e_mq_m^{e_m-1})\in K(\mathbf x)^{m\times m}$. So the Jacobian matrices of $\{q_1,\dots,q_m\}$ and $\{q_1^{e_1},\dots,q_m^{e_m}\}$ have the same rank over $K(\mathbf x)$.
\end{proof}

If now the factors $q_1,\dots,q_m$ of the denominator $g$ are algebraically dependent, then so are $q_1^{e_1},\dots,q_m^{e_m}$ and if $p\in K[\mathbf y]=K[y_1,\dots,y_m]$ is an annihilating polynomial we can write 
\begin{align}\label{eq:alpha}
p=c_\alpha\mathbf y^\alpha+\sum_{\substack{\beta\in\mathbb N^m\\ \deg(p)\ge|\beta|\ge|\alpha|}} c_\beta\mathbf y^\beta\qquad (c_\alpha,c_\beta\in K, c_\alpha\not=0)
\end{align}
such that $c_\alpha\mathbf y^\alpha$ is one of the terms of smallest degree (using multi-indices $\beta\in\mathbb N^m$, so $\deg(\mathbf y^\beta)=|\beta|=\beta_1+\dots+\beta_m$). Writing $\mathbf q$ for the vector $(q_1^{e_1},\dots,q_m^{e_m})$, it holds
\begin{align}\label{eq:algDependDecomp}
0 = p(\mathbf q)\quad &\Leftrightarrow\quad c_\alpha\mathbf q^\alpha =-\sum_{\beta} c_\beta\mathbf q^\beta\nonumber\\
&\Leftrightarrow\quad 1=-\sum_{\beta} \frac{c_\beta\mathbf q^\beta}{c_\alpha\mathbf q^\alpha}=-\sum_{\beta} \frac{c_\beta}{c_\alpha}\prod_{i=1}^m\frac{q_i^{e_i\beta_i}}{q_i^{e_i\alpha_i}}\nonumber\\
&\Rightarrow\quad \frac f g=-\sum_{\beta} \frac{c_\beta}{c_\alpha}f\prod_{i=1}^m\frac{q_i^{e_i\beta_i}}{q_i^{e_i(\alpha_i+1)}}
\end{align}
Since $\mathbf y^\alpha$ has minimal degree, for every $\beta$ occurring in Equation (\ref{eq:algDependDecomp}) it holds $\beta_i\ge\alpha_i+1$ for at least one index $i$. Therefore the factor $q_i$ does not appear in the denominator of the corresponding term. So we obtain a sum of rational functions with at most $m-1$ different irreducible factors in their denominators and thus, as with Algorithm \ref{algo:NSSdecompStep}, repeated application of this step leads to a decomposition satisfying condition $(2)$ in Theorem \ref{thm:pfd}.
But in order to turn this into an algorithm, we need a way of computing annihilating polynomials:

\begin{lem}[annihilating polynomials]\label{lem:annpoly}
	The annihilating polynomials of a fixed tuple $(f_1,\dots,f_m)$ of polynomials in $K[\mathbf x]=K[x_1,\dots,x_d]$ are precisely the elements of the ideal \[\left<y_1-f_1,\dots,y_m-f_m\right>_{K[\mathbf x,\mathbf y]}\cap K[\mathbf y]\quad\text{where}\quad \mathbf y=(y_1,\dots,y_m).\]
\end{lem}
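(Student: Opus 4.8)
The plan is to prove the two inclusions separately: the inclusion ``$\supseteq$'' by a direct substitution, and the inclusion ``$\subseteq$'' by a finite Taylor (telescoping) identity together with elimination. Throughout, let $\Phi\colon K[\mathbf x,\mathbf y]\to K[\mathbf x]$ be the $K[\mathbf x]$-algebra homomorphism determined by $y_i\mapsto f_i$ (and $x_j\mapsto x_j$), and write $I=\langle y_1-f_1,\dots,y_m-f_m\rangle_{K[\mathbf x,\mathbf y]}$.

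First I would show ``$\supseteq$''. Let $p\in I\cap K[\mathbf y]$, so $p=\sum_{i=1}^m a_i(y_i-f_i)$ for some $a_i\in K[\mathbf x,\mathbf y]$. Applying $\Phi$ kills every $y_i-f_i$, hence $\Phi(p)=0$; but since $p\in K[\mathbf y]$ we have $\Phi(p)=p(f_1,\dots,f_m)$, so $p(f_1,\dots,f_m)=0$ in $K[\mathbf x]$, i.e. $p$ is an annihilating polynomial of $(f_1,\dots,f_m)$ (and it is nonzero whenever $p\neq 0$).

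For the reverse inclusion ``$\subseteq$'', suppose $p\in K[\mathbf y]$ satisfies $p(f_1,\dots,f_m)=0$. The key step is the identity
\[
p(y_1,\dots,y_m)-p(f_1,\dots,f_m)=\sum_{i=1}^m\bigl(y_i-f_i\bigr)\,g_i\qquad (g_i\in K[\mathbf x,\mathbf y]),
\]
which I would obtain by telescoping: write the left-hand side as $\sum_{i=1}^m\bigl(p(f_1,\dots,f_{i-1},y_i,\dots,y_m)-p(f_1,\dots,f_{i-1},f_i,y_{i+1},\dots,y_m)\bigr)$, and note that the $i$-th summand, regarded as a polynomial in the single variable $y_i$ over the integral domain $K[\mathbf x][y_{i+1},\dots,y_m]$ with the arguments $f_1,\dots,f_{i-1}$ substituted, vanishes at $y_i=f_i$, hence is divisible by $(y_i-f_i)$ by the factor theorem. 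Since $p(f_1,\dots,f_m)=0$ by hypothesis, this shows $p(y_1,\dots,y_m)\in I$, and because $p\in K[\mathbf y]$ we conclude $p\in I\cap K[\mathbf y]$.

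I do not expect a serious obstacle: this is elementary elimination theory. The only point requiring a little care is making the telescoping identity precise, namely ensuring that ``divisible by $(y_i-f_i)$'' is meant in $K[\mathbf x,\mathbf y]$; this is unproblematic once one views $K[\mathbf x,\mathbf y]$ as a univariate polynomial ring in $y_i$ over the integral domain $K[\mathbf x][y_1,\dots,\widehat{y_i},\dots,y_m]$. Equivalently, one can organize the whole argument around the statement $I=\ker\Phi$ (the inclusion $I\subseteq\ker\Phi$ being clear and $\ker\Phi\subseteq I$ being exactly what the telescoping identity proves), which gives $K[\mathbf x,\mathbf y]/I\cong K[\mathbf x]$; restricting $\Phi$ to $K[\mathbf y]$ then identifies $I\cap K[\mathbf y]=\ker\bigl(\Phi|_{K[\mathbf y]}\bigr)$ with the set of annihilating polynomials of $(f_1,\dots,f_m)$, as claimed.
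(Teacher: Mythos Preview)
Your proof is correct. The ``$\supseteq$'' direction is identical to the paper's. For ``$\subseteq$'' you use a different device: you telescope $p(\mathbf y)-p(\mathbf f)$ and apply the factor theorem in each variable to extract a factor $(y_i-f_i)$ from each summand, thereby showing $p\in I$ directly (equivalently, $I=\ker\Phi$). The paper instead performs the change of variables $\tilde p(\mathbf x,\mathbf y)=p(\mathbf f-\mathbf y)$, observes that $\tilde p(\mathbf x,\mathbf 0)=p(\mathbf f)=0$ forces every term of $\tilde p$ to contain some $y_i$, hence $\tilde p\in\langle y_1,\dots,y_m\rangle$, and then substitutes back to conclude $p\in\langle f_1-y_1,\dots,f_m-y_m\rangle$. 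Both arguments are elementary and of comparable length; your telescoping version makes the identification $I=\ker\Phi$ explicit and is the more standard textbook route, while the paper's substitution trick avoids the telescope at the cost of a slightly less transparent bookkeeping.
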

\begin{proof}
	Let $p\in K[\mathbf y]$ be an annihilating polynomial for $f_1,\dots,f_m$ and write $\mathbf f=(f_1,\dots,f_m)$. Define $\tilde p(\mathbf x,\mathbf y)=p(\mathbf f-\mathbf y)\in K[\mathbf x,\mathbf y]$. Then $\tilde p(\mathbf x,\mathbf 0)=p(\mathbf f)=0$ and thus every term of $\tilde p$ must be divisible by some $y_i$ ($1\le i\le m$), hence $\tilde p\in\left<y_1,\dots,y_m\right>_{K[\mathbf x,\mathbf y]}$. But since $\tilde p(\mathbf x,\mathbf f-\mathbf y)=p(\mathbf y)=p$ (replacing each $y_i$ by $f_i-y_i$), we get $p\in\left<f_1-y_1,\dots,f_m-y_m\right>_{K[\mathbf x,\mathbf y]}$.
	
	Now assume $p\in \left<y_i-f_1,\dots,y_m-f_m\right>_{K[\mathbf x,\mathbf y]}\cap K[\mathbf y]$. Then $p=\sum_{i=1}^ma_i\cdot(y_i-f_i)$ for some $a_i\in K[\mathbf x,\mathbf y]$ and thus $p(\mathbf f)=\sum_{i=1}^ma_i(\mathbf x,\mathbf f)\cdot(f_i-f_i)=0$.
\end{proof}

\begin{lem}[elimination ordering]\label{lem:elimord}
	Call a monomial ordering for the polynomial ring $K[\mathbf x,\mathbf y]=K[x_1,\dots,x_d,y_1,\dots,y_m]$ an \textbf{elimination ordering} for the variables $x_1,\dots,x_d$ if $L(p)\in K[\mathbf y]$ implies already $p\in K[\mathbf y]$ for any polynomial $p\in K[\mathbf x,\mathbf y]$. 
	
	If now $G$ is a (reduced) Gr\"obner basis of an ideal $I\subseteq K[\mathbf x,\mathbf y]$ with respect to such an ordering, then $G\cap K[\mathbf y]$ is a (reduced) Gr\"obner basis of the ideal $I\cap K[\mathbf y]\subseteq K[\mathbf y]$.
	
	(A proof can be found in \cite[Lemma~1.8.3]{KPLUS-KLU01-000633484}.)
\end{lem}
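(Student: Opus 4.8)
The plan is to check directly, against the definition of Gr\"obner basis recalled above, that $G':=G\cap K[\mathbf y]$ is a Gr\"obner basis of $I':=I\cap K[\mathbf y]$ with respect to the monomial ordering on $K[\mathbf y]$ obtained by restricting the ambient (global) ordering to the monomials in $y_1,\dots,y_m$ alone. That restriction is again a global monomial ordering, and for $p\in K[\mathbf y]$ its lead monomial $L(p)$ is the same whether computed in $K[\mathbf x,\mathbf y]$ or in $K[\mathbf y]$, since the two orderings agree on monomials involving only the $y$-variables; so the notation $L(\cdot)$ is unambiguous. Two of the three requirements are then immediate: $G'$ is finite, being a subset of the finite set $G$, and $G'\subseteq I'$, since $G\subseteq I$ and $G'\subseteq K[\mathbf y]$ by construction. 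It therefore remains only to prove the equality of lead-monomial ideals $\left<L(g)\middle|g\in G'\right>=\left<L(f)\middle|f\in I'\right>$ in $K[\mathbf y]$.

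The inclusion ``$\subseteq$'' is trivial from $G'\subseteq I'$. For ``$\supseteq$'' --- the one step where the hypothesis on the ordering enters --- take a nonzero $f\in I'$. Since $f\in I$ and $G$ is a Gr\"obner basis of $I$, some $g\in G$ satisfies $L(g)\mid L(f)$. But $f\in K[\mathbf y]$, so $L(f)$ is a monomial in the $y$-variables only, whence its divisor $L(g)$ also lies in $K[\mathbf y]$. The defining property of an elimination ordering for $x_1,\dots,x_d$ then forces $g\in K[\mathbf y]$, i.e.\ $g\in G\cap K[\mathbf y]=G'$. Thus $L(f)\in\left<L(g)\middle|g\in G'\right>$; as $f\in I'$ was arbitrary, this yields the claimed inclusion, and $G'$ is a Gr\"obner basis of $I'$.

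For the ``reduced'' clause, assume in addition that $G$ is reduced, i.e.\ for each $g\in G$ no term of $g$ is divisible by $L(h)$ for any $h\in G\setminus\{g\}$. Since $G'\subseteq G$, for $g\in G'$ we have $G'\setminus\{g\}\subseteq G\setminus\{g\}$, so a fortiori no term of $g$ is divisible by $L(h)$ for $h\in G'\setminus\{g\}$; hence $G'$ is a reduced Gr\"obner basis of $I'$.

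I do not anticipate a genuine obstacle here: the argument is a short, standard instance of the Elimination Theorem (see \cite{CLO}, or the reference already cited in the statement), and the only point deserving a moment's attention is the compatibility of $L(\cdot)$ with the passage to the subring $K[\mathbf y]$, which is dealt with in the first paragraph.
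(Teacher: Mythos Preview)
Your proof is correct and is precisely the standard argument for the Elimination Theorem; the paper itself does not give a proof but only cites \cite[Lemma~1.8.3]{KPLUS-KLU01-000633484}, where essentially the same reasoning appears. There is nothing to add.
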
 

\begin{algorithm}
	\caption{\texttt{algDependDecompStep} (algebraic dependence decomposition step)}
	\label{algo:algDependDecompStep}
	\begin{flushleft}
		\textbf{Input:} rational function $f/g$ where $f,g\in K[\mathbf x]$ and $g=\prod_{i=1}^m q_i^{e_i}$ for irreducible $q_i\in K[\mathbf x]$
		
		\textbf{Output:} set of rational functions with sum $f/g$
	\end{flushleft}
	\begin{algorithmic}[1]
		\IF {$\textup{rank}\left(\frac{\partial q_i}{\partial x_j}\right)_{i\le m,j\le d}<m$}
		\STATE {calculate the reduced Gr\"obner basis $G$ of $\left<y_1-q_1^{e_1}, \dots, y_m-q_m^{e_m}\right>\subseteq K[\mathbf x,\mathbf y]$\\ with respect to an elimination ordering for $x_1,\dots, x_d$ ($\mathbf y=(y_1,\dots,y_m)$)}
		\STATE {$G'=G\cap K[\mathbf y]$}
		\STATE {$p=$ some element of $G'$ (choose a ``simple'' one, e.g. with smallest degree)}
		\STATE {write $p=c_{\alpha}y^{\alpha}+\sum_{\beta}c_{\beta}y^{\beta}$ where $y^\alpha$ has minimal degree}
		\RETURN {$\left\{-\frac{c_\beta}{c_\alpha}f\prod_{i=1,i\not=k}^m q_i^{e_i\cdot(\beta_i-\alpha_i-1)}\middle|\beta\in\mathbb{N}^m\right\}$}
		\ELSE
		\RETURN {$\left\{\frac f g\right\}$}
		\ENDIF
	\end{algorithmic}
\end{algorithm}

In order to calculate the rank of the Jacobian matrix in line $1$ of Algorithm \ref{algo:algDependDecompStep}, we can test, if the syzygy module of the $K[\mathbf x]$-module generated by the rows of the Jacobian matrix (that is the module of all $K[\mathbf x]$-linear relations of the rows) is zero (e.g. with the \textsc{Singular} command \texttt{syz}). Instead of calculating the rank of the Jacobian, we could also just check whether $G'$ is empty, however the derivatives $\frac{\partial q_i}{\partial x_j}$ are in general of much lower degree than $q_i^{e_i}$, so using the Jacobian criterion is cheaper, especially for small factors $q_i$. Also, if $d<m$ the criterion becomes trivial since the rank is at most $d$.

The previous two strategies to decompose a rational function only decrease the size of the denominators while leaving the numerator mostly untouched. To simplify the numerators as well, it makes sense to do a (reduced) division with remainder of the numerator $f$ by a Gr\"obner basis $G$ of the ideal $\left<q_1,\dots,q_m\right>$ generate by all the irreducible factors in the denominator. This gives a division expression \begin{equation}f=r+\sum_{g\in G}a_gg\qquad(r,a_g\in K[\mathbf x])\end{equation}
as in Definition \ref{defn:divrem_GB}. Rewriting this in terms of the ideal generators $q_1,\dots,q_m$ we get 
\begin{align}\label{eq:numeratorDecomp}
f&=r+\sum_{k=1}^mb_kq_k\qquad(b_k\in K[\mathbf x])\nonumber\\
\Rightarrow\quad\frac f g &=\frac{r}{\prod_{i=1}^m q_i^{e_i}} + \sum_{k=1}^m \frac{b_k}{q_k^{(e_k-1)}\prod_{i=1,i\not=k}^mq_i^{e_i}}
\end{align}
The first term with numerator $r$ already fulfills condition $(3)$ of Theorem \ref{thm:pfd}  and all other terms have one irreducible factor $q_i$ less in their denominator. Thus repeated application of Algorithm \ref{algo:numeratorDecompStep} results in a decomposition satisfying $(3)$.

\begin{algorithm}
	\caption{\texttt{numeratorDecompStep} (numerator decomposition step)}
	\label{algo:numeratorDecompStep}
	\begin{flushleft}
		\textbf{Input:} rational function $f/g$ where $f,g\in K[\mathbf x]$ and $g=\prod_{i=1}^m q_i^{e_i}$ for irreducible $q_i\in K[\mathbf x]$
		
		\textbf{Output:} set of rational functions with sum $f/g$
	\end{flushleft}
	\begin{algorithmic}[1]
		\STATE {calculate the reduced Gr\"obner basis $G$ of $\left<q_1, \dots, q_m\right>$ as well as the transformation matrix $T$ from the generators $q_i$ to $G$}
		\STATE {divide $f$ by $G$ (reduced division with remainder) to get a division expression\\ $f=r+\sum_{g\in G}a_gg$ where $r,a_g\in K[x_1,\dots,x_d]$ and $r$ is reduced w.r.t. $G$}
		\STATE {using $T$ and $(a_g)_{g\in G}$, find polynomials $b_k$ such that $f=r+\sum_{k=1}^mb_kq_k$}
		\STATE {replace $(b_1,\dots,b_m)\in K[\mathbf x]^m$ by its remainder after division by a Gr\"obner basis of the syzygy module of $(q_1,\dots,q_m)$}
		\RETURN {$\left\{\frac{b_k}{q_k^{(e_k-1)}\prod_{i=1,i\not=k}^mq_i^{e_i}}\middle| k=1, \dots, m, \;b_k\not=0\right\}\cup\left\{\frac{r}{g}\right\}$}
	\end{algorithmic}
\end{algorithm}

\begin{rem}[syzygy reduction]\label{rem:syzygy_reduction}
	In order to further simplify the coefficients $b_k$, in line $4$ of Algorithm \ref{algo:numeratorDecompStep} we replace $(b_1,\dots,b_m)\in K[\mathbf x]^m$ by its remainder after division by (a Gr\"obner basis of) the syzygy module 
	$\left\{(s_1,\dots,s_m)\middle|\sum_{i=1}^ms_iq_i=0\right\}\subset K[\mathbf x]^m$
	of $(q_1,\dots,q_m)$
	(extending the notions of monomial orderings, Gr\"obner bases and division with remainder from $K[\mathbf x]$ to the $K[\mathbf x]$-module $K[\mathbf x]^m$ as described in \cite[Chapter 2.3]{KPLUS-KLU01-000633484}). After doing this, the polynomials $b_k$ will still satisfy Equation (\ref{eq:numeratorDecomp}), since we just changed $(b_1,\dots,b_m)$ by an element $(s_1,\dots,s_m)$ of the syzygy module and $\sum_{i=1}^ms_iq_i=0$. This step is optional, but in practice we found, that reducing the coefficients by the syzygy module can dramatically reduce the runtime of Algorithm \ref{algo:pfd}. In \textsc{Singular} we can just apply the procedures \texttt{syz} and \texttt{std} to calculate a Gr\"obner basis of the syzygy module and reduce the coefficients $b_k$ with \texttt{reduce}.
\end{rem}

Using all three decomposition techniques one after the other yields Algorithm \ref{algo:pfd} which calculates a partial fraction decomposition fulfilling conditions $(1)$, $(2)$ and $(3)$, finally proving Theorem \ref{thm:pfd}.

%\begin{rem}[short numerator decomposition]\label{rem:short_num_decomp}
However, in practice the calculation of annihilating polynomials can be quite slow if the degrees of the polynomials $q_i^{e_i}$ get too big. Therefore it is more efficient to do an additional ``short'' numerator decomposition before the algebraic dependence decomposition (see Algorithm \ref{algo:pfd}), in order to simplify the denominators. For this we repeatedly apply Algorithm \ref{algo:shortNumeratorDecompStep}, which is identical to Algorithm \ref{algo:numeratorDecompStep} with the exception, that whenever the remainder $r$ is nonzero, we return the input and do not decompose further since the term corresponding to $r$ would not have a smaller denominator anyway. Note that this is only effective, because most of the rational functions we are interested in (i.e. those arising from IBP-reductions) have the property, that the numerator is already contained in the ideal $\left<q_1,\dots,q_m\right>$, such that the remainder $r$ becomes $0$. Thus, while it is not needed in order to get a decomposition fulfilling the conditions of Theorem \ref{thm:pfd}, the insertion of a short numerator decomposition (lines $6$ to $8$ in Algorithm \ref{algo:pfd}) reduces runtimes.
%Correctness of the result and termination of the algorithm follow directly from Lemma \ref{lem:termination} since, as shown above, the short numerator decomposition terminates and preserves the property \ref{thm:pfd} $(1)$.
%\end{rem}

\begin{algorithm}
	\caption{\texttt{shortNumeratorDecompStep} (short numerator decomposition step)}
	\label{algo:shortNumeratorDecompStep}
	\begin{flushleft}
		\textbf{Input:} rational function $f/g$ where $f,g\in K[\mathbf x]$ and $g=\prod_{i=1}^m q_i^{e_i}$ for irreducible $q_i\in K[\mathbf x]$
		
		\textbf{Output:} set of rational functions with sum $f/g$
	\end{flushleft}
	\begin{algorithmic}[1]
		\STATE {calculate the reduced Gr\"obner basis $G$ of $\left<q_1, \dots, q_m\right>$ as well as the transformation matrix $T$ from the generators $q_i$ to $G$}
		\STATE {divide $f$ by $G$ (reduced division with remainder) to get a division expression\\ $f=r+\sum_{g\in G}a_gg$ where $r,a_g\in K[x_1,\dots,x_d]$ and $r$ is reduced w.r.t. $G$}
		\IF {$r\not=0$}
		\RETURN {$\left\{\frac f g\right\}$}
		\ENDIF
		\STATE {using $T$ and $(a_g)_{g\in G}$, find polynomials $b_k$ such that $f=r+\sum_{k=1}^mb_kq_k$}
		\STATE {replace $(b_1,\dots,b_m)\in K[\mathbf x]^m$ by its remainder after division by a Gr\"obner basis of the syzygy module of $(q_1,\dots,q_m)$}
		\RETURN {$\left\{\frac{b_k}{q_k^{(e_k-1)}\prod_{i=1,i\not=k}^mq_i^{e_i}}\middle| k=1, \dots, m, \;b_k\not=0\right\}$}
	\end{algorithmic}
\end{algorithm}

\begin{algorithm}
	\caption{Partial fraction decomposition}
	\label{algo:pfd}
	\raggedright
	\begin{flushleft}
		\textbf{Input:} rational function $f/g$ where $f,g\in K[x_1,\dots,x_d]$
		
		\textbf{Output:} partial fraction decomposition as a set of rational functions.
	\end{flushleft}
	\begin{algorithmic}[1]
		\STATE {factorize $g=\prod_{i=1}^m q_i^{e_i}$ where $q_i\in K[x_1, \dots, x_d]$ are irreducible\\ (and represent all denominators in the following steps in factorized form)}
		
		\STATE {$D=\{f/g\}$}
		\WHILE {$\exists s\in D$ such that $\left|\texttt{NSSdecompStep}(s)\right|>1$} 
		\STATE {$D=\texttt{NSSdecompStep}(s)\cup D\backslash\{s\}$\\and merge elements of $D$ with equal denominators}
		\ENDWHILE
		
		%\WHILE {$\exists s\in D$ such that $\left|\texttt{numeratorDecompStep}(s)\right|>1$\\ and the %remainder $r$ computed in \texttt{numeratorDecompStep} is zero} 
		%\STATE {$D=\texttt{numeratorDecompStep}(s)\cup D\backslash\{s\}$\\and merge elements of $D$ %with equal denominators}
		%\ENDWHILE
		\WHILE {$\exists s\in D$ such that $\left|\texttt{shortNumeratorDecompStep}(s)\right|>1$} 
		\STATE {$D=\texttt{shortNumeratorDecompStep}(s)\cup D\backslash\{s\}$\\and merge elements of $D$ with equal denominators}
		\ENDWHILE
		
		\WHILE {$\exists s\in D$ such that $\left|\texttt{algDependDecompStep}(s)\right|>1$} 
		\STATE {$D=\texttt{algDependDecompStep}(s)\cup D\backslash\{s\}$\\and merge elements of $D$ with equal denominators}
		\ENDWHILE
		
		\WHILE {$\exists s\in D$ such that $\left|\texttt{numeratorDecompStep}(s)\right|>1$} 
		\STATE {$D=\texttt{numeratorDecompStep}(s)\cup D\backslash\{s\}$\\and merge elements of $D$ with equal denominators}
		\ENDWHILE
		
		\RETURN {$D$}
	\end{algorithmic}
\end{algorithm}

\begin{lem}\label{lem:termination}
	Algorithm \ref{algo:pfd} terminates for any input $f/g$ and returns a partial fraction decomposition of $f/g$ satisfying all three conditions in Theorem \ref{thm:pfd}.
\end{lem}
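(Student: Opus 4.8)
I would split the statement into three largely independent tasks. First, show that every step preserves the value of the rational function, so that the current set $D$ is at all times a decomposition of $f/g$. Second, prove that each of the four \texttt{WHILE}-loops of Algorithm \ref{algo:pfd} executes only finitely often. Third, check that once the $j$-th loop exits the decomposition satisfies condition $(j)$ of Theorem \ref{thm:pfd}, and that the later loops do not undo the conditions established earlier.

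\textbf{Value preservation.}
This is the routine part. Since $D$ is initialised as $\{f/g\}$, it is enough that each of \texttt{NSSdecompStep}, \texttt{shortNumeratorDecompStep}, \texttt{algDependDecompStep}, \texttt{numeratorDecompStep} returns a set of rational functions whose sum equals its input. For \texttt{NSSdecompStep} this is exactly Equation \eqref{eq:NSSdecomp}, for the two numerator steps it is Equation \eqref{eq:numeratorDecomp}, and for \texttt{algDependDecompStep} it is Equation \eqref{eq:algDependDecomp}; the syzygy reduction appearing in Algorithm \ref{algo:numeratorDecompStep} does not change the sum by Remark \ref{rem:syzygy_reduction}, and merging two summands with equal denominator is trivially value-preserving. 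Hence the invariant $\sum_{s\in D}s=f/g$ is maintained throughout. Moreover, inspecting the four steps shows that every denominator that ever occurs is a product of powers of a subset of the irreducible factors $q_1,\dots,q_m$ of the original $g$; in particular no new irreducible factor is ever introduced, so $D$ always has the shape displayed in Theorem \ref{thm:pfd}.

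\textbf{Termination.}
This is the core of the argument, and I would assign to each loop a well-founded measure given by a multiset over a well-founded ordered set (the multiset extension of a well-founded order being itself well-founded). For loops $1$ and $3$ I would take the finite multiset formed by the number of distinct irreducible factors of the denominator of $s$, over all $s\in D$; these are integers in $\{0,\dots,m\}$. An application of \texttt{NSSdecompStep} (resp.\ \texttt{algDependDecompStep}) to a summand whose denominator has $m'$ distinct irreducible factors is carried out only when the Nullstellensatz test of Lemma \ref{lem:NSS_cert} (resp.\ the Jacobian test of Lemma \ref{lem:jac_crit}, together with Corollary \ref{cor:alg_dep}) succeeds, and then, by the discussion accompanying Equations \eqref{eq:NSSdecomp} and \eqref{eq:algDependDecomp}, replaces that summand by finitely many summands whose denominators have at most $m'-1$ distinct irreducible factors; merging can only lower such counts. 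This is a strict decrease of the multiset. For loop $2$ the same works with the multiset of denominator \emph{degrees}, since when \texttt{shortNumeratorDecompStep} genuinely decomposes one has $r=0$ and the new denominators $q_k^{e_k-1}\prod_{i\neq k}q_i^{e_i}$ have degree strictly below $\deg g$. Loop $4$ needs extra care because the residual summand $r/g$ keeps the denominator $g$; there I would use the multiset of lexicographically ordered pairs $\bigl(\deg(\text{denominator of }s),\,\varepsilon(s)\bigr)$, with $\varepsilon(s)=0$ exactly when the numerator of $s$ is already reduced with respect to a Gr\"obner basis of the ideal generated by the irreducible factors of its denominator. A productive call of \texttt{numeratorDecompStep}, which happens precisely when $\varepsilon(s)=1$, returns $r/g$ with $\varepsilon=0$ and further summands of strictly smaller denominator degree; using that sums of polynomials that are reduced modulo a fixed Gr\"obner basis are again reduced, one checks that every pair present after the syzygy reduction and the merging lies lexicographically below the removed pair $(\deg g,1)$. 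Each loop therefore iterates only finitely often, and since the loops run sequentially, Algorithm \ref{algo:pfd} halts.

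\textbf{The three conditions, and the main obstacle.}
When loop $1$ exits, \texttt{NSSdecompStep} decomposes no summand, so for each summand $f_S/\prod_{i\in S}q_i^{e_i}$ the ideal $\langle q_i^{e_i}\mid i\in S\rangle$ is not the unit ideal; since $V(q_i^{e_i}\mid i\in S)=V(q_i\mid i\in S)$, Lemma \ref{lem:NSS_cert} gives a common zero of $\{q_i\mid i\in S\}$ in $\overline K^d$, which is condition $(1)$. When loop $3$ exits, the Jacobian of every such $\{q_i\mid i\in S\}$ has full rank $|S|$, so by Lemma \ref{lem:jac_crit} that set is algebraically independent, which is condition $(2)$. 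When loop $4$ exits (under the convention, built into the displayed output of Algorithm \ref{algo:numeratorDecompStep}, that the residual term $r/g$ is always retained) all coefficients $b_k$ vanish for every summand, hence each numerator equals its reduction $r$ and so is reduced with respect to $\langle q_i\mid i\in S\rangle$, which is condition $(3)$. Finally, since loops $2$, $3$ and $4$ only replace a denominator by one supported on a subset of the same irreducible factors, and ``having a common zero'' as well as ``being algebraically independent'' are inherited by subsets, conditions $(1)$ and $(2)$ persist once they hold. The delicate points I expect to cost the most work are the termination of the two numerator loops---in particular the book-keeping that merging never raises the chosen measure---and pinning down the treatment of zero summands and of the residual term $r/g$, on which the equivalence ``the step no longer decomposes $\iff$ the matching condition of Theorem \ref{thm:pfd} already holds'' depends.
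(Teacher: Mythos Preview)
Your proposal is correct and follows essentially the same route as the paper's own proof: value preservation is handled by the displayed identities \eqref{eq:NSSdecomp}, \eqref{eq:algDependDecomp}, \eqref{eq:numeratorDecomp}; termination of each loop is argued by a well-founded descent on the number of distinct denominator factors (loops~1 and~3) and on a denominator-size measure together with the observation that the residual term $r/g$ is already reduced and hence inert (loops~2 and~4); and conditions $(1)$--$(3)$ are read off from the exit criteria, with persistence of $(1)$ and $(2)$ because later steps only delete denominator factors. The only cosmetic differences are that the paper phrases the descent as ``induction on $m$'' and ``induction on $\sum_i e_i$'' rather than via multiset orders, and uses $\sum_i e_i$ where you use the denominator degree---these are equivalent since each $q_i$ has positive degree.
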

\begin{proof}
	As shown above, Algorithms $1$ to $4$ applied to any rational function $f/g$ always return a set of rational functions with sum $f/g$. So in Algorithm \ref{algo:pfd}, at all times the elements of $D$ sum up to the input of the algorithm.
	
	It is also easy to see, that if Algorithm \ref{thm:pfd} terminates, the returned decomposition $D$ indeed fulfils conditions $(1)$ to $(3)$: If no term of the decomposition is decomposed further when applying Algorithm \ref{algo:NSSdecompStep} (i.e. the first \texttt{while} loop terminates), then by Lemma \ref{lem:NSS_cert} in each denominator the irreducible factors $q_i$ have a common zero. Similarly, if a rational function is not decomposable by Algorithm \ref{algo:algDependDecompStep}, then by Lemmata \ref{lem:annpoly} and \ref{lem:elimord} and Corollary \ref{cor:alg_dep} the $q_i$ are algebraically independent. And finally, if a rational function $f/g$ is not decomposed further in Algorithm \ref{algo:numeratorDecompStep}, this means that $f=r$ in Equation (\ref{eq:numeratorDecomp}) and thus the numerator is reduced with respect to the ideal generated by the factors $q_i$ in the denominator.
	
	Also, Algorithm \ref{algo:algDependDecompStep}, \ref{algo:numeratorDecompStep} and \ref{algo:shortNumeratorDecompStep} only ever change the denominators by removing irreducible factors (and changing their exponents) and thus preserve the properties $(1)$ and $(2)$ in Theorem \ref{thm:pfd}. (If $q_1,\dots,q_m$ have a common zero or are algebraically independent, then so are $q_1,\dots,q_{m-1}$.)
	
	It remains to show that all \texttt{while} loops terminate. As argued above, each term in the decomposition returned by Algorithm \ref{algo:NSSdecompStep} has fewer different irreducible factors in its denominator than the input and thus after applying \texttt{NSSdecompStep} to each element of $D$, all terms have at most $m-1$ different irreducible factors $q_i$ and thus the first loop terminates by induction on $m$. The same argument also works for \texttt{algDependDecompStep}. In \texttt{numeratorDecompStep} each element of the returned decomposition has one less irreducible factor in its denominator than the input with the exception of the term corresponding to the remainder $r$ in Equation (\ref{eq:numeratorDecomp}), which has the same denominator as the input. However terms of this form are not decomposed further (since in Equation (\ref{eq:numeratorDecomp}) $r$ is already reduced with respect to $\left<q_1,\dots,q_m\right>$) and can thus be disregarded in the argument. Now by induction on $\sum_{i=1}^me_i$ the fourth loop terminates as well. A very similar argument works for \texttt{shortNumeratorDecompStep}.
\end{proof}

Since the calculation of annihilating polynomials can still be quite slow for some of the more complicated rational functions, we make the following modification to the algorithm.

\begin{rem}[simplified \texttt{algDependDecompStep}] \label{rem:simplified_algDep}
	In Algorithm \ref{algo:algDependDecompStep} it is also possible to use an annihilating polynomial for $q_1,\dots,q_m$ rather than $q_1^{e_1},\dots,q_m^{e_m}$. Instead of Equation~\ref{eq:algDependDecomp} we then get the decomposition 
	\begin{equation}\label{eq:simple_algDependDecomp}
	\frac f g=-\sum_{\beta} \frac{c_\beta}{c_\alpha}f\prod_{i=1}^m\frac{q_i^{\beta_i}}{q_i^{\alpha_i+e_i}}.
	\end{equation}
	where $p=c_\alpha\mathbf y^\alpha+\sum_\beta c_\beta \mathbf y^\beta$ is the annihilating polynomial and $c_\alpha y^\alpha$ a term of minimal degree  as in Equation (\ref{eq:alpha}).
	Since the polynomials $q_i$ are of lower degree than $q_i^{e_i}$, this will speed up the calculation of annihilating polynomials at the cost of needing more steps in the algebraic dependence decomposition in Algorithm \ref{algo:pfd}, since the number of different irreducible denominator factors then does not decrease in every step. (If $\beta_i<\alpha_i+e_i$ for all $i$, it stays the same.) In fact it is not at all clear that Algorithm \ref{algo:pfd} terminates with the simplified \texttt{algDependDecompStep} and indeed this depends on the choice of $\alpha$.
	However, if $\alpha$ is chosen minimal with respect to the graded reverse lexicographic ordering $>_{grevlex}$ on $K[\mathbf y]$ as defined in Equation (\ref{eq:grevlex_defn}), it can be shown, that Algorithm \ref{algo:pfd} still terminates with a correct decomposition:
\end{rem}

\begin{proof} 
	All we have to show is that the third \texttt{while} loop in Algorithm \ref{algo:pfd} terminates, the rest follows as in the proof of Lemma \ref{lem:termination}. For this, take any sequence $f_1/g_1,f_2/g_2,\dots$ of rational functions such that $f_{i+1}/g_{i+1}$ is one of the terms in $\texttt{algDependDecompStep}(f_i/g_i)$. It is enough to show, that in each such sequence eventually a rational function is reached, which has fewer \textit{different} factors in its denominator or satisfies $(2)$ already. Assume this is not the case. Then all the denominators $g_i$ have the same irreducible factors (with different exponents). Thus in each call of the simplified \texttt{algDependDecompStep} the same annihilating polynomial $p$ is chosen (assuming a deterministic implementation of the algorithm). Write 
	\begin{equation}
	p=c_\alpha\mathbf y^\alpha+\sum_{j=1}^{r+s}c_{\beta^{(j)}}\mathbf y^{\beta^{(j)}} \qquad(r,s\in\mathbb N)
	\end{equation}
	where $|\beta^{(j)}|=|\alpha|$ for $j\le r$ and $|\beta^{(j)}|>|\alpha|$ for $j>r$. Since $\mathbf y^\alpha$ is minimal with respect to $>_{grevlex}$, for each $j=1,\dots,r$ there exists an index $k_j\in\{1,\dots,m\}$ such that 
	\begin{equation}\label{eq:alpha_beta}
	\alpha_{k_j}>\beta^{(j)}_{k_j} \text{ and } \alpha_i=\beta^{(j)}_i \quad\text{ for all } i>k_j.
	\end{equation}
	Without loss of generality we may assume that $k_1\ge\dots\ge k_r$. 
	If we factorize the denominators as $g_i=\prod_{l=1}^mq_l^{e_l^{(i)}}$ ($e_l^{(i)}\in\mathbb N$), then it holds 
	\begin{equation}\label{eq:dec_step}
	\frac{f_{i+1}}{g_{i+1}}=- \frac{c_{\beta^{(j_i)}}}{c_\alpha}f_i\prod_{l=1}^m\frac{q_l^{\beta_l^{(j_i)}}}{q_l^{\alpha_l+e^{(i)}_l}}\quad
	\end{equation}
	for some index $j_i$ depending on $i$. Since we assumed that in $g_{i+1}$ no irreducible factor vanishes, it holds
	\begin{equation}\label{eq:exponents_formula}
	e^{(i+1)}_l=\alpha_l+e^{(i)}_l-\beta^{(j_i)}_l>0\qquad(l=1,\dots,m)
	\end{equation}
	If $j_i>r$, then the sum of all exponents decreases: $\sum_{l=1}^me_l^{(i+1)}<\sum_{l=1}^me_l^{(i)}$. If $j_i\le r$, it stays the same. Thus the case $j_i>r$ can only occur for finitely many $i$ and after that it always holds $j_i\le r$, but then $|\beta^{(j_i)}|=|\alpha|$ and by Equations (\ref{eq:alpha_beta}) and (\ref{eq:exponents_formula}) it holds 
	\begin{equation}
	\sum_{l=1}^{k_{j_i}-1}e_l^{(i+1)}<\sum_{l=1}^{k_{j_i}-1}e_l^{(i)}\quad\text{and}\quad\sum_{l=1}^{k}e_l^{(i+1)}=\sum_{l=1}^{k}e_l^{(i)}\quad\text{for any }k\ge k_{j_i}
	\end{equation}
	Since the exponents $e_l^{(i)}$ have to stay positive and $k_1\ge\dots\ge k_r$, in the sequence $\beta^{(j_1)},\beta^{(j_2)},\dots$ the multi-index $\beta^{(1)}$ can only appear finitely often and after that $\beta^{(2)}$ can only appear finitely often and so on. But $f_1/g_1,f_2/g_2,\ldots$ was an infinite sequence, a contradiction.
	
	Thus in the sequence $f_1/g_1,f_2/g_2,\dots$ the number of different irreducible factors in the denominators decreases until a rational function is reached, that satisfies condition $(2)$ and the third \texttt{while} loop terminates after finitely many iterations.
\end{proof}

\begin{rem} The multivariate partial fractioning can be combined with the rational reconstruction scheme for commutative algebra developed in \cite{BDFP15,BDFLP16} as long as a consistent factorization patterns can be guaranteed.
\end{rem}

\begin{exmp}
	Consider the rational function
	\begin{equation}\frac{-2s_{12}s_{23}\varepsilon_5+3s_{23}s_{34}\varepsilon_5+s_{15}s_{45}\varepsilon_5+s_{23}s_{45}\varepsilon_5-s_{34}s_{45}\varepsilon_5}{8s_{12}s_{23}(-s_{15}+s_{23}+s_{34})(s_{12}-s_{45})s_{45}}\end{equation}
	This is the $(24,7)$-th entry of the IBP-matrix for the double pentagon (see Section \ref{sec:example}).\footnote{$\varepsilon_5\equiv 4 i\epsilon_{\mu\nu\rho\sigma} k_1^{\mu}k_2^{\nu}k_3^{\rho}k_4^{\sigma}$ for the five-point kinematics. This is a square root of a polynomial in Mandelstam variables $s_{ij}$'s. Here we can treat it as an irrelevant overall factor.} For better readability, replace $s_{12},s_{15},s_{23},s_{34},s_{45}$ by $x_1,x_2,x_3,x_4,x_5$ and divide by $\varepsilon_5/8$ to get the function
	\begin{equation}\frac f g = \frac{-2x_1x_3+3x_3x_4+x_2x_5+x_3x_5-x_4x_5}{x_1x_3(-x_2+x_3+x_4)(x_1-x_5)x_5}\in\mathbb R(\mathbf x)=\mathbb R(x_1,x_2,x_3,x_4,x_5)\end{equation}
	Now apply Algorithm \ref{algo:pfd} using the graded reverse lexicographic ordering with $x_1>\dots>x_5$ and employ the modification to \texttt{algDependDecompStep} discussed in Remark \ref{rem:simplified_algDep}. For simplicity we omit the syzygy reduction step described in Remark \ref{rem:syzygy_reduction}. 
	
	\begin{description}
		\item[1. factorization of the denominator] The denominator factors as $g = q_1\cdot q_2\cdot q_3\cdot q_4\cdot q_5$ where $q_1=x_1$, $q_2=x_3$, $q_3=-x_2+x_3+x_4$, $q_4=x_1-x_5$, $q_5=x_5$.
		
		\item[2. Nullstellensatz decomposition] As with most entries of this IBP-matrix, the denominators have already a common zero, namely $0$, since none of the factors $q_i$ have a constant term. Thus Algorithm \ref{algo:NSSdecompStep} (\texttt{NSSdecompStep}) does nothing.
		
		\item[3. short numerator decomposition]   $ $\linebreak
		\underline{\textit{reduced G.B. of  $\left<q_1, q_2, q_3, q_4, q_5\right>$}:}
		$G=(x_5,x_3,x_2-x_4,x_1)=(q_5,q_2,q_2-q_3,q_4+q_5)$
		
		\underline{\textit{reduced division with remainder}:}
		$f=(x_2+x_3-x_4)\cdot x_5+(-2x_1+3x_4)\cdot x_3$\\ $\left.\right.\qquad\qquad\qquad\qquad\qquad\qquad\qquad\;\;\:\,=(x_2+x_3-x_4)\cdot q_5+(-2x_1+3x_4)\cdot q_2$
		
		\begin{equation}\Rightarrow\quad\frac f g=\frac{x_2+x_3-x_4}{q_1\cdot q_2\cdot q_3\cdot q_4}+\frac{-2x_1+3x_4}{q_1\cdot q_3\cdot q_4\cdot q_5}=\frac{f_1}{g_1}+\frac{f_2}{g_2}\end{equation}

		\underline{\textit{reduced G.B. of  $\left<q_1, q_2, q_3, q_4\right>$}:}
		$G=(x_5,x_3,x_2-x_4,x_1)=(q_1-q_4,q_2,q_2-q_3,q_1)$
		
		\underline{\textit{reduced division with remainder}:}
		$f_1=1\cdot x_3+1\cdot (x_2-x_4)=2\cdot q_2+(-1)\cdot q_3$
		
		\begin{equation}\Rightarrow\quad\frac{f_1}{g_1}=\frac{2}{q_1\cdot q_3\cdot q_4}+\frac{-1}{q_1\cdot q_2\cdot q_4}\end{equation}
		
		\underline{\textit{reduced G.B. of  $\left<q_1, q_3, q_4, q_5\right>$}:}
		$G=(x_5,x_2-x_3-x_4,x_1)=(q_5,-q_3,q_4+q_5)$
		
		\underline{\textit{reduced division with remainder}:}
		$f_2=(-2)\cdot x_1+3x_4=(-2)\cdot q_4+(-2)\cdot q_5+3x_4$
		
		The remainder $3x_4$ is nonzero, so we do not decompose $f_2/g_2$ further and get in total
		\begin{equation}\frac f g = \frac{2}{q_1\cdot q_3\cdot q_4}+\frac{-1}{q_1\cdot q_2\cdot q_4}+\frac{-2x_1+3x_4}{q_1\cdot q_3\cdot q_4\cdot q_5}\end{equation}
		
		\item[4. (simplified) algebraic dependence decomposition] For the first two terms the Jacobian matrices have full rank and therefore Algorithm \ref{algo:algDependDecompStep} (\texttt{algDependDecompStep}) does nothing. For the third term the Jacobian matrix is
		\begin{equation}\left(\frac{\partial q_i}{\partial x_j}\right)_{\begin{subarray}{l} i=1,3,4,5\\j=1,2,3,4,5\end{subarray}}=\begin{pmatrix}1&0&0&0&0\\ 0&-1&1&1&0\\ 1&0&0&0&-1\\ 0&0&0&0&1\end{pmatrix}\end{equation}
		and has only rank $3$, so $q_1,q_3,q_4,q_5$ are algebraically dependent. Indeed it is obvious, that $q_1=x_1$, $q_4=x_1-x_5$ and $q_5=x_5$ are even linearly dependent and thus a possible annihilating polynomial for $q_1,q_3,q_4,q_5$ is 
		\begin{equation}p = y_1-y_3-y_4\in\mathbb R[y_1,y_2,y_3,y_4]\end{equation}
		leading to the relation 
		\begin{equation}1=\frac{q_1-q_4}{q_5}\quad\Rightarrow\quad\frac{-2x_1+3x_4}{q_1\cdot q_3\cdot q_4\cdot q_5}=\frac{-2x_1+3x_4}{ q_3\cdot q_4\cdot q_5^2}+\frac{2x_1-3x_4}{q_1\cdot q_3\cdot q_5^2}\end{equation}
		Now $\{q_3,q_4,q_5\}$ as well as $\{q_1,q_4,q_5\}$ are algebraically independent and we are done. Note that the exponent of $q_5$ increased and the number of irreducible factors in the denominators stayed the same ($4$), but the number of \emph{different} irreducible factors decreased from $4$ to $3$. Overall, we now have
		\begin{equation}\frac f g = \frac{2}{q_1\cdot q_3\cdot q_4}+\frac{-1}{q_1\cdot q_2\cdot q_4}+\frac{-2x_1+3x_4}{ q_3\cdot q_4\cdot q_5^2}+\frac{2x_1-3x_4}{q_1\cdot q_3\cdot q_5^2}\end{equation}
		and all terms fulfil conditions $(1)$ and $(2)$ of Theorem \ref{thm:pfd}.
		
		\item[5. numerator decomposition] 
		The first two numerators ($2$ and $-1$) are obviously reduced with respect to $\left<q_1,q_3,q_4\right>$ and $\left<q_1,q_2,q_4\right>$ respectively. So Algorithm \ref{algo:numeratorDecompStep} \linebreak (\texttt{numeratorDecompStep}) does nothing. For the third and fourth term we get:
		
		\underline{\textit{reduced Gr\"obner basis of  $\left<q_3, q_4, q_5\right>$}:}
		$G=(x_5,x_2-x_3-x_4,x_1)=(q_5,-q_3,q_4+q_5)$
		
		\underline{\textit{reduced division with remainder}:}
		$-2x_1+3x_4=(-2)\cdot x_1+3x_4$\\
		$\left.\right.\qquad\qquad\qquad\qquad\qquad\qquad\qquad\qquad\qquad\;\;\;\;\;=(-2)\cdot q_4+(-2)\cdot q_5+3x_4$
		\begin{equation}\Rightarrow\quad\frac{-2x_1+3x_4}{ q_3\cdot q_4\cdot q_5^2}=\frac{-2}{ q_3\cdot q_5^2}+\frac{-2}{ q_3\cdot q_4\cdot q_5}+\frac{3x_4}{ q_3\cdot q_4\cdot q_5^2}\end{equation}

		\underline{\textit{reduced Gr\"obner basis of  $\left<q_1, q_3, q_5\right>$}:}
		$G=(x_5,x_2-x_3-x_4,x_1)=(q_5,-q_3,q_1)$
		
		\underline{\textit{reduced division with remainder}:}
		$2x_1-3x_4=2\cdot x_1-3x_4=2\cdot q_1-3x_4$
		
		\begin{equation}\Rightarrow\quad\frac{2x_1-3x_4}{ q_1\cdot q_3\cdot q_5^2}=\frac{2}{ q_3\cdot q_5^2}+\frac{-3x_4}{ q_1\cdot q_3\cdot q_5^2}\end{equation}
		
		Thus, merging terms with the same denominator, we get in total
		\begin{equation}\frac f g = \frac{2}{q_1\cdot q_3\cdot q_4}+\frac{-1}{q_1\cdot q_2\cdot q_4}+\frac{-2}{ q_3\cdot q_4\cdot q_5}+\frac{3x_4}{ q_3\cdot q_4\cdot q_5^2}+\frac{-3x_4}{ q_1\cdot q_3\cdot q_5^2}\end{equation}
		where all terms satisfy conditions $(1)$ to $(3)$.
		
		\item[Example for syzygy reduction:] If we had done the syzygy reduction step, for the first step in the short numerator decomposition above, i.e. the decomposition
		\begin{equation}\frac f g=\frac{x_2+x_3-x_4}{q_1\cdot q_2\cdot q_3\cdot q_4}+\frac{-2x_1+3x_4}{q_1\cdot q_3\cdot q_4\cdot q_5}\end{equation}
		arising from the division expression
		\begin{equation}\label{eq:relation}
		f=(-2x_1+3x_4)\cdot q_2+(x_2+x_3-x_4)\cdot q_5,
		\end{equation}
		we would have to calculate a Gr\"obner basis of the syzygy module of the ideal $\left<q_1, q_2, q_3, q_4, q_5\right>$. Using \textsc{Singular} we get the reduced Gr\"obner basis
		\begin{dmath}
			\left\{\left(-1, 0, 0, 1, 1\right),
			\left(-x_3, x_5, 0, x_3, 0\right),
			\left(0, x_2-x_3-x_4, x_3, 0, 0\right),\\
			\left(-x_2+x_4, x_5, -x_5, x_2-x_4, 0\right),
			\left(-x_3, x_1, 0, 0, 0\right),\\
			\left(x_2-x_3-x_4, 0, x_1, 0, 0\right),
			\left(-x_1+x_5, 0, 0, x_1, 0\right)\right\}
		\end{dmath}
		with respect to the graded reverse lexicographic ordering with priority to monomials (see \cite[Definition~2.3.1]{KPLUS-KLU01-000633484}). The original relation (\ref{eq:relation}) corresponds to the module element $\left(0, -2x_1+3x_4, 0, 0, x_2+x_3-x_4\right)$ and dividing by the Gr\"obner basis yields the remainder $\left(-2x_3, 3x_4+2x_5, -x_5, 0, 0\right)$. So we would use the relation 
		\begin{equation}
		f=(-2x_3)\cdot q_1+(3x_4+2x_5)\cdot q_2+(-x_5)\cdot q_3
		\end{equation}
		instead of relation (\ref{eq:relation}), leading to the decomposition step
		\begin{equation}
		\frac f g=\frac{-2x_3}{q_2\cdot q_3\cdot q_4\cdot q_5}+\frac{3x_4+2x_5}{q_1\cdot q_3\cdot q_4\cdot q_5}+\frac{-x_5}{q_1\cdot q_2\cdot q_4\cdot q_5}.
		\end{equation}
		This simple example may not show, why syzygy reduction should be an advantage, since here the decomposition seems to get longer, but, as mentioned above, for more complicated input functions we observe a significant improvement of the performance of the algorithm.
	\end{description}
\end{exmp}

\section{Examples}
\label{sec:example}

In the following, we discuss the application of the partial fraction
decomposition to IBP reduction coefficients of Feynman integrals,
in examples of various complexity, and with and without use of a UT basis.

%============================================================ 
% added by Zihao, start
%============================================================

\subsection{A baby example: Nonplanar two-loop four-point 
with an external massive leg}
In this subsection, we present a ``baby'' example, one-massive crossed box, showing how partial fraction decomposition simplifies the IBP reduction coefficients. 

\begin{figure}[H]
\centering
\includegraphics[width=0.5\textwidth]{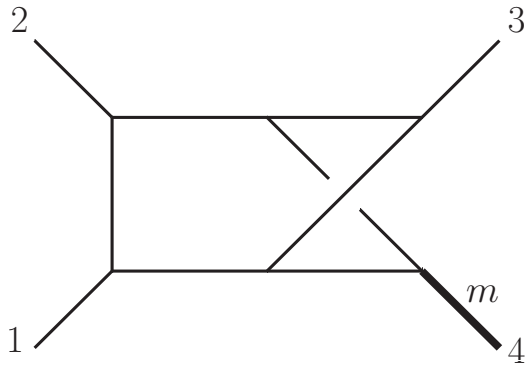}
\caption{One-massive crossed box.}
\end{figure}

The physical kinematic conditions are that $p_1,p_2$ and $p_3$ are massless, while $p_4^2=m^2$, $2p_1\cdot p_2=s$ and $2p_2\cdot p_3=t$. The propagators are
\begin{equation}
\begin{aligned}
&D_1=k_1^2,\,\,
D_2=(k_1-p_1)^2,\,\,
D_3=(k_1-p_1-p_2)^2,\,\,
D_4=k_2^2,\\
&D_5=(k_2+p_1+p_2+p_3)^2,\,\,
D_6=(k_2+k_1+p_3)^2,\,\,
D_7=(k_1+k_2)^2,\\
&D_8=(k_1-p_1-p_2-p_3)^2,\,\,
D_9=(k_2+p_1)^2\\
\end{aligned}
\end{equation}
The parameters are thus $\epsilon=(4-d)/2$, $s$, $t$, and $m^2$.

We study the IBP reduction coefficients of integrals in the sector
$(1,1,1,1,1,1,1,0,0)$ with the ISP degrees up to $5$. This is a simple
example, the IBP reduction can be easily done with LiteRed/FIRE6
\cite{Lee:2013mka, Smirnov:2019qkx}. There are $29$ master integrals. With LiteRed/FIRE6's master
integral choice, the byte size of the IBP reduction coefficients is
around 9.5MB.

We discuss the coefficients in more detail, listing the irreducible
denominator factors (poles) below:
\begin{equation}
\begin{aligned}
&\epsilon +1,\quad-2 \epsilon -1,\quad-2 \epsilon ,\quad1-2 \epsilon ,\quad2-2 \epsilon ,\quad3-2 \epsilon ,\\
&2-\epsilon ,\quad-4 \epsilon -1,\quad1-4 \epsilon ,\quad3-4 \epsilon
,\quad 1-3 \epsilon ,\quad 2-3 \epsilon ,\\
&m^2,\quad m^2-s,\quad s,\quad -10 m^2 \epsilon -6 m^2+12 s \epsilon +8 s,\\
&m^2-t,\quad m^2-s-t,\quad t,\quad s+t,\quad m^2 s-m^2t-s^2-s t,
\end{aligned}
\end{equation}
It is not surprising that there is a pole $ -10 m^2 \epsilon -6 m^2+12
s \epsilon +8 s$ , with the dependence in both
$\epsilon$ and the kinematic parameters. There is also a nonlinear
pole, $m^2 s-m^2t-s^2-s t$ occurring in the list above.

We then convert the IBP reduction coefficients to a UT basis. It is
easy to find the UT basis via leading singularity analysis or Wasser's
dlog algorithm \cite{Wasser:2018qvj,Henn:2020lye}. The IBP reduction
coefficients of the UT basis clearly have simpler poles:
\begin{equation}
\begin{aligned}
&\epsilon -2,\quad \epsilon -1,\quad 2 \epsilon -3,\quad 2 \epsilon -1,\quad 3 \epsilon -2,\quad 3 \epsilon -1,\quad 4\epsilon -3,\quad 4 \epsilon -1,\\
&m^2,\quad m^2-s,\quad s,\quad m^2-t,\quad m^2-s-t,\quad t,\quad s+t
\end{aligned}
\end{equation}
We find the previously occurring factor $ -10 m^2 \epsilon -6 m^2+12
s \epsilon +8 s$ with mixed dependence in $\epsilon$ and kinematic
variables is now absent. Furthermore, all the kinematic dependent poles
are symbol letters, as seen by a comparison with the canonical
differential equation. The previously occurring denominator factor $m^2
s-m^2t-s^2-s t$, which is {\it not} a symbol letter, is also absent.

Note that in this example, the size of the IBP coefficients with
respect to the UT
basis, is around 9.0MB. By converting to the UT basis, the byte size
of coefficients does not decrease much, but the denominator structure
becomes much simpler.

%We find out that the first line is the poles that only depend on $\epsilon$ and the second line is the poles that onl%y depend on physical variables $s, t$ and $m^2$, no mixed poles between $\epsilon$ and physical variables, no hi%gh order poles. Moreover, the poles that depend on physical variables $s, t$ and $m^2$ are all symbol letters. Thi%s indicates that in some sense, UT basis is a better basis than Laporta basis, while reducing integrals by IBP.

We then apply our implementation of our partial fractioning algorithm to
the IBP reduction coefficients, both with respect to the Laporta and the UT
basis, to simplify the coefficients. 
\begin{itemize}
\item After applying the algorithm, the size of IBP coefficients with
respect to the Laporta basis (LiteRed/FIRE6) is shortened from $9.5$MB
to $3.0$MB ($2.7$MB if indexed), simplified by about a factor of $3.4$.
\item Converting to the UT basis and then applying the algorithm, the resulting
coefficients are only of size $1.9$ MB ($1.5$
MB if indexed). With
respect to the original Laporta basis a $6.5$-times size reduction.
\end{itemize}
This example indicates that our method works for both the Laporta basis and
UT basis, but the size reduction ratio is larger for UT
basis. Since this is a baby example, our method runs fast in both cases.

\subsection{A cutting-edge example: Nonplanar two-loop five-point }

In this section, we present a computationally cutting-edge example, the two-loop five-point nonplanar double pentagon. The diagram is shown in Figure \ref{double pentagon}.
\begin{figure}[H]
\label{double pentagon}
\centering
\includegraphics[width=0.5\textwidth]{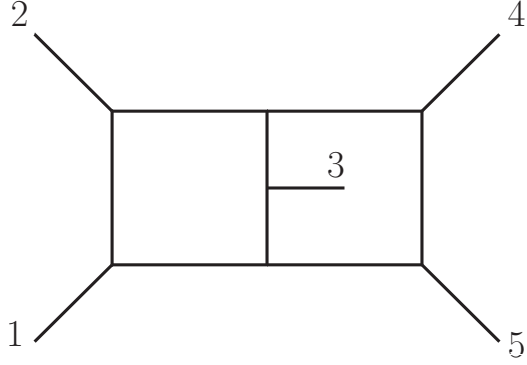}
\caption{2-loop 5-point nonplanar double pentagon}
\end{figure}
All external and internal lines are massless. The kinematic conditions are $2p_1\cdot p_2=s_{12}$, $2p_2\cdot p_3=s_{23}$, $2p_3\cdot p_4=s_{34}$, $2p_4\cdot p_5=s_{45}$ and $2p_1\cdot p_5=s_{15}$. The propagators are
\begin{equation}
\begin{aligned}
&D_1=l_1^2,\quad
D_2=(l_1-p_1)^2,\quad
D_3=(l_1-p_{12})^2,\quad
D_4=l_2^2,\\
&D_5=(l_2-p_{123})^2,\quad
D_6=(l_2-p_{1234})^2,\quad
D_7=(l_1-l_2)^2,\\
&D_8=(l_1-l_2+p_3)^2,\quad
D_9=(l_1-p_{1234})^2,\quad
D_{10}=(l_2-p_1)^2,\\
&D_{11}=(l_2-p_{12})^2.
\end{aligned}
\end{equation}
Where $p_{i\dots j}=\sum_{k=i}^j p_k$. A UT basis for this diagram and its symbol form was found in ref.~\cite{Abreu:2018aqd,
Chicherin:2018old}, and the analytic expressions for the master
integrals were obtained in ref.~\cite{Chicherin:2018old}.

Relying on the module intersection IBP reduction method and its 
implementation in the \textsc{Singular}-\textsc{GPI-Space} framework for massively parallel computations, the analytic IBP reduction coefficients were calculated for
the integrals with ISP up to the degree $4$ in the sector
$(1,1,1,1,1,1,1,1,0,0,0)$ in ref.~\cite{Bendle:2019csk}.\footnote{We
note that during the preparation of this manuscript, the degree-$5$
analytic IBP reduction of the same diagram, was calculated in a recent paper
\cite{Klappert:2020nbg} by the reduction of a new type of IBP system
in the block triangular form \cite{Guan:2019bcx}.} The size of
the IBP reduction coefficients with respect to a Laporta basis is
$2.4$GB (with all parameters analytic).

When reducing target integrals to the Laporta basis, we found some
``mixed'' denominator factors in the coefficients, which are mixtures
of the spacetime parameter $\epsilon$ and kinematic variables. They are listed in the following:
\begin{equation}
\begin{aligned}
&3s_{12} \epsilon -s_{23}\epsilon -s_{12}+s_{23},\\
&3s_{12}\epsilon -s_{23} \epsilon -4s_{34} \epsilon -3 s_{45} \epsilon+s_{23}+s_{34},\\
&3s_{12} \epsilon+4 s_{23} \epsilon +s_{34}\epsilon -3 s_{45} \epsilon-s_{23}-s_{34},\\
&3s_{12} \epsilon+8 s_{23} \epsilon -3 s_{45}\epsilon -s_{12}-4 s_{23}+s_{45},\\
&4s_{12} \epsilon +s_{23} \epsilon-s_{45} \epsilon -2s_{12}-s_{23}+s_{45}
\end{aligned}
\end{equation}

As we have observed in ref.~\cite{Bendle:2019csk}, if we reduce the target
integrals to the UT basis, the size of coefficients is reduced to
$712$MB. More importantly, the IBP reduction coefficients with respect
to UT basis have no ``mixed'' poles and all kinematic denominators are
symbol letters. The irreducible factors in the IBP reduction
coefficients with respect to the UT basis are given as,
\begin{equation}\label{dpUTpole}
\begin{aligned}
&\epsilon -1,\quad
2 \epsilon -1,\quad
3\epsilon -1,\quad
4 \epsilon -1,\quad
4\epsilon+1,\quad
s_{12},\quad
s_{15},\quad
s_{15}-s_{23},\\
&s_{23},\quad
s_{12}+s_{23},\quad
s_{12}-s_{34},\quad
s_{12}+s_{15}-s_{34},\quad
s_{15}-s_{23}-s_{34},\quad
s_{34},\\
&s_{23}+s_{34},\quad
s_{12}-s_{45},\quad
s_{23}-s_{45},\quad
s_{12}+s_{23}-s_{45},\quad
s_{12}-s_{15}+s_{23}-s_{45},\\
&s_{12}-s_{34}-s_{45},\quad
s_{12}+s_{15}-s_{34}-s_{45},\quad
s_{45},\quad
s_{15}-s_{23}+s_{45},\quad
s_{34}+s_{45},\\
&s_{15}^2s_{12}^2+s_{23}^2 s_{12}^2-2s_{15} s_{23} s_{12}^2-2 s_{23}^2s_{34} s_{12}+2 s_{15} s_{23}s_{34} s_{12}-2 s_{15}^2 s_{45}s_{12}\\&+2 s_{15} s_{23} s_{45}s_{12}+2 s_{15} s_{34} s_{45}s_{12}+2 s_{23} s_{34} s_{45}s_{12}+s_{23}^2 s_{34}^2+s_{15}^2s_{45}^2+s_{34}^2 s_{45}^2\\&-2s_{15} s_{34} s_{45}^2-2 s_{23}s_{34}^2 s_{45}+2 s_{15} s_{23}s_{34} s_{45}
\end{aligned}
\end{equation}
We see that except for the factors only in $\epsilon$, all 
other factors are (powers of) even symbol letters. (The symbol letters of all
two-loop five-point massless topology were obtained in
ref.~\cite{Chicherin:2017dob}.) Note that the last factor above is the
Gram determinant $G(1,2,3,4)$. Since
\begin{equation}
\label{eq:2}
G(1,2,3,4) = \epsilon_5^2 \equiv-16 (\epsilon_{\mu\nu\rho\sigma} k_1^{\mu}k_2^{\nu}k_3^{\rho}k_4^{\sigma})^2
\end{equation}
and $\epsilon_5$ is a symbol letter, the last factor $G(1,2,3,4)$ is a
power of a symbol letter. In addition to the computation
in ref.~\cite{Bendle:2019csk}, we further checked the IBP reduction of
the integrals with ISP up to the degree $5$ for the same diagram, and
this pole structure property still holds. %Except the quartic pole which
%is the Gram determinant $\epsilon_5^2$, all the poles are linear

%Furthermore, except for the quartic pole which
%is the Gram determinant $\epsilon_5^2$, all the poles are linear. 

At this point, it is interesting to compare the pole structure in the
UT basis with
the same double pentagon diagram reduced in the basis choice of
ref.~\cite{Usovitsch:2020jrk}. In ref.~\cite{Usovitsch:2020jrk} there
are $7$ nonlinear irreducible factors in the IBP reduction denominators, such as
$-1+s_{34}+s_{45}+s_{34} s_{45}$, $-s_{23}-s_{45}-s_{23}
s_{45}+s_{45}^2$, $s_{15}-s_{23}+s_{23} s_{34}-s_{15} s_{45}+s_{34}
s_{45}$, $\ldots$ ($s_{12}$ is set to be 1). Except the Gram
determinant $G(1,2,3,4)$, the other $6$ nonlinear factors in ref.~\cite{Usovitsch:2020jrk} are not symbol letters. 

Despite the fact that the size of the coefficients is already simplified by about
$3$ times, if we change the basis from Laporta to UT, the 
coefficients are still huge with a size of $712$MB. We now apply our improved 
Leinartas' algorithm to shorten these coefficients with respect to the
UT basis. The size of the coefficients is magically shortened to only
$24$MB ($19$MB in indexed form). Compared with the $2.4$GB IBP reduction file we started out with, those IBP
reduction coefficients are made simpler by over 100 times
\footnote{The resulting indexed partial fraction result can be
downloaded from \url{https://www.dropbox.com/s/wku5o20g0vaggtl/xb_deg4_UT_pfd.zip}.}!

As a comparison, without using the UT basis, our algorithm can also
reduce the IBP coefficients size from $2.4$G to $864$MB. However, the
reduction ratio is not as dramatic and the running time is much
longer. 

In Appendix \ref{An explicit example of the size reduction} we present a visual impression about how powerful our algorithm is: a 5-page-long coefficient is shortened to only 9 lines.

\subsection{An elliptic example: two-loop four-point with a top quark loop and a
pair of external massive legs}
Our algorithm also works well for cases without the existence of a UT
basis. In this subsection, we present an elliptic example, the
double box diagram with one massive internal loop and two massive external lines. The diagram is
\begin{figure}[H]
\centering
\includegraphics[width=0.5\textwidth]{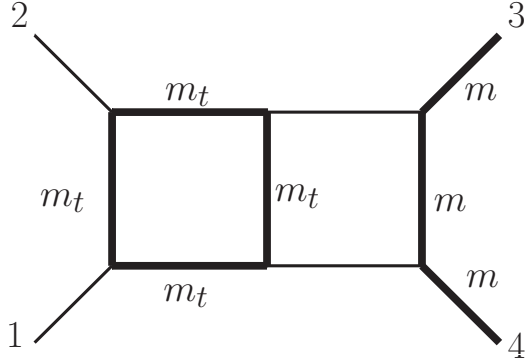}
\caption{elliptic double box}
\end{figure}
The kinetic conditions are $p_1^2=p_2^2=0$, $p_3^2=m^2$, $2p_1\cdot p_2=s$, $2p_2\cdot p_3=t-m^2$ and $2p_2\cdot p_3=m^2-s-t$. The propagators are
\begin{equation}
\begin{aligned}
&D_1=(l_1+p_2)^2-m_t^2,\quad
D_2=l_1^2-m_t^2,\quad
D_3=(l_1+p_1+p_2)^2-m_t^2,\\
&D_4=(l_1+l_2)^2-m_t^2,\quad
D_5=l_2^2,\quad
D_6=(l_2-p_1-p_2)^2,\quad
D_7=(l_2+p_3)^2-m^2,\\
&D_8=(l_1+p_2-p_3)^2-m^2,\quad
D_9=(l_2-p_2+p_3)^2
\end{aligned}
\end{equation}
The parameters are $\epsilon, s, t, m^2, m_t^2$. It is clear that there are fully massive sunset sub-diagrams in this
topology and the UT basis does not exist.

We have reduced integrals in the sector $(1,1,1,1,1,1,1,0,0)$, with the
ISP degree up to $5$ to the Laporta basis, using FIRE6. The size of the resulting coefficients is in total 175MB. 

In applying our algorithm to shorten these coefficients, it is
important to pull the nonlinear factors out and do the partial
fractions over the linear factors. After applying our algorithm, the
size of simplified coefficients is reduced to
only $24$ MB. This is also a significant simplification, by about $7$
times in byte-size.

This example indicates that although one should prefer a UT basis in doing
partial fraction, for diagrams without the existence of UT basis,
this algorithm is still powerful.

%============================================================
% added by Zihao, end
%============================================================

\subsection{Performance of the algorithm}

In this section, we summarize the computing resources used for our
examples, and the reduction ratio in different formats.

In all examples a \textsc{Singular} implementation of Algorithm \ref{algo:pfd} with the improvements described in Remarks \ref{rem:syzygy_reduction} and \ref{rem:simplified_algDep} was used. Table \ref{tab:resources} shows the resources used for applying the algorithm to all matrix entries one after the other or in parallel using 32 cores. Due to the simple form of parallelism, the computation will scale similarly up to the number of entries.

\begin{table}[h]
	\centering
	\caption{Runtime and memory used in Algorithm \ref{algo:pfd}}
	\label{tab:resources}
	\begin{tabular}{|l|l|l|l|l|l|}
		\hline example & \multicolumn{2}{|c|}{sequential} & \multicolumn{2}{|c|}{parallel (32 cores)} & memory  for just\\
		& runtime & memory &runtime & memory & reading input file
		\\\hline\hline
		$\diamond$ xbox1m (Laporta)  & 929 s & 67 MB & 74 s & 171 MB & 38 MB \\\hline
		xbox1m (UT) & 1073 s & 53 MB & 48 s & 151 MB & 30 MB \\\hline\hline
		$\diamond$ dpentagon (Laporta) & -- & -- & 137 h & 44.3 GB & 9.63 GB \\\hline
		dpentagon (UT) & 1672 min & 5.39 GB & 99 min & 10.24 GB & 2.45 GB \\\hline\hline
		%only linear factors & 1476 min & 5.34 GB & 104 min & 10.23 GB & 2.45 GB \\\hline\hline
		%target2 (MI) & 303 min & 163 MB & 28 min & 306 MB & 38 MB \\\hline
		%only linear factors & 929 s & 67 MB & 74 s & 171 MB & 38 MB \\\hline\hline
		$\diamond$ dbox elliptic (Laporta) & 304 min & 1.24 GB & 41 min & 2.18 GB & 495 MB \\\hline
	\end{tabular}
\end{table}

When comparing the time taken for each decomposition step, we found that the short numerator decomposition (Algorithm \ref{algo:shortNumeratorDecompStep}) needs 60-95 \% of the total runtime. 

Especially for large numerators and small (by degree) denominator
factors, partial fraction decomposition can drastically reduce the
size of IBP-matrices, as can be seen in Table
\ref{tab:file_size}. Since most of the irreducible factors in the
denominators are linear, it makes sense to leave any nonlinear
factors untouched in the algorithm. In the tables in this subsection, the symbol
``$\diamond$'' means that we leave the nonlinear factors untouched in our partial
fraction algorithm. The phrase ``(Laporta)'' or ``(UT)'' means that we are dealing with
coefficients in a Laporta integral basis or a UT basis, respectively.

The use of a UT basis typically leads to a shorter runtime (Table \ref{tab:resources}) and also reduces the size of the output (Table \ref{tab:file_size}). Finally, instead of writing out the denominator, we can just store in the data structure the indices $i$ and exponents $e_i$ of the irreducible factors $q_i$ appearing in each denominator together with all factors $q_i$, which also reduces the size (last column in Table \ref{tab:file_size}).

\begin{table}[h]
	\caption{size comparison of input/output of Algorithm \ref{algo:pfd}}
	\centering
	\label{tab:file_size}
	\begin{tabular}{|l|l|l|l|}
		\hline example & input & output & output (indexed)
		\\\hline\hline
		$\diamond$ xbox1m (Laporta) & 9.51 MB & 2.91 MB (30.7 \%) & 2.75 MB (29.0 \%) \\\hline
		xbox1m (UT) & 9.04 MB & 1.80 MB (19.9 \%) & 1.53 MB (16.9 \%) \\\hline\hline
		$\diamond$ double pentagon (Laporta)  & 2.42 GB & 864 MB (35.7 \%) & 851 MB (35.2 \%) \\\hline
		double pentagon (UT) & 712 MB & 28.0 MB (3.93 \%) & 19.8 MB (2.78 \%)  \\\hline\hline
		%	only linear factors & 712 MB & 27.7 MB (3.89 \%) & 19.0 MB (2.67 \%) \\\hline\hline
		
		%	target2 (MI) & 9.51 MB & 15.7 MB (165 \%) & 11.6 MB (122 \%)  \\\hline
		%	only linear factors & 9.51 MB & 2.94 MB (30.9 \%) & 2.69 MB (28.3 \%) \\\hline\hline
		
		%

		$\diamond$ dbox elliptic (Laporta)&  175 MB & 24.1 MB (13.8 \%) & 23.3 MB (13.3 \%) \\\hline

	\end{tabular}
\end{table}
%For the big example (double pentagon) the file can be compressed to just $1/25$ of its original size. The second example shows, that in the UT-basis, where, for this example, all denominators factor into linear polynomials, both runtime and file size are smaller. ...

We also find an interesting phenomenon that zipping both the input and
output files in some examples leads to a further increase of the relative size reduction (see Table \ref{tab:file_size_zipped}).

\begin{table}[h]
	\caption{size comparison of input/output of Algorithm \ref{algo:pfd} after zipping}
	\centering
	\label{tab:file_size_zipped}
	\begin{tabular}{|l|l|l|l|}
		\hline example & input (zipped) & output (zipped) &
		output
		(indexed,
		zipped)
		\\\hline\hline
		$\diamond$  xbox1m (Laporta) & 2.90 MB & 829 KB (28.6 \%) & 815 KB (28.1 \%) \\\hline
		xbox1m (UT) & 2.43 MB & 369 KB (15.2 \%) & 351 KB (14.5 \%) \\\hline\hline
		$\diamond$ double pentagon (Laporta)  & 648 MB & 216 MB (33.3 \%) & 215 MB (33.2 \%) \\\hline
		double pentagon (UT) & 212 MB & 4.42 MB (2.09 \%) & 3.99 MB (1.89 \%) \\\hline\hline
		%only linear factors  & 212 MB & 4.41 MB (2.08 \%) & 3.97 MB (1.88 \%) \\\hline\hline		
		%	target2 (MI) & 2.90 MB & 3.77 MB (130 \%) & 3.55 MB (122 \%) \\\hline
		%	only linear factors & 2.90 MB & 831 KB (28.7 \%) & 812 KB (28.0 \%) \\\hline\hline
		%elliptic (FIRE) & 12.7 MB & 6.96 MB (54.7 \%) & 6.40 MB (50.3 \%) \\\hline
		$\diamond$ dbox elliptic (Laporta)  & 50.5 MB & 7.60 MB (15.0 \%) & 7.51 MB (14.9 \%) \\\hline
		
	\end{tabular}
\end{table}

\section{Summary and Discussion}
\label{sec:summary}
In this manuscript, we develop an improved 
Leinartas' algorithm of multivariate partial fraction and present an
modern implement of this algorithm, to simplify the
complicated analytic IBP reduction coefficients in multi-loop
computations. We show that for cases with the existence or without the existence
of the UT basis, our algorithm works well to reduce the IBP reduction
coefficients size.

We observe that in the cases we studied, the IBP reduction
coefficients in the UT basis have simple structures: (1) the spacetime
dimension parameter $D$ factorizes out in the denominators (2) the rest
irreducible factors in the denominators are a subset of the symbol
letters. Thus usually the UT basis provides a simpler denominator
factor list, and our algorithm works particularly well with shorter
running time and higher reduction ratio. In complicated examples, our
algorithm achieves dramatic size reduction in the coefficients of
IBPs.

We expect that our algorithm will have broad applications in
the multi-loop IBP computations, to get easier-to-use analytic
reduction results
and make the numeric evaluation much faster. 

We present a {\sc Singular} library for our algorithm of multivariate partial
fraction. It can be used for simplifying IBP coefficients in general
purposes. Furthermore, we expect that the partial-fraction library can be used to
simplify multiloop integrand and the transcendental function
coefficients in scattering amplitudes, as the partial-fraction examples
shown in \cite{Abreu:2019odu}. We expect that this library can be combined with current
finite field and rational reconstruction packages ~\cite{vonManteuffel:2014ixa,Klappert:2019emp,Klappert:2020aqs,Peraro:2019svx,
  Smirnov:2019qkx} for multiloop
scattering amplitude computations. Our library would also find
applications in analytic computations outside scattering amplitudes, in broader research areas
in theoretical physics. 

Besides this partial fraction library, in the future we
will also develop an arithmetic  library to perform arithmetic
computations of ration functions in partial fraction form and keep
the output in partial fraction form.

It would also be interesting to study the IBP reduction coefficients in a
 UT basis in details. After partial fraction, it seems that each team
 in a coefficient looks much simpler. It is then of theoretical interests
 to relate these terms to the leading singularities of Feynman
 integrals.

\acknowledgments
We acknowledge Johannes Henn for enlightening discussions and his valuable
comments on our original manuscript. We are also grateful to
David Kosower, Kasper Larsen, Roman Lee, Hui Luo, Yanqing Ma, Andreas
von Manteuffel,  Erik Panzer, 
Robert Schabinger, Alexander Smirnov, Vladimir Smirnov, Johann Usovitsch and
Pascal Wasser for very helpful discussions on related topics.

The work of JB was
supported by Project II.5 of SFB-TRR 195 Symbolic Tools in Mathematics and their Application"
of the German Research Foundation (DFG). The work of YZ was support from the NSF of China through Grant No. 11947301.

\appendix
\section{Manual of the Partial Fractioning {\sc Singular} Library}

In this section, we give a short outline of how to use the features of the \textsc{Singular} library \texttt{pfd.lib}. Together with a complete documentation, it can be downloaded from
\begin{quotation}
  \centerline{\scalebox{0.9}{\url{https://raw.githubusercontent.com/Singular/Singular/spielwiese/Singular/LIB/pfd.lib}}}
\end{quotation}
and should be placed within the user's \textsc{Singular} search path. The latest release of \textsc{Singular} can be downloaded from the \textsc{Singular} website \url{https://www.singular.uni-kl.de}.\footnote{A separate download will not be necessary in the future, since the library will be part of the next release of \textsc{Singular} (version $4.1.4$).} The website also provides an online documentation of \textsc{Singular} and all libraries distributed with the release. 

After starting up \textsc{Singular}, the library can be loaded by typing \texttt{LIB "pfd.lib";} at the \textsc{Singular} promt. The main algorithm for partial fraction decomposition can be accessed via the procedure \texttt{pfd}. This procedure takes as input two polynomials (numerator and denominator) and returns a partial fraction decomposition encoded as a list. The first entry is a list containing the denominator factors, the second entry is a list of summands, each of which is encoded as a list of numerator, indices of denominator factors and exponents of denominator factors. The decomposition can be displayed using the procedure \texttt{displaypfd}, and checked with the procedure \texttt{checkpfd}, which verifies whether a rational function (first argument) is mathematically equal to a decomposition returned by \texttt{pfd} (second argument) and returns a boolean (see Example \ref{exmp:pfd}). An example of how to use a procedure can be displayed by typing \texttt{example <name-of-procedure>;} at the \textsc{Singular} prompt. 
\begin{exmp}[\texttt{pfd}]\label{exmp:pfd}
	In the following example, we calculate a partial fraction decomposition of the rational function $(x^2+3xy-y^2) / ((x+1)\cdot (2x+y))\in\mathbb Q(x,y)$ with respect to the graded reverse lexicographic ordering:
	
	\begin{algorithmic}
		\STATE \texttt{> LIB "pfd.lib";}
		\STATE \texttt{> ring r = 0,(x,y),dp;}
		\STATE \texttt{> poly f = x^2+3xy-y^2;}
		\STATE \texttt{> poly g = (x+1)*(2x+y);}
		\STATE \texttt{> list d = pfd(f,g);}
		\STATE \texttt{> displaypfd(d);}
		\STATE \texttt{~ (1/2)}
		\STATE \texttt{+ (9/2y+9) / (q2)}
		\STATE \texttt{+ (-y-5) / (q1)}
		\STATE \texttt{+ (-9) / (q1*q2)}
		\STATE \texttt{where}
		\STATE \texttt{q1 = x+1}
		\STATE \texttt{q2 = 2x+y}
		\STATE \texttt{ }
		\STATE \texttt{> checkpfd(list(f,g),d);}
		\STATE \texttt{1}
	\end{algorithmic}
	In the first line we create the underlying polynomial ring by specifying the characteristic of the coefficient field (\texttt{0} for $\mathbb Q$), the names of the variables and a monomial ordering (\texttt{dp} for the graded reverse lexicographic ordering). Note that the result produced by the algorithm depends on the ordering.
	
	The second argument (the denominator polynomial) can alternatively be given in factorized form (as a list of a \textsc{Singular} \texttt{ideal} generated by irreducible non-constant polynomials and an \texttt{intvec} containing the exponents), in case the denominator factors are known to the user. As an example
	\begin{algorithmic}
		\STATE \texttt{> pfd(x+2*y, (x+y)^2*(x-y)^3);}
	\end{algorithmic}
	is equivalent to
	\begin{algorithmic}
		\STATE \texttt{> pfd(x+2*y, list(ideal(x+y,x-y), intvec(2,3)));}
	\end{algorithmic}
\end{exmp}

Using the procedure \texttt{pfdMat}, we can calculate the decompositions of a matrix of rational functions. The computation is done in parallel, relying on the library \texttt{parallel.lib}.\footnote{A version relying on parallelism implemented via the Singular/GPI-Space framework \cite{singgpi} is under development. This version will allow the use of our algorithm on HPC clusters.} By default, \texttt{pfdMat} also calls \texttt{checkpfd} for each decomposition and ignores nonlinear denominator factors (as described in Section \ref{sec:example}). 

The input of \texttt{pfdMat} is the name of a \texttt{.txt}-file (as a \textsc{Singular} \texttt{string}), which contains the matrix as a list of lists (row by row) enclosed in the symbols ``\texttt{\{}'' and ``\texttt{\}}'', and separated by commas (see Example \ref{exmp:pfdMat}). Each rational function has to be an expression of the form ``\texttt{a}'', ``\texttt{(a)/(b)}'', ``\texttt{(b)^(-n)}'' or ``\texttt{(a)*(b)^(-n)}'',
where ``\texttt{n}'' stands for a positive integer and ``\texttt{a}'', ``\texttt{b}'' stand for arbitrary polynomials (using the operators ``\texttt{+}'', ``\texttt{-}'', ``\texttt{*}'', ``\texttt{^}'' and brackets ``\texttt{(}'',``\texttt{)}''). A minus sign ``\texttt{-}'' followed by such an expression is also allowed.
Note that the library also has options to use the \textsc{Singular} binary serialization data format \texttt{.ssi} for highly efficient input and output from within \textsc{Singular}.

There are four optional arguments which determine
whether \texttt{checkpfd} should be applied (\texttt{-1}: exact test, \texttt{0}: do not apply \texttt{checkpfd}, positive integer: do this amount of probabilistic tests, default value is \texttt{-1}),
whether nonlinear factors should be extracted (\texttt{1} or \texttt{0}, default value \texttt{1}),
whether additional output files should be created (integer from \texttt{1} to \texttt{4}, default value \texttt{1}) and
whether the algorithm should be run in parallel over all matrix entries (\texttt{1} or \texttt{0}, default value \texttt{1}). The options should be specified in this order. The third optional argument (integer from \texttt{1} to \texttt{4}) controls the output files created:
\begin{itemize}[noitemsep,topsep=0pt]
	\item[\texttt{1:}] The output, that is, the matrix containing the decompositions, is stored in a \texttt{.txt}-file in indexed form (as described in Section \ref{sec:example}). The denominator factors are saved in a separate file and a logfile is created, which protocols runtimes and memory usage.
	\item[\texttt{2:}] Additionally, the decompositions are saved in non-indexed form.
	\item[\texttt{3:}] Additional \texttt{.ssi}-files containing the input and output matrix as well as some intermediate results are created.
	\item[\texttt{4:}] Additionally to mode \texttt{3}, for every rational function, the result of \texttt{pfd} is immediately saved in a seperate \texttt{.ssi}-file. (This creates a file for every matrix entry.)
\end{itemize}
For more details refer to the documentation of the library.

Before calling \texttt{pfdMat}, a polynomial ring must be defined (as in Example \ref{exmp:pfd}) such that the variable names match the names used in the input file. Furthermore, with the command \texttt{setcores(n);} the number of processor cores used for the parallelization can be set to an integer \texttt{n}. By default, all cores are used.

\begin{exmp}[\texttt{pfdMat}]\label{exmp:pfdMat}
	Suppose the file \texttt{test.txt} in the search path contains the string\\
	\centerline{``\texttt{\{\{(x+y)/(x^2-x*y),\;-(x^2*y+1)/y,\;x^2\},\;\{(x+y+1)/(y^2),\;0,\;(x^2*y-y^3)^(-1)\}\}}''}
	representing a $2\times 3$ matrix of rational functions.
	We then can calculate the decompositions of the $6$ matrix entries (in parallel on $4$ cores) as follows:
	\begin{algorithmic}
		\STATE \texttt{> LIB "pfd.lib";}
		\STATE \texttt{> setcores(4);}
		\STATE \texttt{> ring r = 0,(x,y),dp;}
		\STATE \texttt{> pfdMat("test.txt");}
	\end{algorithmic}
	The procedure \texttt{pfdMat} creates two output files: \texttt{test_pfd_indexed.txt} containing the matrix of partial fraction decompositions as the string
	\begin{dmath*}
		\text{``}\texttt{\{\{(2)/(q2) + (-1)/(q1), (-x^2) + (-1)/(q3), (x^2)\},}\\
		\texttt{\{(1)/(q3) + (x+1)/(q3^2), (0), (1)/(q3*q4^2) + (2)/(q2*q4^2)\}\}}\text{''}
	\end{dmath*}
	and \texttt{test_denominator_factors.txt} containing the factors \texttt{q1}, \texttt{q2}, \texttt{q3}, \texttt{q4} in the form
	\centerline{``\texttt{q1 = x; \quad q2 = x-y; \quad q3 = y; \quad q4 = x+y;}''.}
	As explained above, \texttt{pfdMat} has four optional arguments with default values \texttt{-1,1,1,1}. If the third argument is set to \texttt{2} as in ``\texttt{pfdMat("test.txt",-1,1,2,1);}'', an additional file \texttt{test_pfd.txt} is created, which contains the result in non indexed form as the string
	\begin{dmath*}
		\text{``}\texttt{\{\{(2)/((x-y)) + (-1)/((x)), (-x^2) + (-1)/((y)), (x^2)\},}\\
		\texttt{\{(1)/((y)) + (x+1)/((y)^2), (0), (1)/((y)*(x+y)^2) + (2)/((x-y)*(x+y)^2)\}\}}\text{''}.
	\end{dmath*}
\end{exmp}

\begin{rem}
	In an IBP-matrix, typically the total number of distinct irreducible factors occurring in the denominators of all the rational functions is fairly small (e.g. for the double pentagon example in UT-basis, there are only $25$ different irreducible denominator factors occurring in the $26\times108=2808$ matrix entries). The procedure \texttt{pfdMat} uses this fact to speed up the factorization of the denominators by dividing by factors obtained from factorizing ``simpler'' polynomials in order to factorize the larger polynomials.
\end{rem}

\section{An Explicit Example of the Size Reduction}
\label{An explicit example of the size reduction}
%\label{sec:24-107}

In this Appendix, we explicitly show an IBP reduction coefficient, before and after our partial fraction computations to see the size reduction.

This example is from the $2$-loop $5$-point nonplanar double pentagon IBP reduction (Figure.~\ref{double pentagon}). 
We choose a UT basis $I_i$, $i=1,\ldots, 108$. The integral to be reduced is $I[1,1,1,1,1,1,1,1,-2,-1,0]$,
\begin{equation}
   I[1,1,1,1,1,1,1,1,-2,-1,0] = \sum_{i=1}^{108} c_i I_i\, .
 \label{A1}
\end{equation}
Because of the UT basis, the coefficients shown in~\eqref{A1} are simpler than the coefficients in FIRE basis. However, the coefficients are still very complicated. We pick up the particular coefficient $c_{107}$ as an example. Before the partial fraction, in the usually combined form, we have a huge expression of $c_{107}$. It is can listed in about five pages as:
\begin{dmath}
c_{107}=\big(-64  { \epsilon}  {s_{15}}^3  {s_{45}}^7-16  {s_{15}}^3  {s_{45}}^7+64  { \epsilon}  {s_{34}}^3  {s_{45}}^7+16  {s_{34}}^3  {s_{45}}^7-4  { \epsilon}  {s_{12}}  {s_{15}}^2  {s_{45}}^7- {s_{12}}  {s_{15}}^2  {s_{45}}^7-4  { \epsilon}  {s_{12}}  {s_{34}}^2  {s_{45}}^7- {s_{12}}  {s_{34}}^2  {s_{45}}^7-192  { \epsilon}  {s_{15}}  {s_{34}}^2  {s_{45}}^7-48  {s_{15}}  {s_{34}}^2  {s_{45}}^7-48  { \epsilon}  {s_{23}}  {s_{34}}^2  {s_{45}}^7-12  {s_{23}}  {s_{34}}^2  {s_{45}}^7-48  { \epsilon}  {s_{15}}^2  {s_{23}}  {s_{45}}^7-12  {s_{15}}^2  {s_{23}}  {s_{45}}^7+12  { \epsilon}  {s_{12}}  {s_{15}}  {s_{23}}  {s_{45}}^7+3  {s_{12}}  {s_{15}}  {s_{23}}  {s_{45}}^7+192  { \epsilon}  {s_{15}}^2  {s_{34}}  {s_{45}}^7+48  {s_{15}}^2  {s_{34}}  {s_{45}}^7+8  { \epsilon}  {s_{12}}  {s_{15}}  {s_{34}}  {s_{45}}^7+2  {s_{12}}  {s_{15}}  {s_{34}}  {s_{45}}^7-12  { \epsilon}  {s_{12}}  {s_{23}}  {s_{34}}  {s_{45}}^7-3  {s_{12}}  {s_{23}}  {s_{34}}  {s_{45}}^7+96  { \epsilon}  {s_{15}}  {s_{23}}  {s_{34}}  {s_{45}}^7+24  {s_{15}}  {s_{23}}  {s_{34}}  {s_{45}}^7-64  { \epsilon}  {s_{15}}^4  {s_{45}}^6-16  {s_{15}}^4  {s_{45}}^6+128  { \epsilon}  {s_{34}}^4  {s_{45}}^6+32  {s_{34}}^4  {s_{45}}^6+248  { \epsilon}  {s_{12}}  {s_{15}}^3  {s_{45}}^6+62  {s_{12}}  {s_{15}}^3  {s_{45}}^6-196  { \epsilon}  {s_{12}}  {s_{34}}^3  {s_{45}}^6-49  {s_{12}}  {s_{34}}^3  {s_{45}}^6-320  { \epsilon}  {s_{15}}  {s_{34}}^3  {s_{45}}^6-80  {s_{15}}  {s_{34}}^3  {s_{45}}^6-352  { \epsilon}  {s_{23}}  {s_{34}}^3  {s_{45}}^6-76  {s_{23}}  {s_{34}}^3  {s_{45}}^6+12  { \epsilon}  {s_{12}}^2  {s_{15}}^2  {s_{45}}^6+3  {s_{12}}^2  {s_{15}}^2  {s_{45}}^6-12  { \epsilon}  {s_{12}}^2  {s_{23}}^2  {s_{45}}^6-3  {s_{12}}^2  {s_{23}}^2  {s_{45}}^6+96  { \epsilon}  {s_{15}}^2  {s_{23}}^2  {s_{45}}^6+24  {s_{15}}^2  {s_{23}}^2  {s_{45}}^6-96  { \epsilon}  {s_{12}}  {s_{15}}  {s_{23}}^2  {s_{45}}^6-24  {s_{12}}  {s_{15}}  {s_{23}}^2  {s_{45}}^6+8  { \epsilon}  {s_{12}}^2  {s_{34}}^2  {s_{45}}^6+2  {s_{12}}^2  {s_{34}}^2  {s_{45}}^6+192  { \epsilon}  {s_{15}}^2  {s_{34}}^2  {s_{45}}^6+48  {s_{15}}^2  {s_{34}}^2  {s_{45}}^6+144  { \epsilon}  {s_{23}}^2  {s_{34}}^2  {s_{45}}^6+36  {s_{23}}^2  {s_{34}}^2  {s_{45}}^6+640  { \epsilon}  {s_{12}}  {s_{15}}  {s_{34}}^2  {s_{45}}^6+160  {s_{12}}  {s_{15}}  {s_{34}}^2  {s_{45}}^6+264  { \epsilon}  {s_{12}}  {s_{23}}  {s_{34}}^2  {s_{45}}^6+54  {s_{12}}  {s_{23}}  {s_{34}}^2  {s_{45}}^6+784  { \epsilon}  {s_{15}}  {s_{23}}  {s_{34}}^2  {s_{45}}^6+172  {s_{15}}  {s_{23}}  {s_{34}}^2  {s_{45}}^6+80  { \epsilon}  {s_{15}}^3  {s_{23}}  {s_{45}}^6+20  {s_{15}}^3  {s_{23}}  {s_{45}}^6+176  { \epsilon}  {s_{12}}  {s_{15}}^2  {s_{23}}  {s_{45}}^6+44  {s_{12}}  {s_{15}}^2  {s_{23}}  {s_{45}}^6-24  { \epsilon}  {s_{12}}^2  {s_{15}}  {s_{23}}  {s_{45}}^6-6  {s_{12}}^2  {s_{15}}  {s_{23}}  {s_{45}}^6+64  { \epsilon}  {s_{15}}^3  {s_{34}}  {s_{45}}^6+16  {s_{15}}^3  {s_{34}}  {s_{45}}^6-692  { \epsilon}  {s_{12}}  {s_{15}}^2  {s_{34}}  {s_{45}}^6-173  {s_{12}}  {s_{15}}^2  {s_{34}}  {s_{45}}^6-12  { \epsilon}  {s_{12}}  {s_{23}}^2  {s_{34}}  {s_{45}}^6-3  {s_{12}}  {s_{23}}^2  {s_{34}}  {s_{45}}^6-240  { \epsilon}  {s_{15}}  {s_{23}}^2  {s_{34}}  {s_{45}}^6-60  {s_{15}}  {s_{23}}^2  {s_{34}}  {s_{45}}^6-20  { \epsilon}  {s_{12}}^2  {s_{15}}  {s_{34}}  {s_{45}}^6-5  {s_{12}}^2  {s_{15}}  {s_{34}}  {s_{45}}^6+28  { \epsilon}  {s_{12}}^2  {s_{23}}  {s_{34}}  {s_{45}}^6+7  {s_{12}}^2  {s_{23}}  {s_{34}}  {s_{45}}^6-512  { \epsilon}  {s_{15}}^2  {s_{23}}  {s_{34}}  {s_{45}}^6-116  {s_{15}}^2  {s_{23}}  {s_{34}}  {s_{45}}^6-440  { \epsilon}  {s_{12}}  {s_{15}}  {s_{23}}  {s_{34}}  {s_{45}}^6-98  {s_{12}}  {s_{15}}  {s_{23}}  {s_{34}}  {s_{45}}^6+64  { \epsilon}  {s_{34}}^5  {s_{45}}^5+16  {s_{34}}^5  {s_{45}}^5+252  { \epsilon}  {s_{12}}  {s_{15}}^4  {s_{45}}^5+63  {s_{12}}  {s_{15}}^4  {s_{45}}^5-376  { \epsilon}  {s_{12}}  {s_{34}}^4  {s_{45}}^5-94  {s_{12}}  {s_{34}}^4  {s_{45}}^5-64  { \epsilon}  {s_{15}}  {s_{34}}^4  {s_{45}}^5-16  {s_{15}}  {s_{34}}^4  {s_{45}}^5-560  { \epsilon}  {s_{23}}  {s_{34}}^4  {s_{45}}^5-116  {s_{23}}  {s_{34}}^4  {s_{45}}^5-360  { \epsilon}  {s_{12}}^2  {s_{15}}^3  {s_{45}}^5-90  {s_{12}}^2  {s_{15}}^3  {s_{45}}^5+24  { \epsilon}  {s_{12}}^2  {s_{23}}^3  {s_{45}}^5+6  {s_{12}}^2  {s_{23}}^3  {s_{45}}^5-48  { \epsilon}  {s_{15}}^2  {s_{23}}^3  {s_{45}}^5-12  {s_{15}}^2  {s_{23}}^3  {s_{45}}^5+156  { \epsilon}  {s_{12}}  {s_{15}}  {s_{23}}^3  {s_{45}}^5+39  {s_{12}}  {s_{15}}  {s_{23}}^3  {s_{45}}^5+256  { \epsilon}  {s_{12}}^2  {s_{34}}^3  {s_{45}}^5+64  {s_{12}}^2  {s_{34}}^3  {s_{45}}^5-192  { \epsilon}  {s_{15}}^2  {s_{34}}^3  {s_{45}}^5-48  {s_{15}}^2  {s_{34}}^3  {s_{45}}^5+672  { \epsilon}  {s_{23}}^2  {s_{34}}^3  {s_{45}}^5+144  {s_{23}}^2  {s_{34}}^3  {s_{45}}^5+1056  { \epsilon}  {s_{12}}  {s_{15}}  {s_{34}}^3  {s_{45}}^5+264  {s_{12}}  {s_{15}}  {s_{34}}^3  {s_{45}}^5+1164  { \epsilon}  {s_{12}}  {s_{23}}  {s_{34}}^3  {s_{45}}^5+231  {s_{12}}  {s_{23}}  {s_{34}}^3  {s_{45}}^5+1088  { \epsilon}  {s_{15}}  {s_{23}}  {s_{34}}^3  {s_{45}}^5+236  {s_{15}}  {s_{23}}  {s_{34}}^3  {s_{45}}^5-12  { \epsilon}  {s_{12}}^3  {s_{15}}^2  {s_{45}}^5-3  {s_{12}}^3  {s_{15}}^2  {s_{45}}^5+32  { \epsilon}  {s_{12}}^3  {s_{23}}^2  {s_{45}}^5+8  {s_{12}}^3  {s_{23}}^2  {s_{45}}^5-32  { \epsilon}  {s_{15}}^3  {s_{23}}^2  {s_{45}}^5-8  {s_{15}}^3  {s_{23}}^2  {s_{45}}^5-368  { \epsilon}  {s_{12}}  {s_{15}}^2  {s_{23}}^2  {s_{45}}^5-92  {s_{12}}  {s_{15}}^2  {s_{23}}^2  {s_{45}}^5+236  { \epsilon}  {s_{12}}^2  {s_{15}}  {s_{23}}^2  {s_{45}}^5+59  {s_{12}}^2  {s_{15}}  {s_{23}}^2  {s_{45}}^5-4  { \epsilon}  {s_{12}}^3  {s_{34}}^2  {s_{45}}^5- {s_{12}}^3  {s_{34}}^2  {s_{45}}^5+320  { \epsilon}  {s_{15}}^3  {s_{34}}^2  {s_{45}}^5+80  {s_{15}}^3  {s_{34}}^2  {s_{45}}^5-144  { \epsilon}  {s_{23}}^3  {s_{34}}^2  {s_{45}}^5-36  {s_{23}}^3  {s_{34}}^2  {s_{45}}^5-732  { \epsilon}  {s_{12}}  {s_{15}}^2  {s_{34}}^2  {s_{45}}^5-183  {s_{12}}  {s_{15}}^2  {s_{34}}^2  {s_{45}}^5-832  { \epsilon}  {s_{12}}  {s_{23}}^2  {s_{34}}^2  {s_{45}}^5-184  {s_{12}}  {s_{23}}^2  {s_{34}}^2  {s_{45}}^5-1152  { \epsilon}  {s_{15}}  {s_{23}}^2  {s_{34}}^2  {s_{45}}^5-240  {s_{15}}  {s_{23}}^2  {s_{34}}^2  {s_{45}}^5-884  { \epsilon}  {s_{12}}^2  {s_{15}}  {s_{34}}^2  {s_{45}}^5-221  {s_{12}}^2  {s_{15}}  {s_{34}}^2  {s_{45}}^5-464  { \epsilon}  {s_{12}}^2  {s_{23}}  {s_{34}}^2  {s_{45}}^5-80  {s_{12}}^2  {s_{23}}  {s_{34}}^2  {s_{45}}^5-416  { \epsilon}  {s_{15}}^2  {s_{23}}  {s_{34}}^2  {s_{45}}^5-104  {s_{15}}^2  {s_{23}}  {s_{34}}^2  {s_{45}}^5-2208  { \epsilon}  {s_{12}}  {s_{15}}  {s_{23}}  {s_{34}}^2  {s_{45}}^5-468  {s_{12}}  {s_{15}}  {s_{23}}  {s_{34}}^2  {s_{45}}^5+80  { \epsilon}  {s_{15}}^4  {s_{23}}  {s_{45}}^5+20  {s_{15}}^4  {s_{23}}  {s_{45}}^5-264  { \epsilon}  {s_{12}}  {s_{15}}^3  {s_{23}}  {s_{45}}^5-66  {s_{12}}  {s_{15}}^3  {s_{23}}  {s_{45}}^5-220  { \epsilon}  {s_{12}}^2  {s_{15}}^2  {s_{23}}  {s_{45}}^5-55  {s_{12}}^2  {s_{15}}^2  {s_{23}}  {s_{45}}^5+4  { \epsilon}  {s_{12}}^3  {s_{15}}  {s_{23}}  {s_{45}}^5+ {s_{12}}^3  {s_{15}}  {s_{23}}  {s_{45}}^5-128  { \epsilon}  {s_{15}}^4  {s_{34}}  {s_{45}}^5-32  {s_{15}}^4  {s_{34}}  {s_{45}}^5-200  { \epsilon}  {s_{12}}  {s_{15}}^3  {s_{34}}  {s_{45}}^5-50  {s_{12}}  {s_{15}}^3  {s_{34}}  {s_{45}}^5+60  { \epsilon}  {s_{12}}  {s_{23}}^3  {s_{34}}  {s_{45}}^5+15  {s_{12}}  {s_{23}}^3  {s_{34}}  {s_{45}}^5+192  { \epsilon}  {s_{15}}  {s_{23}}^3  {s_{34}}  {s_{45}}^5+48  {s_{15}}  {s_{23}}^3  {s_{34}}  {s_{45}}^5+988  { \epsilon}  {s_{12}}^2  {s_{15}}^2  {s_{34}}  {s_{45}}^5+247  {s_{12}}^2  {s_{15}}^2  {s_{34}}  {s_{45}}^5+128  { \epsilon}  {s_{12}}^2  {s_{23}}^2  {s_{34}}  {s_{45}}^5+32  {s_{12}}^2  {s_{23}}^2  {s_{34}}  {s_{45}}^5+512  { \epsilon}  {s_{15}}^2  {s_{23}}^2  {s_{34}}  {s_{45}}^5+104  {s_{15}}^2  {s_{23}}^2  {s_{34}}  {s_{45}}^5+796  { \epsilon}  {s_{12}}  {s_{15}}  {s_{23}}^2  {s_{34}}  {s_{45}}^5+175  {s_{12}}  {s_{15}}  {s_{23}}^2  {s_{34}}  {s_{45}}^5+16  { \epsilon}  {s_{12}}^3  {s_{15}}  {s_{34}}  {s_{45}}^5+4  {s_{12}}^3  {s_{15}}  {s_{34}}  {s_{45}}^5-20  { \epsilon}  {s_{12}}^3  {s_{23}}  {s_{34}}  {s_{45}}^5-5  {s_{12}}^3  {s_{23}}  {s_{34}}  {s_{45}}^5-192  { \epsilon}  {s_{15}}^3  {s_{23}}  {s_{34}}  {s_{45}}^5-36  {s_{15}}^3  {s_{23}}  {s_{34}}  {s_{45}}^5+1308  { \epsilon}  {s_{12}}  {s_{15}}^2  {s_{23}}  {s_{34}}  {s_{45}}^5+303  {s_{12}}  {s_{15}}^2  {s_{23}}  {s_{34}}  {s_{45}}^5+632  { \epsilon}  {s_{12}}^2  {s_{15}}  {s_{23}}  {s_{34}}  {s_{45}}^5+122  {s_{12}}^2  {s_{15}}  {s_{23}}  {s_{34}}  {s_{45}}^5-184  { \epsilon}  {s_{12}}  {s_{34}}^5  {s_{45}}^4-46  {s_{12}}  {s_{34}}^5  {s_{45}}^4+64  { \epsilon}  {s_{15}}  {s_{34}}^5  {s_{45}}^4+16  {s_{15}}  {s_{34}}^5  {s_{45}}^4-256  { \epsilon}  {s_{23}}  {s_{34}}^5  {s_{45}}^4-52  {s_{23}}  {s_{34}}^5  {s_{45}}^4-372  { \epsilon}  {s_{12}}^2  {s_{15}}^4  {s_{45}}^4-93  {s_{12}}^2  {s_{15}}^4  {s_{45}}^4-12  { \epsilon}  {s_{12}}^2  {s_{23}}^4  {s_{45}}^4-3  {s_{12}}^2  {s_{23}}^4  {s_{45}}^4-72  { \epsilon}  {s_{12}}  {s_{15}}  {s_{23}}^4  {s_{45}}^4-18  {s_{12}}  {s_{15}}  {s_{23}}^4  {s_{45}}^4+368  { \epsilon}  {s_{12}}^2  {s_{34}}^4  {s_{45}}^4+92  {s_{12}}^2  {s_{34}}^4  {s_{45}}^4-192  { \epsilon}  {s_{15}}^2  {s_{34}}^4  {s_{45}}^4-48  {s_{15}}^2  {s_{34}}^4  {s_{45}}^4+912  { \epsilon}  {s_{23}}^2  {s_{34}}^4  {s_{45}}^4+180  {s_{23}}^2  {s_{34}}^4  {s_{45}}^4+240  { \epsilon}  {s_{12}}  {s_{15}}  {s_{34}}^4  {s_{45}}^4+60  {s_{12}}  {s_{15}}  {s_{34}}^4  {s_{45}}^4+1520  { \epsilon}  {s_{12}}  {s_{23}}  {s_{34}}^4  {s_{45}}^4+308  {s_{12}}  {s_{23}}  {s_{34}}^4  {s_{45}}^4+208  { \epsilon}  {s_{15}}  {s_{23}}  {s_{34}}^4  {s_{45}}^4+52  {s_{15}}  {s_{23}}  {s_{34}}^4  {s_{45}}^4+232  { \epsilon}  {s_{12}}^3  {s_{15}}^3  {s_{45}}^4+58  {s_{12}}^3  {s_{15}}^3  {s_{45}}^4-52  { \epsilon}  {s_{12}}^3  {s_{23}}^3  {s_{45}}^4-13  {s_{12}}^3  {s_{23}}^3  {s_{45}}^4+148  { \epsilon}  {s_{12}}  {s_{15}}^2  {s_{23}}^3  {s_{45}}^4+37  {s_{12}}  {s_{15}}^2  {s_{23}}^3  {s_{45}}^4-384  { \epsilon}  {s_{12}}^2  {s_{15}}  {s_{23}}^3  {s_{45}}^4-96  {s_{12}}^2  {s_{15}}  {s_{23}}^3  {s_{45}}^4-124  { \epsilon}  {s_{12}}^3  {s_{34}}^3  {s_{45}}^4-31  {s_{12}}^3  {s_{34}}^3  {s_{45}}^4+192  { \epsilon}  {s_{15}}^3  {s_{34}}^3  {s_{45}}^4+48  {s_{15}}^3  {s_{34}}^3  {s_{45}}^4-544  { \epsilon}  {s_{23}}^3  {s_{34}}^3  {s_{45}}^4-124  {s_{23}}^3  {s_{34}}^3  {s_{45}}^4+572  { \epsilon}  {s_{12}}  {s_{15}}^2  {s_{34}}^3  {s_{45}}^4+143  {s_{12}}  {s_{15}}^2  {s_{34}}^3  {s_{45}}^4-2284  { \epsilon}  {s_{12}}  {s_{23}}^2  {s_{34}}^3  {s_{45}}^4-463  {s_{12}}  {s_{23}}^2  {s_{34}}^3  {s_{45}}^4-1392  { \epsilon}  {s_{15}}  {s_{23}}^2  {s_{34}}^3  {s_{45}}^4-264  {s_{15}}  {s_{23}}^2  {s_{34}}^3  {s_{45}}^4-1160  { \epsilon}  {s_{12}}^2  {s_{15}}  {s_{34}}^3  {s_{45}}^4-290  {s_{12}}^2  {s_{15}}  {s_{34}}^3  {s_{45}}^4-1668  { \epsilon}  {s_{12}}^2  {s_{23}}  {s_{34}}^3  {s_{45}}^4-321  {s_{12}}^2  {s_{23}}  {s_{34}}^3  {s_{45}}^4+512  { \epsilon}  {s_{15}}^2  {s_{23}}  {s_{34}}^3  {s_{45}}^4+92  {s_{15}}^2  {s_{23}}  {s_{34}}^3  {s_{45}}^4-2708  { \epsilon}  {s_{12}}  {s_{15}}  {s_{23}}  {s_{34}}^3  {s_{45}}^4-605  {s_{12}}  {s_{15}}  {s_{23}}  {s_{34}}^3  {s_{45}}^4+4  { \epsilon}  {s_{12}}^4  {s_{15}}^2  {s_{45}}^4+ {s_{12}}^4  {s_{15}}^2  {s_{45}}^4-28  { \epsilon}  {s_{12}}^4  {s_{23}}^2  {s_{45}}^4-7  {s_{12}}^4  {s_{23}}^2  {s_{45}}^4+176  { \epsilon}  {s_{12}}  {s_{15}}^3  {s_{23}}^2  {s_{45}}^4+44  {s_{12}}  {s_{15}}^3  {s_{23}}^2  {s_{45}}^4+588  { \epsilon}  {s_{12}}^2  {s_{15}}^2  {s_{23}}^2  {s_{45}}^4+147  {s_{12}}^2  {s_{15}}^2  {s_{23}}^2  {s_{45}}^4-156  { \epsilon}  {s_{12}}^3  {s_{15}}  {s_{23}}^2  {s_{45}}^4-39  {s_{12}}^3  {s_{15}}  {s_{23}}^2  {s_{45}}^4-64  { \epsilon}  {s_{15}}^4  {s_{34}}^2  {s_{45}}^4-16  {s_{15}}^4  {s_{34}}^2  {s_{45}}^4+48  { \epsilon}  {s_{23}}^4  {s_{34}}^2  {s_{45}}^4+12  {s_{23}}^4  {s_{34}}^2  {s_{45}}^4-1128  { \epsilon}  {s_{12}}  {s_{15}}^3  {s_{34}}^2  {s_{45}}^4-282  {s_{12}}  {s_{15}}^3  {s_{34}}^2  {s_{45}}^4+920  { \epsilon}  {s_{12}}  {s_{23}}^3  {s_{34}}^2  {s_{45}}^4+218  {s_{12}}  {s_{23}}^3  {s_{34}}^2  {s_{45}}^4+720  { \epsilon}  {s_{15}}  {s_{23}}^3  {s_{34}}^2  {s_{45}}^4+156  {s_{15}}  {s_{23}}^3  {s_{34}}^2  {s_{45}}^4+832  { \epsilon}  {s_{12}}^2  {s_{15}}^2  {s_{34}}^2  {s_{45}}^4+208  {s_{12}}^2  {s_{15}}^2  {s_{34}}^2  {s_{45}}^4+1508  { \epsilon}  {s_{12}}^2  {s_{23}}^2  {s_{34}}^2  {s_{45}}^4+317  {s_{12}}^2  {s_{23}}^2  {s_{34}}^2  {s_{45}}^4+416  { \epsilon}  {s_{15}}^2  {s_{23}}^2  {s_{34}}^2  {s_{45}}^4+80  {s_{15}}^2  {s_{23}}^2  {s_{34}}^2  {s_{45}}^4+2628  { \epsilon}  {s_{12}}  {s_{15}}  {s_{23}}^2  {s_{34}}^2  {s_{45}}^4+501  {s_{12}}  {s_{15}}  {s_{23}}^2  {s_{34}}^2  {s_{45}}^4+560  { \epsilon}  {s_{12}}^3  {s_{15}}  {s_{34}}^2  {s_{45}}^4+140  {s_{12}}^3  {s_{15}}  {s_{34}}^2  {s_{45}}^4+460  { \epsilon}  {s_{12}}^3  {s_{23}}  {s_{34}}^2  {s_{45}}^4+79  {s_{12}}^3  {s_{23}}  {s_{34}}^2  {s_{45}}^4-624  { \epsilon}  {s_{15}}^3  {s_{23}}  {s_{34}}^2  {s_{45}}^4-132  {s_{15}}^3  {s_{23}}  {s_{34}}^2  {s_{45}}^4+1124  { \epsilon}  {s_{12}}  {s_{15}}^2  {s_{23}}  {s_{34}}^2  {s_{45}}^4+317  {s_{12}}  {s_{15}}^2  {s_{23}}  {s_{34}}^2  {s_{45}}^4+2424  { \epsilon}  {s_{12}}^2  {s_{15}}  {s_{23}}  {s_{34}}^2  {s_{45}}^4+510  {s_{12}}^2  {s_{15}}  {s_{23}}  {s_{34}}^2  {s_{45}}^4-252  { \epsilon}  {s_{12}}  {s_{15}}^4  {s_{23}}  {s_{45}}^4-63  {s_{12}}  {s_{15}}^4  {s_{23}}  {s_{45}}^4+292  { \epsilon}  {s_{12}}^2  {s_{15}}^3  {s_{23}}  {s_{45}}^4+73  {s_{12}}^2  {s_{15}}^3  {s_{23}}  {s_{45}}^4+80  { \epsilon}  {s_{12}}^3  {s_{15}}^2  {s_{23}}  {s_{45}}^4+20  {s_{12}}^3  {s_{15}}^2  {s_{23}}  {s_{45}}^4+16  { \epsilon}  {s_{12}}^4  {s_{15}}  {s_{23}}  {s_{45}}^4+4  {s_{12}}^4  {s_{15}}  {s_{23}}  {s_{45}}^4+500  { \epsilon}  {s_{12}}  {s_{15}}^4  {s_{34}}  {s_{45}}^4+125  {s_{12}}  {s_{15}}^4  {s_{34}}  {s_{45}}^4-36  { \epsilon}  {s_{12}}  {s_{23}}^4  {s_{34}}  {s_{45}}^4-9  {s_{12}}  {s_{23}}^4  {s_{34}}  {s_{45}}^4-48  { \epsilon}  {s_{15}}  {s_{23}}^4  {s_{34}}  {s_{45}}^4-12  {s_{15}}  {s_{23}}^4  {s_{34}}  {s_{45}}^4+332  { \epsilon}  {s_{12}}^2  {s_{15}}^3  {s_{34}}  {s_{45}}^4+83  {s_{12}}^2  {s_{15}}^3  {s_{34}}  {s_{45}}^4-324  { \epsilon}  {s_{12}}^2  {s_{23}}^3  {s_{34}}  {s_{45}}^4-81  {s_{12}}^2  {s_{23}}^3  {s_{34}}  {s_{45}}^4-176  { \epsilon}  {s_{15}}^2  {s_{23}}^3  {s_{34}}  {s_{45}}^4-32  {s_{15}}^2  {s_{23}}^3  {s_{34}}  {s_{45}}^4-400  { \epsilon}  {s_{12}}  {s_{15}}  {s_{23}}^3  {s_{34}}  {s_{45}}^4-88  {s_{12}}  {s_{15}}  {s_{23}}^3  {s_{34}}  {s_{45}}^4-668  { \epsilon}  {s_{12}}^3  {s_{15}}^2  {s_{34}}  {s_{45}}^4-167  {s_{12}}^3  {s_{15}}^2  {s_{34}}  {s_{45}}^4-228  { \epsilon}  {s_{12}}^3  {s_{23}}^2  {s_{34}}  {s_{45}}^4-57  {s_{12}}^3  {s_{23}}^2  {s_{34}}  {s_{45}}^4+64  { \epsilon}  {s_{15}}^3  {s_{23}}^2  {s_{34}}  {s_{45}}^4+4  {s_{15}}^3  {s_{23}}^2  {s_{34}}  {s_{45}}^4-1068  { \epsilon}  {s_{12}}  {s_{15}}^2  {s_{23}}^2  {s_{34}}  {s_{45}}^4-219  {s_{12}}  {s_{15}}^2  {s_{23}}^2  {s_{34}}  {s_{45}}^4-664  { \epsilon}  {s_{12}}^2  {s_{15}}  {s_{23}}^2  {s_{34}}  {s_{45}}^4-106  {s_{12}}^2  {s_{15}}  {s_{23}}^2  {s_{34}}  {s_{45}}^4-4  { \epsilon}  {s_{12}}^4  {s_{15}}  {s_{34}}  {s_{45}}^4- {s_{12}}^4  {s_{15}}  {s_{34}}  {s_{45}}^4+4  { \epsilon}  {s_{12}}^4  {s_{23}}  {s_{34}}  {s_{45}}^4+ {s_{12}}^4  {s_{23}}  {s_{34}}  {s_{45}}^4+160  { \epsilon}  {s_{15}}^4  {s_{23}}  {s_{34}}  {s_{45}}^4+40  {s_{15}}^4  {s_{23}}  {s_{34}}  {s_{45}}^4+316  { \epsilon}  {s_{12}}  {s_{15}}^3  {s_{23}}  {s_{34}}  {s_{45}}^4+43  {s_{12}}  {s_{15}}^3  {s_{23}}  {s_{34}}  {s_{45}}^4-1228  { \epsilon}  {s_{12}}^2  {s_{15}}^2  {s_{23}}  {s_{34}}  {s_{45}}^4-307  {s_{12}}^2  {s_{15}}^2  {s_{23}}  {s_{34}}  {s_{45}}^4-300  { \epsilon}  {s_{12}}^3  {s_{15}}  {s_{23}}  {s_{34}}  {s_{45}}^4-39  {s_{12}}^3  {s_{15}}  {s_{23}}  {s_{34}}  {s_{45}}^4+120  { \epsilon}  {s_{12}}^2  {s_{34}}^5  {s_{45}}^3+30  {s_{12}}^2  {s_{34}}^5  {s_{45}}^3+384  { \epsilon}  {s_{23}}^2  {s_{34}}^5  {s_{45}}^3+72  {s_{23}}^2  {s_{34}}^5  {s_{45}}^3-184  { \epsilon}  {s_{12}}  {s_{15}}  {s_{34}}^5  {s_{45}}^3-46  {s_{12}}  {s_{15}}  {s_{34}}^5  {s_{45}}^3+632  { \epsilon}  {s_{12}}  {s_{23}}  {s_{34}}^5  {s_{45}}^3+134  {s_{12}}  {s_{23}}  {s_{34}}^5  {s_{45}}^3-192  { \epsilon}  {s_{15}}  {s_{23}}  {s_{34}}^5  {s_{45}}^3-36  {s_{15}}  {s_{23}}  {s_{34}}^5  {s_{45}}^3+244  { \epsilon}  {s_{12}}^3  {s_{15}}^4  {s_{45}}^3+61  {s_{12}}^3  {s_{15}}^4  {s_{45}}^3+4  { \epsilon}  {s_{12}}^3  {s_{23}}^4  {s_{45}}^3+ {s_{12}}^3  {s_{23}}^4  {s_{45}}^3+124  { \epsilon}  {s_{12}}^2  {s_{15}}  {s_{23}}^4  {s_{45}}^3+31  {s_{12}}^2  {s_{15}}  {s_{23}}^4  {s_{45}}^3-120  { \epsilon}  {s_{12}}^3  {s_{34}}^4  {s_{45}}^3-30  {s_{12}}^3  {s_{34}}^4  {s_{45}}^3-656  { \epsilon}  {s_{23}}^3  {s_{34}}^4  {s_{45}}^3-140  {s_{23}}^3  {s_{34}}^4  {s_{45}}^3+616  { \epsilon}  {s_{12}}  {s_{15}}^2  {s_{34}}^4  {s_{45}}^3+154  {s_{12}}  {s_{15}}^2  {s_{34}}^4  {s_{45}}^3-2256  { \epsilon}  {s_{12}}  {s_{23}}^2  {s_{34}}^4  {s_{45}}^3-444  {s_{12}}  {s_{23}}^2  {s_{34}}^4  {s_{45}}^3-288  { \epsilon}  {s_{15}}  {s_{23}}^2  {s_{34}}^4  {s_{45}}^3-48  {s_{15}}  {s_{23}}^2  {s_{34}}^4  {s_{45}}^3-176  { \epsilon}  {s_{12}}^2  {s_{15}}  {s_{34}}^4  {s_{45}}^3-44  {s_{12}}^2  {s_{15}}  {s_{34}}^4  {s_{45}}^3-1512  { \epsilon}  {s_{12}}^2  {s_{23}}  {s_{34}}^4  {s_{45}}^3-306  {s_{12}}^2  {s_{23}}  {s_{34}}^4  {s_{45}}^3+464  { \epsilon}  {s_{15}}^2  {s_{23}}  {s_{34}}^4  {s_{45}}^3+92  {s_{15}}^2  {s_{23}}  {s_{34}}^4  {s_{45}}^3-504  { \epsilon}  {s_{12}}  {s_{15}}  {s_{23}}  {s_{34}}^4  {s_{45}}^3-150  {s_{12}}  {s_{15}}  {s_{23}}  {s_{34}}^4  {s_{45}}^3-56  { \epsilon}  {s_{12}}^4  {s_{15}}^3  {s_{45}}^3-14  {s_{12}}^4  {s_{15}}^3  {s_{45}}^3+52  { \epsilon}  {s_{12}}^4  {s_{23}}^3  {s_{45}}^3+13  {s_{12}}^4  {s_{23}}^3  {s_{45}}^3-112  { \epsilon}  {s_{12}}^2  {s_{15}}^2  {s_{23}}^3  {s_{45}}^3-28  {s_{12}}^2  {s_{15}}^2  {s_{23}}^3  {s_{45}}^3+360  { \epsilon}  {s_{12}}^3  {s_{15}}  {s_{23}}^3  {s_{45}}^3+90  {s_{12}}^3  {s_{15}}  {s_{23}}^3  {s_{45}}^3+160  { \epsilon}  {s_{23}}^4  {s_{34}}^3  {s_{45}}^3+40  {s_{23}}^4  {s_{34}}^3  {s_{45}}^3-680  { \epsilon}  {s_{12}}  {s_{15}}^3  {s_{34}}^3  {s_{45}}^3-170  {s_{12}}  {s_{15}}^3  {s_{34}}^3  {s_{45}}^3+1940  { \epsilon}  {s_{12}}  {s_{23}}^3  {s_{34}}^3  {s_{45}}^3+437  {s_{12}}  {s_{23}}^3  {s_{34}}^3  {s_{45}}^3+800  { \epsilon}  {s_{15}}  {s_{23}}^3  {s_{34}}^3  {s_{45}}^3+152  {s_{15}}  {s_{23}}^3  {s_{34}}^3  {s_{45}}^3-688  { \epsilon}  {s_{12}}^2  {s_{15}}^2  {s_{34}}^3  {s_{45}}^3-172  {s_{12}}^2  {s_{15}}^2  {s_{34}}^3  {s_{45}}^3+3184  { \epsilon}  {s_{12}}^2  {s_{23}}^2  {s_{34}}^3  {s_{45}}^3+652  {s_{12}}^2  {s_{23}}^2  {s_{34}}^3  {s_{45}}^3-320  { \epsilon}  {s_{15}}^2  {s_{23}}^2  {s_{34}}^3  {s_{45}}^3-56  {s_{15}}^2  {s_{23}}^2  {s_{34}}^3  {s_{45}}^3+2480  { \epsilon}  {s_{12}}  {s_{15}}  {s_{23}}^2  {s_{34}}^3  {s_{45}}^3+452  {s_{12}}  {s_{15}}  {s_{23}}^2  {s_{34}}^3  {s_{45}}^3+544  { \epsilon}  {s_{12}}^3  {s_{15}}  {s_{34}}^3  {s_{45}}^3+136  {s_{12}}^3  {s_{15}}  {s_{34}}^3  {s_{45}}^3+1032  { \epsilon}  {s_{12}}^3  {s_{23}}  {s_{34}}^3  {s_{45}}^3+198  {s_{12}}^3  {s_{23}}  {s_{34}}^3  {s_{45}}^3-352  { \epsilon}  {s_{15}}^3  {s_{23}}  {s_{34}}^3  {s_{45}}^3-76  {s_{15}}^3  {s_{23}}  {s_{34}}^3  {s_{45}}^3-1072  { \epsilon}  {s_{12}}  {s_{15}}^2  {s_{23}}  {s_{34}}^3  {s_{45}}^3-160  {s_{12}}  {s_{15}}^2  {s_{23}}  {s_{34}}^3  {s_{45}}^3+1808  { \epsilon}  {s_{12}}^2  {s_{15}}  {s_{23}}  {s_{34}}^3  {s_{45}}^3+428  {s_{12}}^2  {s_{15}}  {s_{23}}  {s_{34}}^3  {s_{45}}^3+8  { \epsilon}  {s_{12}}^5  {s_{23}}^2  {s_{45}}^3+2  {s_{12}}^5  {s_{23}}^2  {s_{45}}^3-276  { \epsilon}  {s_{12}}^2  {s_{15}}^3  {s_{23}}^2  {s_{45}}^3-69  {s_{12}}^2  {s_{15}}^3  {s_{23}}^2  {s_{45}}^3-496  { \epsilon}  {s_{12}}^3  {s_{15}}^2  {s_{23}}^2  {s_{45}}^3-124  {s_{12}}^3  {s_{15}}^2  {s_{23}}^2  {s_{45}}^3-32  { \epsilon}  {s_{12}}^4  {s_{15}}  {s_{23}}^2  {s_{45}}^3-8  {s_{12}}^4  {s_{15}}  {s_{23}}^2  {s_{45}}^3+248  { \epsilon}  {s_{12}}  {s_{15}}^4  {s_{34}}^2  {s_{45}}^3+62  {s_{12}}  {s_{15}}^4  {s_{34}}^2  {s_{45}}^3-348  { \epsilon}  {s_{12}}  {s_{23}}^4  {s_{34}}^2  {s_{45}}^3-87  {s_{12}}  {s_{23}}^4  {s_{34}}^2  {s_{45}}^3-160  { \epsilon}  {s_{15}}  {s_{23}}^4  {s_{34}}^2  {s_{45}}^3-40  {s_{15}}  {s_{23}}^4  {s_{34}}^2  {s_{45}}^3+1420  { \epsilon}  {s_{12}}^2  {s_{15}}^3  {s_{34}}^2  {s_{45}}^3+355  {s_{12}}^2  {s_{15}}^3  {s_{34}}^2  {s_{45}}^3-1688  { \epsilon}  {s_{12}}^2  {s_{23}}^3  {s_{34}}^2  {s_{45}}^3-398  {s_{12}}^2  {s_{23}}^3  {s_{34}}^2  {s_{45}}^3-144  { \epsilon}  {s_{15}}^2  {s_{23}}^3  {s_{34}}^2  {s_{45}}^3-12  {s_{15}}^2  {s_{23}}^3  {s_{34}}^2  {s_{45}}^3-1336  { \epsilon}  {s_{12}}  {s_{15}}  {s_{23}}^3  {s_{34}}^2  {s_{45}}^3-262  {s_{12}}  {s_{15}}  {s_{23}}^3  {s_{34}}^2  {s_{45}}^3-348  { \epsilon}  {s_{12}}^3  {s_{15}}^2  {s_{34}}^2  {s_{45}}^3-87  {s_{12}}^3  {s_{15}}^2  {s_{34}}^2  {s_{45}}^3-1536  { \epsilon}  {s_{12}}^3  {s_{23}}^2  {s_{34}}^2  {s_{45}}^3-336  {s_{12}}^3  {s_{23}}^2  {s_{34}}^2  {s_{45}}^3+224  { \epsilon}  {s_{15}}^3  {s_{23}}^2  {s_{34}}^2  {s_{45}}^3+32  {s_{15}}^3  {s_{23}}^2  {s_{34}}^2  {s_{45}}^3-808  { \epsilon}  {s_{12}}  {s_{15}}^2  {s_{23}}^2  {s_{34}}^2  {s_{45}}^3-178  {s_{12}}  {s_{15}}^2  {s_{23}}^2  {s_{34}}^2  {s_{45}}^3-2064  { \epsilon}  {s_{12}}^2  {s_{15}}  {s_{23}}^2  {s_{34}}^2  {s_{45}}^3-348  {s_{12}}^2  {s_{15}}  {s_{23}}^2  {s_{34}}^2  {s_{45}}^3-124  { \epsilon}  {s_{12}}^4  {s_{15}}  {s_{34}}^2  {s_{45}}^3-31  {s_{12}}^4  {s_{15}}  {s_{34}}^2  {s_{45}}^3-212  { \epsilon}  {s_{12}}^4  {s_{23}}  {s_{34}}^2  {s_{45}}^3-41  {s_{12}}^4  {s_{23}}  {s_{34}}^2  {s_{45}}^3+80  { \epsilon}  {s_{15}}^4  {s_{23}}  {s_{34}}^2  {s_{45}}^3+20  {s_{15}}^4  {s_{23}}  {s_{34}}^2  {s_{45}}^3+1428  { \epsilon}  {s_{12}}  {s_{15}}^3  {s_{23}}  {s_{34}}^2  {s_{45}}^3+297  {s_{12}}  {s_{15}}^3  {s_{23}}  {s_{34}}^2  {s_{45}}^3-588  { \epsilon}  {s_{12}}^2  {s_{15}}^2  {s_{23}}  {s_{34}}^2  {s_{45}}^3-231  {s_{12}}^2  {s_{15}}^2  {s_{23}}  {s_{34}}^2  {s_{45}}^3-676  { \epsilon}  {s_{12}}^3  {s_{15}}  {s_{23}}  {s_{34}}^2  {s_{45}}^3-133  {s_{12}}^3  {s_{15}}  {s_{23}}  {s_{34}}^2  {s_{45}}^3+264  { \epsilon}  {s_{12}}^2  {s_{15}}^4  {s_{23}}  {s_{45}}^3+66  {s_{12}}^2  {s_{15}}^4  {s_{23}}  {s_{45}}^3-112  { \epsilon}  {s_{12}}^3  {s_{15}}^3  {s_{23}}  {s_{45}}^3-28  {s_{12}}^3  {s_{15}}^3  {s_{23}}  {s_{45}}^3+36  { \epsilon}  {s_{12}}^4  {s_{15}}^2  {s_{23}}  {s_{45}}^3+9  {s_{12}}^4  {s_{15}}^2  {s_{23}}  {s_{45}}^3-8  { \epsilon}  {s_{12}}^5  {s_{15}}  {s_{23}}  {s_{45}}^3-2  {s_{12}}^5  {s_{15}}  {s_{23}}  {s_{45}}^3-676  { \epsilon}  {s_{12}}^2  {s_{15}}^4  {s_{34}}  {s_{45}}^3-169  {s_{12}}^2  {s_{15}}^4  {s_{34}}  {s_{45}}^3+184  { \epsilon}  {s_{12}}^2  {s_{23}}^4  {s_{34}}  {s_{45}}^3+46  {s_{12}}^2  {s_{23}}^4  {s_{34}}  {s_{45}}^3+68  { \epsilon}  {s_{12}}  {s_{15}}  {s_{23}}^4  {s_{34}}  {s_{45}}^3+17  {s_{12}}  {s_{15}}  {s_{23}}^4  {s_{34}}  {s_{45}}^3-320  { \epsilon}  {s_{12}}^3  {s_{15}}^3  {s_{34}}  {s_{45}}^3-80  {s_{12}}^3  {s_{15}}^3  {s_{34}}  {s_{45}}^3+412  { \epsilon}  {s_{12}}^3  {s_{23}}^3  {s_{34}}  {s_{45}}^3+103  {s_{12}}^3  {s_{23}}^3  {s_{34}}  {s_{45}}^3+148  { \epsilon}  {s_{12}}  {s_{15}}^2  {s_{23}}^3  {s_{34}}  {s_{45}}^3+13  {s_{12}}  {s_{15}}^2  {s_{23}}^3  {s_{34}}  {s_{45}}^3-116  { \epsilon}  {s_{12}}^2  {s_{15}}  {s_{23}}^3  {s_{34}}  {s_{45}}^3-53  {s_{12}}^2  {s_{15}}  {s_{23}}^3  {s_{34}}  {s_{45}}^3+180  { \epsilon}  {s_{12}}^4  {s_{15}}^2  {s_{34}}  {s_{45}}^3+45  {s_{12}}^4  {s_{15}}^2  {s_{34}}  {s_{45}}^3+216  { \epsilon}  {s_{12}}^4  {s_{23}}^2  {s_{34}}  {s_{45}}^3+54  {s_{12}}^4  {s_{23}}^2  {s_{34}}  {s_{45}}^3+268  { \epsilon}  {s_{12}}  {s_{15}}^3  {s_{23}}^2  {s_{34}}  {s_{45}}^3+91  {s_{12}}  {s_{15}}^3  {s_{23}}^2  {s_{34}}  {s_{45}}^3+1072  { \epsilon}  {s_{12}}^2  {s_{15}}^2  {s_{23}}^2  {s_{34}}  {s_{45}}^3+244  {s_{12}}^2  {s_{15}}^2  {s_{23}}^2  {s_{34}}  {s_{45}}^3+72  { \epsilon}  {s_{12}}^3  {s_{15}}  {s_{23}}^2  {s_{34}}  {s_{45}}^3-30  {s_{12}}^3  {s_{15}}  {s_{23}}^2  {s_{34}}  {s_{45}}^3-484  { \epsilon}  {s_{12}}  {s_{15}}^4  {s_{23}}  {s_{34}}  {s_{45}}^3-121  {s_{12}}  {s_{15}}^4  {s_{23}}  {s_{34}}  {s_{45}}^3-160  { \epsilon}  {s_{12}}^2  {s_{15}}^3  {s_{23}}  {s_{34}}  {s_{45}}^3-4  {s_{12}}^2  {s_{15}}^3  {s_{23}}  {s_{34}}  {s_{45}}^3+444  { \epsilon}  {s_{12}}^3  {s_{15}}^2  {s_{23}}  {s_{34}}  {s_{45}}^3+135  {s_{12}}^3  {s_{15}}^2  {s_{23}}  {s_{34}}  {s_{45}}^3-92  { \epsilon}  {s_{12}}^4  {s_{15}}  {s_{23}}  {s_{34}}  {s_{45}}^3-35  {s_{12}}^4  {s_{15}}  {s_{23}}  {s_{34}}  {s_{45}}^3-256  { \epsilon}  {s_{23}}^3  {s_{34}}^5  {s_{45}}^2-52  {s_{23}}^3  {s_{34}}^5  {s_{45}}^2-792  { \epsilon}  {s_{12}}  {s_{23}}^2  {s_{34}}^5  {s_{45}}^2-162  {s_{12}}  {s_{23}}^2  {s_{34}}^5  {s_{45}}^2+192  { \epsilon}  {s_{15}}  {s_{23}}^2  {s_{34}}^5  {s_{45}}^2+36  {s_{15}}  {s_{23}}^2  {s_{34}}^5  {s_{45}}^2+120  { \epsilon}  {s_{12}}^2  {s_{15}}  {s_{34}}^5  {s_{45}}^2+30  {s_{12}}^2  {s_{15}}  {s_{34}}^5  {s_{45}}^2-336  { \epsilon}  {s_{12}}^2  {s_{23}}  {s_{34}}^5  {s_{45}}^2-72  {s_{12}}^2  {s_{23}}  {s_{34}}^5  {s_{45}}^2+448  { \epsilon}  {s_{12}}  {s_{15}}  {s_{23}}  {s_{34}}^5  {s_{45}}^2+88  {s_{12}}  {s_{15}}  {s_{23}}  {s_{34}}^5  {s_{45}}^2-60  { \epsilon}  {s_{12}}^4  {s_{15}}^4  {s_{45}}^2-15  {s_{12}}^4  {s_{15}}^4  {s_{45}}^2+8  { \epsilon}  {s_{12}}^4  {s_{23}}^4  {s_{45}}^2+2  {s_{12}}^4  {s_{23}}^4  {s_{45}}^2-52  { \epsilon}  {s_{12}}^3  {s_{15}}  {s_{23}}^4  {s_{45}}^2-13  {s_{12}}^3  {s_{15}}  {s_{23}}^4  {s_{45}}^2+176  { \epsilon}  {s_{23}}^4  {s_{34}}^4  {s_{45}}^2+44  {s_{23}}^4  {s_{34}}^4  {s_{45}}^2+1520  { \epsilon}  {s_{12}}  {s_{23}}^3  {s_{34}}^4  {s_{45}}^2+332  {s_{12}}  {s_{23}}^3  {s_{34}}^4  {s_{45}}^2+208  { \epsilon}  {s_{15}}  {s_{23}}^3  {s_{34}}^4  {s_{45}}^2+28  {s_{15}}  {s_{23}}^3  {s_{34}}^4  {s_{45}}^2-544  { \epsilon}  {s_{12}}^2  {s_{15}}^2  {s_{34}}^4  {s_{45}}^2-136  {s_{12}}^2  {s_{15}}^2  {s_{34}}^4  {s_{45}}^2+2144  { \epsilon}  {s_{12}}^2  {s_{23}}^2  {s_{34}}^4  {s_{45}}^2+440  {s_{12}}^2  {s_{23}}^2  {s_{34}}^4  {s_{45}}^2-320  { \epsilon}  {s_{15}}^2  {s_{23}}^2  {s_{34}}^4  {s_{45}}^2-56  {s_{15}}^2  {s_{23}}^2  {s_{34}}^4  {s_{45}}^2+400  { \epsilon}  {s_{12}}  {s_{15}}  {s_{23}}^2  {s_{34}}^4  {s_{45}}^2+76  {s_{12}}  {s_{15}}  {s_{23}}^2  {s_{34}}^4  {s_{45}}^2+552  { \epsilon}  {s_{12}}^3  {s_{23}}  {s_{34}}^4  {s_{45}}^2+114  {s_{12}}^3  {s_{23}}  {s_{34}}^4  {s_{45}}^2-1064  { \epsilon}  {s_{12}}  {s_{15}}^2  {s_{23}}  {s_{34}}^4  {s_{45}}^2-218  {s_{12}}  {s_{15}}^2  {s_{23}}  {s_{34}}^4  {s_{45}}^2-192  { \epsilon}  {s_{12}}^2  {s_{15}}  {s_{23}}  {s_{34}}^4  {s_{45}}^2-24  { \epsilon}  {s_{12}}^5  {s_{23}}^3  {s_{45}}^2-6  {s_{12}}^5  {s_{23}}^3  {s_{45}}^2+12  { \epsilon}  {s_{12}}^3  {s_{15}}^2  {s_{23}}^3  {s_{45}}^2+3  {s_{12}}^3  {s_{15}}^2  {s_{23}}^3  {s_{45}}^2-132  { \epsilon}  {s_{12}}^4  {s_{15}}  {s_{23}}^3  {s_{45}}^2-33  {s_{12}}^4  {s_{15}}  {s_{23}}^3  {s_{45}}^2-624  { \epsilon}  {s_{12}}  {s_{23}}^4  {s_{34}}^3  {s_{45}}^2-156  {s_{12}}  {s_{23}}^4  {s_{34}}^3  {s_{45}}^2-176  { \epsilon}  {s_{15}}  {s_{23}}^4  {s_{34}}^3  {s_{45}}^2-44  {s_{15}}  {s_{23}}^4  {s_{34}}^3  {s_{45}}^2+728  { \epsilon}  {s_{12}}^2  {s_{15}}^3  {s_{34}}^3  {s_{45}}^2+182  {s_{12}}^2  {s_{15}}^3  {s_{34}}^3  {s_{45}}^2-2500  { \epsilon}  {s_{12}}^2  {s_{23}}^3  {s_{34}}^3  {s_{45}}^2-577  {s_{12}}^2  {s_{23}}^3  {s_{34}}^3  {s_{45}}^2+48  { \epsilon}  {s_{15}}^2  {s_{23}}^3  {s_{34}}^3  {s_{45}}^2+24  {s_{15}}^2  {s_{23}}^3  {s_{34}}^3  {s_{45}}^2-1028  { \epsilon}  {s_{12}}  {s_{15}}  {s_{23}}^3  {s_{34}}^3  {s_{45}}^2-173  {s_{12}}  {s_{15}}  {s_{23}}^3  {s_{34}}^3  {s_{45}}^2+428  { \epsilon}  {s_{12}}^3  {s_{15}}^2  {s_{34}}^3  {s_{45}}^2+107  {s_{12}}^3  {s_{15}}^2  {s_{34}}^3  {s_{45}}^2-1980  { \epsilon}  {s_{12}}^3  {s_{23}}^2  {s_{34}}^3  {s_{45}}^2-423  {s_{12}}^3  {s_{23}}^2  {s_{34}}^3  {s_{45}}^2+128  { \epsilon}  {s_{15}}^3  {s_{23}}^2  {s_{34}}^3  {s_{45}}^2+20  {s_{15}}^3  {s_{23}}^2  {s_{34}}^3  {s_{45}}^2+356  { \epsilon}  {s_{12}}  {s_{15}}^2  {s_{23}}^2  {s_{34}}^3  {s_{45}}^2+41  {s_{12}}  {s_{15}}^2  {s_{23}}^2  {s_{34}}^3  {s_{45}}^2-748  { \epsilon}  {s_{12}}^2  {s_{15}}  {s_{23}}^2  {s_{34}}^3  {s_{45}}^2-79  {s_{12}}^2  {s_{15}}  {s_{23}}^2  {s_{34}}^3  {s_{45}}^2-120  { \epsilon}  {s_{12}}^4  {s_{15}}  {s_{34}}^3  {s_{45}}^2-30  {s_{12}}^4  {s_{15}}  {s_{34}}^3  {s_{45}}^2-216  { \epsilon}  {s_{12}}^4  {s_{23}}  {s_{34}}^3  {s_{45}}^2-42  {s_{12}}^4  {s_{23}}  {s_{34}}^3  {s_{45}}^2+848  { \epsilon}  {s_{12}}  {s_{15}}^3  {s_{23}}  {s_{34}}^3  {s_{45}}^2+188  {s_{12}}  {s_{15}}^3  {s_{23}}  {s_{34}}^3  {s_{45}}^2+876  { \epsilon}  {s_{12}}^2  {s_{15}}^2  {s_{23}}  {s_{34}}^3  {s_{45}}^2+111  {s_{12}}^2  {s_{15}}^2  {s_{23}}  {s_{34}}^3  {s_{45}}^2+20  { \epsilon}  {s_{12}}^3  {s_{15}}  {s_{23}}  {s_{34}}^3  {s_{45}}^2-19  {s_{12}}^3  {s_{15}}  {s_{23}}  {s_{34}}^3  {s_{45}}^2+132  { \epsilon}  {s_{12}}^3  {s_{15}}^3  {s_{23}}^2  {s_{45}}^2+33  {s_{12}}^3  {s_{15}}^3  {s_{23}}^2  {s_{45}}^2+180  { \epsilon}  {s_{12}}^4  {s_{15}}^2  {s_{23}}^2  {s_{45}}^2+45  {s_{12}}^4  {s_{15}}^2  {s_{23}}^2  {s_{45}}^2+48  { \epsilon}  {s_{12}}^5  {s_{15}}  {s_{23}}^2  {s_{45}}^2+12  {s_{12}}^5  {s_{15}}  {s_{23}}^2  {s_{45}}^2-304  { \epsilon}  {s_{12}}^2  {s_{15}}^4  {s_{34}}^2  {s_{45}}^2-76  {s_{12}}^2  {s_{15}}^4  {s_{34}}^2  {s_{45}}^2+668  { \epsilon}  {s_{12}}^2  {s_{23}}^4  {s_{34}}^2  {s_{45}}^2+167  {s_{12}}^2  {s_{23}}^4  {s_{34}}^2  {s_{45}}^2+388  { \epsilon}  {s_{12}}  {s_{15}}  {s_{23}}^4  {s_{34}}^2  {s_{45}}^2+97  {s_{12}}  {s_{15}}  {s_{23}}^4  {s_{34}}^2  {s_{45}}^2-792  { \epsilon}  {s_{12}}^3  {s_{15}}^3  {s_{34}}^2  {s_{45}}^2-198  {s_{12}}^3  {s_{15}}^3  {s_{34}}^2  {s_{45}}^2+1536  { \epsilon}  {s_{12}}^3  {s_{23}}^3  {s_{34}}^2  {s_{45}}^2+372  {s_{12}}^3  {s_{23}}^3  {s_{34}}^2  {s_{45}}^2-128  { \epsilon}  {s_{12}}  {s_{15}}^2  {s_{23}}^3  {s_{34}}^2  {s_{45}}^2-68  {s_{12}}  {s_{15}}^2  {s_{23}}^3  {s_{34}}^2  {s_{45}}^2+528  { \epsilon}  {s_{12}}^2  {s_{15}}  {s_{23}}^3  {s_{34}}^2  {s_{45}}^2+72  {s_{12}}^2  {s_{15}}  {s_{23}}^3  {s_{34}}^2  {s_{45}}^2+56  { \epsilon}  {s_{12}}^4  {s_{15}}^2  {s_{34}}^2  {s_{45}}^2+14  {s_{12}}^4  {s_{15}}^2  {s_{34}}^2  {s_{45}}^2+732  { \epsilon}  {s_{12}}^4  {s_{23}}^2  {s_{34}}^2  {s_{45}}^2+171  {s_{12}}^4  {s_{23}}^2  {s_{34}}^2  {s_{45}}^2-28  { \epsilon}  {s_{12}}  {s_{15}}^3  {s_{23}}^2  {s_{34}}^2  {s_{45}}^2+29  {s_{12}}  {s_{15}}^3  {s_{23}}^2  {s_{34}}^2  {s_{45}}^2+224  { \epsilon}  {s_{12}}^2  {s_{15}}^2  {s_{23}}^2  {s_{34}}^2  {s_{45}}^2+56  {s_{12}}^2  {s_{15}}^2  {s_{23}}^2  {s_{34}}^2  {s_{45}}^2-240  { \epsilon}  {s_{12}}^3  {s_{15}}  {s_{23}}^2  {s_{34}}^2  {s_{45}}^2-132  {s_{12}}^3  {s_{15}}  {s_{23}}^2  {s_{34}}^2  {s_{45}}^2-232  { \epsilon}  {s_{12}}  {s_{15}}^4  {s_{23}}  {s_{34}}^2  {s_{45}}^2-58  {s_{12}}  {s_{15}}^4  {s_{23}}  {s_{34}}^2  {s_{45}}^2-828  { \epsilon}  {s_{12}}^2  {s_{15}}^3  {s_{23}}  {s_{34}}^2  {s_{45}}^2-159  {s_{12}}^2  {s_{15}}^3  {s_{23}}  {s_{34}}^2  {s_{45}}^2+192  { \epsilon}  {s_{12}}^3  {s_{15}}^2  {s_{23}}  {s_{34}}^2  {s_{45}}^2+108  {s_{12}}^3  {s_{15}}^2  {s_{23}}  {s_{34}}^2  {s_{45}}^2-380  { \epsilon}  {s_{12}}^4  {s_{15}}  {s_{23}}  {s_{34}}^2  {s_{45}}^2-95  {s_{12}}^4  {s_{15}}  {s_{23}}  {s_{34}}^2  {s_{45}}^2-92  { \epsilon}  {s_{12}}^3  {s_{15}}^4  {s_{23}}  {s_{45}}^2-23  {s_{12}}^3  {s_{15}}^4  {s_{23}}  {s_{45}}^2+4  { \epsilon}  {s_{12}}^4  {s_{15}}^3  {s_{23}}  {s_{45}}^2+ {s_{12}}^4  {s_{15}}^3  {s_{23}}  {s_{45}}^2-24  { \epsilon}  {s_{12}}^5  {s_{15}}^2  {s_{23}}  {s_{45}}^2-6  {s_{12}}^5  {s_{15}}^2  {s_{23}}  {s_{45}}^2+364  { \epsilon}  {s_{12}}^3  {s_{15}}^4  {s_{34}}  {s_{45}}^2+91  {s_{12}}^3  {s_{15}}^4  {s_{34}}  {s_{45}}^2-228  { \epsilon}  {s_{12}}^3  {s_{23}}^4  {s_{34}}  {s_{45}}^2-57  {s_{12}}^3  {s_{23}}^4  {s_{34}}  {s_{45}}^2-40  { \epsilon}  {s_{12}}^2  {s_{15}}  {s_{23}}^4  {s_{34}}  {s_{45}}^2-10  {s_{12}}^2  {s_{15}}  {s_{23}}^4  {s_{34}}  {s_{45}}^2+124  { \epsilon}  {s_{12}}^4  {s_{15}}^3  {s_{34}}  {s_{45}}^2+31  {s_{12}}^4  {s_{15}}^3  {s_{34}}  {s_{45}}^2-276  { \epsilon}  {s_{12}}^4  {s_{23}}^3  {s_{34}}  {s_{45}}^2-69  {s_{12}}^4  {s_{23}}^3  {s_{34}}  {s_{45}}^2+160  { \epsilon}  {s_{12}}^2  {s_{15}}^2  {s_{23}}^3  {s_{34}}  {s_{45}}^2+52  {s_{12}}^2  {s_{15}}^2  {s_{23}}^3  {s_{34}}  {s_{45}}^2+448  { \epsilon}  {s_{12}}^3  {s_{15}}  {s_{23}}^3  {s_{34}}  {s_{45}}^2+124  {s_{12}}^3  {s_{15}}  {s_{23}}^3  {s_{34}}  {s_{45}}^2-104  { \epsilon}  {s_{12}}^5  {s_{23}}^2  {s_{34}}  {s_{45}}^2-26  {s_{12}}^5  {s_{23}}^2  {s_{34}}  {s_{45}}^2-536  { \epsilon}  {s_{12}}^2  {s_{15}}^3  {s_{23}}^2  {s_{34}}  {s_{45}}^2-146  {s_{12}}^2  {s_{15}}^3  {s_{23}}^2  {s_{34}}  {s_{45}}^2-504  { \epsilon}  {s_{12}}^3  {s_{15}}^2  {s_{23}}^2  {s_{34}}  {s_{45}}^2-126  {s_{12}}^3  {s_{15}}^2  {s_{23}}^2  {s_{34}}  {s_{45}}^2+252  { \epsilon}  {s_{12}}^4  {s_{15}}  {s_{23}}^2  {s_{34}}  {s_{45}}^2+75  {s_{12}}^4  {s_{15}}  {s_{23}}^2  {s_{34}}  {s_{45}}^2+416  { \epsilon}  {s_{12}}^2  {s_{15}}^4  {s_{23}}  {s_{34}}  {s_{45}}^2+104  {s_{12}}^2  {s_{15}}^4  {s_{23}}  {s_{34}}  {s_{45}}^2-80  { \epsilon}  {s_{12}}^3  {s_{15}}^3  {s_{23}}  {s_{34}}  {s_{45}}^2-32  {s_{12}}^3  {s_{15}}^3  {s_{23}}  {s_{34}}  {s_{45}}^2-100  { \epsilon}  {s_{12}}^4  {s_{15}}^2  {s_{23}}  {s_{34}}  {s_{45}}^2-37  {s_{12}}^4  {s_{15}}^2  {s_{23}}  {s_{34}}  {s_{45}}^2+104  { \epsilon}  {s_{12}}^5  {s_{15}}  {s_{23}}  {s_{34}}  {s_{45}}^2+26  {s_{12}}^5  {s_{15}}  {s_{23}}  {s_{34}}  {s_{45}}^2+64  { \epsilon}  {s_{23}}^4  {s_{34}}^5  {s_{45}}+16  {s_{23}}^4  {s_{34}}^5  {s_{45}}+440  { \epsilon}  {s_{12}}  {s_{23}}^3  {s_{34}}^5  {s_{45}}+98  {s_{12}}  {s_{23}}^3  {s_{34}}^5  {s_{45}}-64  { \epsilon}  {s_{15}}  {s_{23}}^3  {s_{34}}^5  {s_{45}}-16  {s_{15}}  {s_{23}}^3  {s_{34}}^5  {s_{45}}+312  { \epsilon}  {s_{12}}^2  {s_{23}}^2  {s_{34}}^5  {s_{45}}+66  {s_{12}}^2  {s_{23}}^2  {s_{34}}^5  {s_{45}}-344  { \epsilon}  {s_{12}}  {s_{15}}  {s_{23}}^2  {s_{34}}^5  {s_{45}}-74  {s_{12}}  {s_{15}}  {s_{23}}^2  {s_{34}}^5  {s_{45}}-216  { \epsilon}  {s_{12}}^2  {s_{15}}  {s_{23}}  {s_{34}}^5  {s_{45}}-42  {s_{12}}^2  {s_{15}}  {s_{23}}  {s_{34}}^5  {s_{45}}-408  { \epsilon}  {s_{12}}  {s_{23}}^4  {s_{34}}^4  {s_{45}}-102  {s_{12}}  {s_{23}}^4  {s_{34}}^4  {s_{45}}-64  { \epsilon}  {s_{15}}  {s_{23}}^4  {s_{34}}^4  {s_{45}}-16  {s_{15}}  {s_{23}}^4  {s_{34}}^4  {s_{45}}-1288  { \epsilon}  {s_{12}}^2  {s_{23}}^3  {s_{34}}^4  {s_{45}}-298  {s_{12}}^2  {s_{23}}^3  {s_{34}}^4  {s_{45}}+64  { \epsilon}  {s_{15}}^2  {s_{23}}^3  {s_{34}}^4  {s_{45}}+16  {s_{15}}^2  {s_{23}}^3  {s_{34}}^4  {s_{45}}-152  { \epsilon}  {s_{12}}  {s_{15}}  {s_{23}}^3  {s_{34}}^4  {s_{45}}-14  {s_{12}}  {s_{15}}  {s_{23}}^3  {s_{34}}^4  {s_{45}}+120  { \epsilon}  {s_{12}}^3  {s_{15}}^2  {s_{34}}^4  {s_{45}}+30  {s_{12}}^3  {s_{15}}^2  {s_{34}}^4  {s_{45}}-720  { \epsilon}  {s_{12}}^3  {s_{23}}^2  {s_{34}}^4  {s_{45}}-156  {s_{12}}^3  {s_{23}}^2  {s_{34}}^4  {s_{45}}+464  { \epsilon}  {s_{12}}  {s_{15}}^2  {s_{23}}^2  {s_{34}}^4  {s_{45}}+92  {s_{12}}  {s_{15}}^2  {s_{23}}^2  {s_{34}}^4  {s_{45}}+576  { \epsilon}  {s_{12}}^2  {s_{15}}  {s_{23}}^2  {s_{34}}^4  {s_{45}}+144  {s_{12}}^2  {s_{15}}  {s_{23}}^2  {s_{34}}^4  {s_{45}}+424  { \epsilon}  {s_{12}}^2  {s_{15}}^2  {s_{23}}  {s_{34}}^4  {s_{45}}+82  {s_{12}}^2  {s_{15}}^2  {s_{23}}  {s_{34}}^4  {s_{45}}+408  { \epsilon}  {s_{12}}^3  {s_{15}}  {s_{23}}  {s_{34}}^4  {s_{45}}+78  {s_{12}}^3  {s_{15}}  {s_{23}}  {s_{34}}^4  {s_{45}}+744  { \epsilon}  {s_{12}}^2  {s_{23}}^4  {s_{34}}^3  {s_{45}}+186  {s_{12}}^2  {s_{23}}^4  {s_{34}}^3  {s_{45}}+344  { \epsilon}  {s_{12}}  {s_{15}}  {s_{23}}^4  {s_{34}}^3  {s_{45}}+86  {s_{12}}  {s_{15}}  {s_{23}}^4  {s_{34}}^3  {s_{45}}-240  { \epsilon}  {s_{12}}^3  {s_{15}}^3  {s_{34}}^3  {s_{45}}-60  {s_{12}}^3  {s_{15}}^3  {s_{34}}^3  {s_{45}}+1344  { \epsilon}  {s_{12}}^3  {s_{23}}^3  {s_{34}}^3  {s_{45}}+324  {s_{12}}^3  {s_{23}}^3  {s_{34}}^3  {s_{45}}-224  { \epsilon}  {s_{12}}  {s_{15}}^2  {s_{23}}^3  {s_{34}}^3  {s_{45}}-68  {s_{12}}  {s_{15}}^2  {s_{23}}^3  {s_{34}}^3  {s_{45}}+36  { \epsilon}  {s_{12}}^2  {s_{15}}  {s_{23}}^3  {s_{34}}^3  {s_{45}}-27  {s_{12}}^2  {s_{15}}  {s_{23}}^3  {s_{34}}^3  {s_{45}}-120  { \epsilon}  {s_{12}}^4  {s_{15}}^2  {s_{34}}^3  {s_{45}}-30  {s_{12}}^4  {s_{15}}^2  {s_{34}}^3  {s_{45}}+504  { \epsilon}  {s_{12}}^4  {s_{23}}^2  {s_{34}}^3  {s_{45}}+114  {s_{12}}^4  {s_{23}}^2  {s_{34}}^3  {s_{45}}-120  { \epsilon}  {s_{12}}  {s_{15}}^3  {s_{23}}^2  {s_{34}}^3  {s_{45}}-18  {s_{12}}  {s_{15}}^3  {s_{23}}^2  {s_{34}}^3  {s_{45}}-324  { \epsilon}  {s_{12}}^2  {s_{15}}^2  {s_{23}}^2  {s_{34}}^3  {s_{45}}-57  {s_{12}}^2  {s_{15}}^2  {s_{23}}^2  {s_{34}}^3  {s_{45}}-788  { \epsilon}  {s_{12}}^3  {s_{15}}  {s_{23}}^2  {s_{34}}^3  {s_{45}}-221  {s_{12}}^3  {s_{15}}  {s_{23}}^2  {s_{34}}^3  {s_{45}}-360  { \epsilon}  {s_{12}}^2  {s_{15}}^3  {s_{23}}  {s_{34}}^3  {s_{45}}-78  {s_{12}}^2  {s_{15}}^3  {s_{23}}  {s_{34}}^3  {s_{45}}-124  { \epsilon}  {s_{12}}^3  {s_{15}}^2  {s_{23}}  {s_{34}}^3  {s_{45}}+5  {s_{12}}^3  {s_{15}}^2  {s_{23}}  {s_{34}}^3  {s_{45}}-288  { \epsilon}  {s_{12}}^4  {s_{15}}  {s_{23}}  {s_{34}}^3  {s_{45}}-60  {s_{12}}^4  {s_{15}}  {s_{23}}  {s_{34}}^3  {s_{45}}+120  { \epsilon}  {s_{12}}^3  {s_{15}}^4  {s_{34}}^2  {s_{45}}+30  {s_{12}}^3  {s_{15}}^4  {s_{34}}^2  {s_{45}}-520  { \epsilon}  {s_{12}}^3  {s_{23}}^4  {s_{34}}^2  {s_{45}}-130  {s_{12}}^3  {s_{23}}^4  {s_{34}}^2  {s_{45}}-340  { \epsilon}  {s_{12}}^2  {s_{15}}  {s_{23}}^4  {s_{34}}^2  {s_{45}}-85  {s_{12}}^2  {s_{15}}  {s_{23}}^4  {s_{34}}^2  {s_{45}}+180  { \epsilon}  {s_{12}}^4  {s_{15}}^3  {s_{34}}^2  {s_{45}}+45  {s_{12}}^4  {s_{15}}^3  {s_{34}}^2  {s_{45}}-584  { \epsilon}  {s_{12}}^4  {s_{23}}^3  {s_{34}}^2  {s_{45}}-146  {s_{12}}^4  {s_{23}}^3  {s_{34}}^2  {s_{45}}+528  { \epsilon}  {s_{12}}^2  {s_{15}}^2  {s_{23}}^3  {s_{34}}^2  {s_{45}}+144  {s_{12}}^2  {s_{15}}^2  {s_{23}}^3  {s_{34}}^2  {s_{45}}+424  { \epsilon}  {s_{12}}^3  {s_{15}}  {s_{23}}^3  {s_{34}}^2  {s_{45}}+118  {s_{12}}^3  {s_{15}}  {s_{23}}^3  {s_{34}}^2  {s_{45}}-96  { \epsilon}  {s_{12}}^5  {s_{23}}^2  {s_{34}}^2  {s_{45}}-24  {s_{12}}^5  {s_{23}}^2  {s_{34}}^2  {s_{45}}-340  { \epsilon}  {s_{12}}^2  {s_{15}}^3  {s_{23}}^2  {s_{34}}^2  {s_{45}}-97  {s_{12}}^2  {s_{15}}^3  {s_{23}}^2  {s_{34}}^2  {s_{45}}+8  { \epsilon}  {s_{12}}^3  {s_{15}}^2  {s_{23}}^2  {s_{34}}^2  {s_{45}}+2  {s_{12}}^3  {s_{15}}^2  {s_{23}}^2  {s_{34}}^2  {s_{45}}+732  { \epsilon}  {s_{12}}^4  {s_{15}}  {s_{23}}^2  {s_{34}}^2  {s_{45}}+195  {s_{12}}^4  {s_{15}}  {s_{23}}^2  {s_{34}}^2  {s_{45}}+152  { \epsilon}  {s_{12}}^2  {s_{15}}^4  {s_{23}}  {s_{34}}^2  {s_{45}}+38  {s_{12}}^2  {s_{15}}^4  {s_{23}}  {s_{34}}^2  {s_{45}}-32  { \epsilon}  {s_{12}}^3  {s_{15}}^3  {s_{23}}  {s_{34}}^2  {s_{45}}-20  {s_{12}}^3  {s_{15}}^3  {s_{23}}  {s_{34}}^2  {s_{45}}-328  { \epsilon}  {s_{12}}^4  {s_{15}}^2  {s_{23}}  {s_{34}}^2  {s_{45}}-94  {s_{12}}^4  {s_{15}}^2  {s_{23}}  {s_{34}}^2  {s_{45}}+96  { \epsilon}  {s_{12}}^5  {s_{15}}  {s_{23}}  {s_{34}}^2  {s_{45}}+24  {s_{12}}^5  {s_{15}}  {s_{23}}  {s_{34}}^2  {s_{45}}-60  { \epsilon}  {s_{12}}^4  {s_{15}}^4  {s_{34}}  {s_{45}}-15  {s_{12}}^4  {s_{15}}^4  {s_{34}}  {s_{45}}+120  { \epsilon}  {s_{12}}^4  {s_{23}}^4  {s_{34}}  {s_{45}}+30  {s_{12}}^4  {s_{23}}^4  {s_{34}}  {s_{45}}+60  { \epsilon}  {s_{12}}^3  {s_{15}}  {s_{23}}^4  {s_{34}}  {s_{45}}+15  {s_{12}}^3  {s_{15}}  {s_{23}}^4  {s_{34}}  {s_{45}}+88  { \epsilon}  {s_{12}}^5  {s_{23}}^3  {s_{34}}  {s_{45}}+22  {s_{12}}^5  {s_{23}}^3  {s_{34}}  {s_{45}}-212  { \epsilon}  {s_{12}}^3  {s_{15}}^2  {s_{23}}^3  {s_{34}}  {s_{45}}-53  {s_{12}}^3  {s_{15}}^2  {s_{23}}^3  {s_{34}}  {s_{45}}-244  { \epsilon}  {s_{12}}^4  {s_{15}}  {s_{23}}^3  {s_{34}}  {s_{45}}-61  {s_{12}}^4  {s_{15}}  {s_{23}}^3  {s_{34}}  {s_{45}}+244  { \epsilon}  {s_{12}}^3  {s_{15}}^3  {s_{23}}^2  {s_{34}}  {s_{45}}+61  {s_{12}}^3  {s_{15}}^3  {s_{23}}^2  {s_{34}}  {s_{45}}+68  { \epsilon}  {s_{12}}^4  {s_{15}}^2  {s_{23}}^2  {s_{34}}  {s_{45}}+17  {s_{12}}^4  {s_{15}}^2  {s_{23}}^2  {s_{34}}  {s_{45}}-176  { \epsilon}  {s_{12}}^5  {s_{15}}  {s_{23}}^2  {s_{34}}  {s_{45}}-44  {s_{12}}^5  {s_{15}}  {s_{23}}^2  {s_{34}}  {s_{45}}-92  { \epsilon}  {s_{12}}^3  {s_{15}}^4  {s_{23}}  {s_{34}}  {s_{45}}-23  {s_{12}}^3  {s_{15}}^4  {s_{23}}  {s_{34}}  {s_{45}}+116  { \epsilon}  {s_{12}}^4  {s_{15}}^3  {s_{23}}  {s_{34}}  {s_{45}}+29  {s_{12}}^4  {s_{15}}^3  {s_{23}}  {s_{34}}  {s_{45}}+88  { \epsilon}  {s_{12}}^5  {s_{15}}^2  {s_{23}}  {s_{34}}  {s_{45}}+22  {s_{12}}^5  {s_{15}}^2  {s_{23}}  {s_{34}}  {s_{45}}-96  { \epsilon}  {s_{12}}  {s_{23}}^4  {s_{34}}^5-24  {s_{12}}  {s_{23}}^4  {s_{34}}^5-96  { \epsilon}  {s_{12}}^2  {s_{23}}^3  {s_{34}}^5-24  {s_{12}}^2  {s_{23}}^3  {s_{34}}^5+96  { \epsilon}  {s_{12}}  {s_{15}}  {s_{23}}^3  {s_{34}}^5+24  {s_{12}}  {s_{15}}  {s_{23}}^3  {s_{34}}^5+96  { \epsilon}  {s_{12}}^2  {s_{15}}  {s_{23}}^2  {s_{34}}^5+24  {s_{12}}^2  {s_{15}}  {s_{23}}^2  {s_{34}}^5+288  { \epsilon}  {s_{12}}^2  {s_{23}}^4  {s_{34}}^4+72  {s_{12}}^2  {s_{23}}^4  {s_{34}}^4+96  { \epsilon}  {s_{12}}  {s_{15}}  {s_{23}}^4  {s_{34}}^4+24  {s_{12}}  {s_{15}}  {s_{23}}^4  {s_{34}}^4+288  { \epsilon}  {s_{12}}^3  {s_{23}}^3  {s_{34}}^4+72  {s_{12}}^3  {s_{23}}^3  {s_{34}}^4-96  { \epsilon}  {s_{12}}  {s_{15}}^2  {s_{23}}^3  {s_{34}}^4-24  {s_{12}}  {s_{15}}^2  {s_{23}}^3  {s_{34}}^4-288  { \epsilon}  {s_{12}}^2  {s_{15}}  {s_{23}}^3  {s_{34}}^4-72  {s_{12}}^2  {s_{15}}  {s_{23}}^3  {s_{34}}^4-384  { \epsilon}  {s_{12}}^3  {s_{15}}  {s_{23}}^2  {s_{34}}^4-96  {s_{12}}^3  {s_{15}}  {s_{23}}^2  {s_{34}}^4+96  { \epsilon}  {s_{12}}^3  {s_{15}}^2  {s_{23}}  {s_{34}}^4+24  {s_{12}}^3  {s_{15}}^2  {s_{23}}  {s_{34}}^4-288  { \epsilon}  {s_{12}}^3  {s_{23}}^4  {s_{34}}^3-72  {s_{12}}^3  {s_{23}}^4  {s_{34}}^3-192  { \epsilon}  {s_{12}}^2  {s_{15}}  {s_{23}}^4  {s_{34}}^3-48  {s_{12}}^2  {s_{15}}  {s_{23}}^4  {s_{34}}^3-288  { \epsilon}  {s_{12}}^4  {s_{23}}^3  {s_{34}}^3-72  {s_{12}}^4  {s_{23}}^3  {s_{34}}^3+288  { \epsilon}  {s_{12}}^2  {s_{15}}^2  {s_{23}}^3  {s_{34}}^3+72  {s_{12}}^2  {s_{15}}^2  {s_{23}}^3  {s_{34}}^3+288  { \epsilon}  {s_{12}}^3  {s_{15}}  {s_{23}}^3  {s_{34}}^3+72  {s_{12}}^3  {s_{15}}  {s_{23}}^3  {s_{34}}^3-96  { \epsilon}  {s_{12}}^2  {s_{15}}^3  {s_{23}}^2  {s_{34}}^3-24  {s_{12}}^2  {s_{15}}^3  {s_{23}}^2  {s_{34}}^3+96  { \epsilon}  {s_{12}}^3  {s_{15}}^2  {s_{23}}^2  {s_{34}}^3+24  {s_{12}}^3  {s_{15}}^2  {s_{23}}^2  {s_{34}}^3+480  { \epsilon}  {s_{12}}^4  {s_{15}}  {s_{23}}^2  {s_{34}}^3+120  {s_{12}}^4  {s_{15}}  {s_{23}}^2  {s_{34}}^3-96  { \epsilon}  {s_{12}}^3  {s_{15}}^3  {s_{23}}  {s_{34}}^3-24  {s_{12}}^3  {s_{15}}^3  {s_{23}}  {s_{34}}^3-192  { \epsilon}  {s_{12}}^4  {s_{15}}^2  {s_{23}}  {s_{34}}^3-48  {s_{12}}^4  {s_{15}}^2  {s_{23}}  {s_{34}}^3+96  { \epsilon}  {s_{12}}^4  {s_{23}}^4  {s_{34}}^2+24  {s_{12}}^4  {s_{23}}^4  {s_{34}}^2+96  { \epsilon}  {s_{12}}^3  {s_{15}}  {s_{23}}^4  {s_{34}}^2+24  {s_{12}}^3  {s_{15}}  {s_{23}}^4  {s_{34}}^2+96  { \epsilon}  {s_{12}}^5  {s_{23}}^3  {s_{34}}^2+24  {s_{12}}^5  {s_{23}}^3  {s_{34}}^2-192  { \epsilon}  {s_{12}}^3  {s_{15}}^2  {s_{23}}^3  {s_{34}}^2-48  {s_{12}}^3  {s_{15}}^2  {s_{23}}^3  {s_{34}}^2-96  { \epsilon}  {s_{12}}^4  {s_{15}}  {s_{23}}^3  {s_{34}}^2-24  {s_{12}}^4  {s_{15}}  {s_{23}}^3  {s_{34}}^2+96  { \epsilon}  {s_{12}}^3  {s_{15}}^3  {s_{23}}^2  {s_{34}}^2+24  {s_{12}}^3  {s_{15}}^3  {s_{23}}^2  {s_{34}}^2-96  { \epsilon}  {s_{12}}^4  {s_{15}}^2  {s_{23}}^2  {s_{34}}^2-24  {s_{12}}^4  {s_{15}}^2  {s_{23}}^2  {s_{34}}^2-192  { \epsilon}  {s_{12}}^5  {s_{15}}  {s_{23}}^2  {s_{34}}^2-48  {s_{12}}^5  {s_{15}}  {s_{23}}^2  {s_{34}}^2+96  { \epsilon}  {s_{12}}^4  {s_{15}}^3  {s_{23}}  {s_{34}}^2+24  {s_{12}}^4  {s_{15}}^3  {s_{23}}  {s_{34}}^2+96  { \epsilon}  {s_{12}}^5  {s_{15}}^2  {s_{23}}  {s_{34}}^2+24  {s_{12}}^5  {s_{15}}^2  {s_{23}}  {s_{34}}^2\big)/\;\big(8 (4  { \epsilon}+1)  {s_{12}}  {s_{23}}  {s_{45}}^2 ( {s_{12}}- {s_{45}}) ( {s_{34}}+ {s_{45}})( {s_{12}}+ {s_{15}}- {s_{34}})( {s_{12}}+ {s_{23}}- {s_{45}})\\ ( {s_{12}}- {s_{34}}- {s_{45}}) (- {s_{15}}+ {s_{23}}+ {s_{34}})(- {s_{15}}+ {s_{23}}- {s_{45}})\big)
\end{dmath}
After running our algorithm, $c_{107}$ is reduced to about $9$ lines long:
\begin{align}
\begin{aligned}
c_{107}&=\frac{3  {s_{23}}  {s_{34}}}{2 (4  { \epsilon}+1)  {s_{12}}  {s_{45}} (- {s_{15}}+ {s_{23}}+ {s_{34}})}-\frac{3  {s_{34}}}{2 (4  { \epsilon}+1)  {s_{12}}  {s_{45}}}+\frac{15  {s_{15}}^2-15  {s_{15}}  {s_{34}}}{ 8{s_{23}}  {s_{45}} (- {s_{12}}- {s_{15}}+ {s_{34}})}\\
&+\frac{ {s_{23}}  {s_{34}}^2}{ {s_{45}}^2 ( {s_{45}}- {s_{12}}) (- {s_{15}}+ {s_{23}}+ {s_{34}})}-\frac{2  {s_{23}}  {s_{34}}^2}{ {s_{12}}  {s_{45}}^2 (- {s_{15}}+ {s_{23}}+ {s_{34}})}+\frac{ {s_{23}}  {s_{34}}+ {s_{34}}^2}{ {s_{45}} ( {s_{45}}- {s_{12}}) (- {s_{15}}+ {s_{23}}+ {s_{34}})}\\
&-\frac{11  {s_{23}}  {s_{34}}}{2  {s_{12}}  {s_{45}} (- {s_{15}}+ {s_{23}}+ {s_{34}})}-\frac{15  {s_{15}}  {s_{34}}}{8  {s_{23}}  {s_{45}} (- {s_{12}}+ {s_{34}}+ {s_{45}})}+\frac{ {s_{15}}- {s_{23}}- {s_{34}}}{ {s_{45}} (- {s_{12}}- {s_{23}}+ {s_{45}})}+\frac{2  {s_{15}}-2  {s_{34}}}{ {s_{12}}  {s_{23}}}\\
&+\frac{15  {s_{15}}-15  {s_{34}}}{ 8{s_{23}} (- {s_{12}}- {s_{15}}+ {s_{34}})}-\frac{7  {s_{23}}}{2  {s_{12}} (- {s_{15}}+ {s_{23}}+ {s_{34}})}-\frac{15  {s_{23}}}{4 (- {s_{12}}- {s_{15}}+ {s_{34}}) (- {s_{15}}+ {s_{23}}+ {s_{34}})}\\
&-\frac{ {s_{15}}}{2  {s_{12}} (- {s_{12}}- {s_{23}}+ {s_{45}})}+\frac{ {s_{23}}-{s_{45}}}{ 2{s_{12}} ( {s_{15}}- {s_{23}}+ {s_{45}})}+\frac{15  {s_{15}}}{8  {s_{45}} (- {s_{12}}- {s_{15}}+ {s_{34}})}\\
&+\frac{15}{4 (- {s_{12}}- {s_{15}}+ {s_{34}})}+\frac{7  {s_{34}}}{4  {s_{23}} (- {s_{12}}- {s_{23}}+ {s_{45}})}-\frac{5  {s_{34}}}{4 ( {s_{45}}- {s_{12}}) (- {s_{12}}- {s_{23}}+ {s_{45}})}\\
&-\frac{15  {s_{34}}}{8  {s_{23}} (- {s_{12}}+ {s_{34}}+ {s_{45}})}+\frac{1}{2 (- {s_{12}}- {s_{23}}+ {s_{45}})}+\frac{4  {s_{34}}}{ {s_{12}}  {s_{45}}}-\frac{11  {s_{34}}}{4  {s_{45}} ( {s_{45}}- {s_{12}})}\\
&-\frac{15}{8 (- {s_{12}}+ {s_{34}}+ {s_{45}})}+\frac{5}{4 ( {s_{45}}- {s_{12}})}+\frac{4}{ {s_{12}}}-\frac{3  {s_{34}}^2}{ {s_{45}}^2 (- {s_{15}}+ {s_{23}}+ {s_{34}})}\\
&+\frac{ {s_{34}}}{4  {s_{45}} (- {s_{15}}+ {s_{23}}+ {s_{34}})}+\frac{ {s_{45}}}{2 ( {s_{34}}+ {s_{45}}) ( {s_{15}}- {s_{23}}+ {s_{45}})}+\frac{3  {s_{34}}}{ {s_{45}}^2}-\frac{1}{2 ( {s_{34}}+ {s_{45}})}-\frac{1}{4  {s_{45}}}
\end{aligned}
\end{align}
For this particular coefficient, we achieve a byte size reduction of a factor from 249464 bytes to 15192 bytes, which is 6\% of the original size. Besides the reduction of the byte-size, we see that the highest total degree of the numerator of $c_{107}$ also significantly decreased from 11 to 3.

% The bibliography will probably be heavily edited during typesetting.
% We'll parse it and, using the arxiv number or the journal data, will
% query inspire, trying to verify the data (this will probalby spot
% eventual typos) and retrive the document DOI and eventual errata.
% We however suggest to always provide author, title and journal data:
% in short all the informations that clearly identify a document.

\bibliographystyle{JHEP}
\bibliography{bibtex}

\end{document}